\setlist[description]{font=\mdseries}
\setlist[enumerate,1]{label=\textup{\arabic*.},ref=\textup{\arabic*}}
\setlist[enumerate,2]{label=\textup{(\alph*)},ref=\textup{(\alph*)}}
\setlist[enumerate,3]{label=\textup{(\roman*)},ref=\textup{(\roman*)}}
\newtheorem{theorem}{Theorem}
\newtheorem{lemma}{Lemma}
\newtheorem{corollary}{Corollary}
\newtheorem*{conjecture*}{Conjecture}
\newtheorem{remark}{Remark}
\theoremstyle{definition}
\newtheorem*{definition*}{Definition}
\newtheorem{definition}{Definition}
\newcommand*{\comp}{\mathcal{C}}
\DeclareMathOperator{\poly}{poly}
\DeclareMathOperator{\sat}{sat}
\newcommand*{\indicator}[1]{\mathbf{1}\bm{\{}#1\bm{\}}}
\newcommand*{\CC}{\mathbb{C}}
\newcommand*{\hCC}{\hat{\mathbb{C}}}
\newcommand*{\NN}{\mathbb{N}}
\newcommand*{\RR}{\mathbb{R}}
\newcommand*{\TT}[1][]{\mathbb{T}_{#1}}
\newcommand*{\ZZ}{\mathbb{Z}}
\newcommand*{\cN}{\mathcal{N}}
\newcommand*{\cU}{\mathcal{U}}
\newcommand{\thx}{\theta_x}
\newcommand{\thy}{\theta_y}
\newcommand*{\csproblem}[1]{\textsc{#1}}
\newcommand*{\eps}{\varepsilon}
\newcommand*{\UG}[1][]{UG$(#1)$}
\newcommand*{\CUG}[1][]{CUG$(#1)$}
\newcommand*{\email}[1]{\href{mailto:#1}{\texttt{#1}}}
\newenvironment{itemeq}{\hspace*{\fill}$\displaystyle}{$\hspace*{\fill}}
\author{Matthew Coulson\thanks{\email{Matthew.John.Coulson@upc.edu}}}
\affil{Department of Mathematics, Universitat Polit\`ecnica de Catalunya}
\author{Ewan Davies\thanks{\email{Ewan.Davies@colorado.edu}}}
\author{Alexandra Kolla\thanks{\email{Alexandra.Kolla@colorado.edu}}}
\affil{Department of Computer Science, University of Colorado Boulder}
\author{Viresh Patel\thanks{\email{V.S.Patel@uva.nl}}}
\author{Guus Regts\thanks{\email{G.Regts@uva.nl}}}
\affil{Korteweg-de Vries Institute for Mathematics, University of Amsterdam}
\title{Statistical physics approaches to Unique Games}
\begin{document}
\pagenumbering{roman}
\maketitle
\thispagestyle{empty}
\begin{abstract}
We show how two techniques from statistical physics can be adapted to solve a variant of the notorious Unique Games problem, potentially opening new avenues towards the Unique Games Conjecture. The variant, which we call Count Unique Games, is a promise problem in which the ``yes'' case guarantees a certain number of highly satisfiable assignments to the Unique Games instance. In the standard Unique Games problem, the ``yes'' case only guarantees at least one such assignment. We exhibit efficient algorithms for Count Unique Games based on approximating a suitable partition function for the Unique Games instance via (i) a zero-free region and polynomial interpolation, and (ii) the cluster expansion. 
We also show that a modest improvement to the parameters for which we give results with would refute the Unique Games Conjecture.
\end{abstract}

\clearpage
\pagenumbering{arabic}
\setcounter{page}{1}

\section{Introduction}

Over the last two decades, the Unique Games Problem has emerged as an obstacle to the approximability of many combinatorial optimization problems. 
More precisely, the Unique Games Conjecture (UGC) states that there is no polynomial-time algorithm to solve the Unique Games Problem within a certain performance guarantee. 
If the UGC is true, the current best-known approximation algorithms for many problems such as \csproblem{Min-2Sat-Deletion}~\cite{Khot02a}, \csproblem{Vertex Cover}~\cite{KR03}, \csproblem{Max-Cut}~\cite{KKMO04} and \csproblem{Non-Uniform Sparsest Cut}~\cite{CKKRS05,KV05} are in fact optimal. 
On the other hand, falsification of the conjecture is likely to provide powerful new algorithmic techniques that apply to many important computational problems.
The fact that either resolution of the conjecture could be an important advance in the understanding of approximation algorithms and complexity theory is one reason why the UGC has played a key role in recent theoretical computer science research.

Our main contribution is a pair of algorithms, each deeply rooted in ideas from statistical physics, that solve a natural variant of the Unique Games Problem. 
We give some important definitions before the statement of these results.

\begin{definition}
In a Unique Games problem we are given a constraint graph $G = (V,E)$, a set of colors $[k] = \{1,\dotsc,k\}$ which is referred to as the alphabet, a set of variables $\{x_u\}_{u\in V}$, one for each vertex $u$, and a set of permutations (also known as constraints) $\pi_{uv} : [k]\to [k]$, one for each edge $uv\in E$. 
We study assignments giving a color from $[k]$ to each variable $x_u$, and are interested in the number of \emph{satisfied edges} (or satisfied constraints) of the form\footnote{Formally, let $G$ be an oriented graph so each edge has a direction. Then the edge (or constraint) $u\to v$ is satisfied when $\pi_{uv}(x_u)=x_v$, or equivalently $\pi_{uv}^{-1}(x_v)=x_u$. Without orientations it is unspecified whether to use $\pi_{uv}$ or $\pi_{uv}^{-1}$ here. We suppress this complication as in other parts of the paper it is more natural to consider undirected graphs.} $\pi_{uv}(x_u)=x_v$. The \emph{value} of the assignment is the fraction of satisfied constraints. 
The \emph{value} of the Unique Games instance is the maximum fraction of constraints that can be simultaneously satisfied.

We denote by \UG[k,\eps,\delta] the promise problem consisting of a Unique Games instance with alphabet size $k$ and the promise that the instance either has value at least $1-\eps$, or has value at most $\delta$. 
To solve the problem is to correctly determine which of the two cases hold. 
\end{definition}

When the parameters are unimportant or clear from context they are omitted, and we will always be interested in $\eps,\delta\ge 0$ with $1-\eps > \delta$, otherwise the problem is ill-posed or impossible to solve. 
The Unique Games Conjecture of Khot~\cite{Khot02a} can now be stated as follows.

\begin{conjecture*}[UGC]
For any constants $\eps,\delta > 0$ with $1-\eps>\delta$, there is a positive $k(\eps,\delta)$ such that for any alphabet size $k > k(\eps, \delta)$, the problem \UG[k,\eps,\delta] is NP-hard.
\end{conjecture*}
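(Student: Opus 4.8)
The statement to be established is the Unique Games Conjecture itself, one of the central open problems in theoretical computer science; no proof is known, so what follows is the template that any such proof would plausibly follow together with the point at which every known attempt stalls. The plan is to exhibit, for every fixed $\eps,\delta>0$ with $1-\eps>\delta$, a polynomial-time reduction from a canonical NP-hard problem to \UG[k,\eps,\delta] for a suitably large alphabet size $k=k(\eps,\delta)$. The natural starting point is a strong inapproximability result for \csproblem{Label Cover} — ideally a smooth or low-degree variant — for which it is NP-hard to distinguish instances of value $1$ from instances of value $o(1)$. First I would take such an instance on a bipartite vertex set $U\cup W$ and replace each vertex $u$ by a block of Boolean variables indexed by (a suitable subset of) the long code of its label set, i.e.\ by the coordinates of a function $f_u$ on a hypercube. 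The constraint graph $G$ of the Unique Games instance is then built from a \emph{dictatorship test}: a distribution over pairs of long-code coordinates, pulled back through the \csproblem{Label Cover} projection maps, whose acceptance predicate is \emph{required to be a permutation of} $[k]$ — this permutation structure is precisely what makes the output a genuine Unique Games instance. For completeness, if the \csproblem{Label Cover} instance has value $1$ then setting each $f_u$ to a dictator encoding a satisfying label yields an assignment to $G$ of value at least $1-\eps$, where $\eps$ is governed by the noise built into the test. For soundness, a Fourier-analytic argument (an invariance principle in the modern formulation) shows that any assignment to $G$ of value exceeding $\delta$ must, on a noticeable fraction of blocks, concentrate its Fourier mass on a bounded-size set of coordinates; a randomized decoding then recovers labels satisfying an $\Omega(1)$ fraction of the \csproblem{Label Cover} constraints, contradicting its soundness.

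The hard part — and the reason the UGC is still a conjecture — is reconciling \emph{near-perfect completeness} with the \emph{unique} (permutation) structure of the constraints. Dictatorship tests robust enough to yield strong soundness naturally have acceptance predicates that honest dictators fail with small but nonzero probability, and forcing the predicate to be a bijection pushes one into the regime of $2$-to-$2$ or $2$-to-$1$ games, where the best known results (the $2$-to-$2$ Games Theorem and its consequences) achieve completeness bounded away from $1$ — around $1/2$ — rather than $1-\eps$. Closing this gap would require either a fundamentally new gadget that is exactly a permutation at completeness $1-o(1)$, or a hardness-amplification step boosting completeness without destroying uniqueness; no such mechanism is known, and several natural candidates are provably obstructed by the algorithms — SDP hierarchies, subexponential-time algorithms — that already solve Unique Games in the high-completeness regime.

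A complementary route, and the one this paper's results actually bear on, is to attack the conjecture \emph{negatively}. In \csproblem{Count Unique Games} the ``yes'' case guarantees many near-satisfying assignments, and the paper's statistical-physics algorithms solve that promise problem efficiently over a range of parameters; a sufficiently modest strengthening — pushing the guaranteed number of good assignments down toward what a plain ``yes'' instance of the Unique Games problem must contain — would place standard Unique Games in $\mathsf{P}$ and thereby \emph{refute} the UGC. So in practice the ``proof strategy'' forks into two opposing programs, and the present work is evidence that the refutation side is not obviously hopeless.
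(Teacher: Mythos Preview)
The statement is the Unique Games \emph{Conjecture}; the paper states it as such and gives no proof, because none is known. Your proposal correctly recognizes this and gives an accurate sketch of the standard PCP/long-code template and the exact obstacle (permutation constraints versus near-perfect completeness) that keeps it open, together with a fair summary of how the paper's algorithmic results sit on the refutation side. There is nothing to compare against and nothing to correct.
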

\noindent
We note that in~\cite{DKKMS18a,DKKMS18} it was shown that \UG[1/2-\delta,\delta] is NP-hard for any $\delta>0$.

Upon translating a Unique Games problem into a form amenable to methods from statistical physics, which we elaborate upon later, a natural variant of the problem arises.

\begin{definition}
Count Unique Games, or \CUG[f,k,\eps,\delta], is the promise problem consisting of a Unique Games instance with alphabet size $k$ and the promise that the instance either has at least $(fk)^{|V|}$ colorings with value at least $1-\eps$, or that every coloring has value at most $\delta$. 
To solve the problem is to correctly determine which of the two cases hold, and when the parameters are clear from context they are omitted. 
\end{definition}

Note that with $f=k^{-1}$, \CUG[f,k,\eps,\delta] corresponds exactly to \UG[k,\eps,\delta]; and with $f=1$ the problem is easily solvable as the guarantee covers all colorings in both cases. To solve \CUG[1,k,\eps,\delta] simply requires checking the value of an arbitrary coloring.
So via the parameter $f$ CUG can be smoothly reduced in difficulty from equal to UG to trivial. We can now state our main results.

\begin{theorem}\label{thm:main,interp}
For $\eps,\delta>0$ with $1-\eps > \delta$ and any fixed integer $k\ge 3$, there exists $\Delta_0(\eps,\delta,k)$ such that for all $\Delta\ge\Delta_0$ the following holds. 

Let $G$ be an $n$-vertex, $\Delta$-regular \CUG[f,k,\eps,\delta] instance with $f \ge k^{\eps+\delta-\frac{1}{2}}$.  
Then there is a deterministic algorithm that solves CUG for $G$ in time
\[
n\exp\left(e^{O(\log^2 k)} \log^2\Delta\right)\,.
\]
\end{theorem}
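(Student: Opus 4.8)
The plan is to reduce CUG to the approximate evaluation of a partition function. Attach to the instance $G$ the polynomial
\[
  Z_G(\lambda)=\sum_{\sigma\in[k]^V}\lambda^{\,s_G(\sigma)},\qquad
  s_G(\sigma):=\bigl|\{uv\in E:\pi_{uv}(\sigma_u)=\sigma_v\}\bigr|\,,
\]
of degree $|E|=\Delta n/2$, with $Z_G(1)=k^n$ known exactly, and fix an evaluation point $\lambda^*=1+t/\Delta>1$ with $t>0$ to be chosen. In the ``yes'' case each of the at least $(fk)^n$ colorings of value $\ge 1-\eps$ has $s_G(\sigma)\ge(1-\eps)|E|$, so $Z_G(\lambda^*)\ge(fk)^n(\lambda^*)^{(1-\eps)\Delta n/2}$; in the ``no'' case every coloring has $s_G(\sigma)\le\delta|E|$, so $Z_G(\lambda^*)\le k^n(\lambda^*)^{\delta\Delta n/2}$. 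Thus the ``yes'' value of $Z_G(\lambda^*)$ exceeds the ``no'' value by a factor at least $\bigl(f(1+t/\Delta)^{(1-\eps-\delta)\Delta/2}\bigr)^n$, which — using $(1+t/\Delta)^{\Delta/2}\to e^{t/2}$ as $\Delta\to\infty$ — is at least $C^n$ for a fixed $C>1$ once $t\ge\tfrac{2}{1-\eps-\delta}\log(C/f)$. Here the hypothesis $f\ge k^{\eps+\delta-1/2}$ enters: it gives $\log(1/f)\le(\tfrac12-\eps-\delta)\log k$, so a value $t=\Theta_{\eps,\delta}(\log k)$ is admissible, and then any multiplicative approximation of $Z_G(\lambda^*)$ to within a factor $e^{(n\log C)/3}$ decides CUG, by testing it against the geometric mean of the two bounds.

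The technical core is to prove that $Z_G$ has no zeros in a simply connected complex region containing the segment $[1,\lambda^*]$, uniformly over all $\Delta$-regular Unique Games instances with alphabet $[k]$. Positivity handles the real segment itself, so one must exclude nearby complex zeros; I would approach this via the high-temperature expansion
\[
  Z_G(\lambda)=\sum_{S\subseteq E}(\lambda-1)^{|S|}M(S)\,,\qquad
  M(S)=\#\{\sigma\in[k]^V:\text{every edge of }S\text{ is satisfied}\}\,,
\]
in which $M(S)$ factorizes over the connected components of $(V(S),S)$, together with a cluster-expansion (Koteck\'y--Preiss-style) bound or a Barvinok-type recursion for ratios. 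The naive version of such an argument only gives zero-freeness for $|\lambda-1|=O(1/\Delta)$, whereas the separation step needs the region to reach a distance $\Theta(\log k/\Delta)$ from $1$; closing this gap while retaining control uniform in $G$ is where I expect the real difficulty to lie, and where the permutation structure of the constraints and the hypothesis $k\ge 3$ must be exploited. The outcome to aim for is zero-freeness on a region that follows the segment out to distance $\asymp\log k/\Delta$ but whose transverse width is only $\asymp\Delta^{-1}e^{-O(\log^2 k)}$; the resulting aspect ratio $e^{O(\log^2 k)}$ is what simultaneously forces the restriction on $f$ and drives the running time.

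Granting this zero-free region, the algorithm is Barvinok's polynomial-interpolation method for $\log Z_G$ along the segment from $1$ to $\lambda^*$: truncating the Taylor series of $\log Z_G$ about $\lambda=1$ at degree $m=e^{O(\log^2 k)}\log\Delta$ (the order dictated by the geometry of the region) approximates $\log Z_G(\lambda^*)$ within additive error $o(n)$, in particular within $(n\log C)/3$. Each coefficient of this truncation is determined by the subgraphs of $G$ with at most $m$ edges; there are at most $n(e\Delta)^m$ connected such subgraphs, and for each connected subgraph $T$ the relevant count $\#\{\tau\in[k]^{V(T)}:\text{all edges of }T\text{ satisfied}\}$ is obtained in $\poly(m,k)$ time by fixing the colour of a root, propagating the permutations along a spanning tree to colour the rest of $V(T)$, and intersecting the fixed-point sets of the composed permutations attached to the remaining edges. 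Hence the coefficients, and an estimate of $Z_G(\lambda^*)$, are computed in time $n\cdot\Delta^{O(m)}=n\exp\!\bigl(e^{O(\log^2 k)}\log^2\Delta\bigr)$, after which one outputs ``yes'' or ``no'' according to which side of the threshold the estimate falls; it only remains to choose $\Delta_0(\eps,\delta,k)$ large enough that both the asymptotic $(1+t/\Delta)^{\Delta/2}\approx e^{t/2}$ and the containment $[1,\lambda^*]\subseteq(\text{zero-free region})$ hold with room to spare.
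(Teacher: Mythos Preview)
Your plan is essentially the paper's: the same partition function, the same separation argument (Lemma~\ref{lem:solveCUG}), Barvinok interpolation along $[1,\lambda^*]$ with $\lambda^*=1+\Theta(\log k)/\Delta$, and the Patel--Regts connected-subgraph enumeration for the low-order coefficients. The paper establishes the zero-free region via your second suggestion, an inductive ratio recursion on restricted partition functions in the style of~\cite{BDPR18}, not via a cluster expansion.

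One quantitative point is off. The transverse width of the zero-free strip is $\eta=\omega\bigl(1/(\Delta\log k)\bigr)$, not $\Delta^{-1}e^{-O(\log^2 k)}$ as you write; after rescaling $[1,\lambda^*]$ to $[0,1]$ the width becomes $\omega(1/\log^2 k)$. The factor $e^{O(\log^2 k)}$ in the truncation degree $m$ does not come from the aspect ratio directly but from Barvinok's disc-to-strip polynomial (Lemma~\ref{lem:disctostrip}), which has degree $\approx e^{1/\rho}$ for a strip of normalized width $\rho$; with your claimed width the interpolation degree would be doubly exponential in $\log k$ and the running time would not match. Relatedly, the constraint on $f$ is dictated by the \emph{length} of the zero-free segment (how large $\lambda^*$ can be taken), not by its aspect ratio: it is precisely $\log\lambda^*\sim(\log k)/\Delta$ rather than $\sim 2(\log k)/\Delta$ that forces $f\ge k^{\eps+\delta-1/2}$ instead of permitting $f=k^{-1}$.
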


\begin{theorem}\label{thm:main,cluster}
For $\eps,\delta>0$ with $1-\eps > \delta$ and any $k$ satisfying $\log k\ge \Delta^{3/2}\log\Delta$, there exists $\Delta_0(\eps,\delta)$ such that for all $\Delta\ge\Delta_0$ the following holds. 

Let $G$ be an $n$-vertex, $\Delta$-regular \CUG[f,k,\eps,\delta] instance with $f\ge k^{2\eps+2\delta-1}$.
Then there is a deterministic algorithm that solves CUG for $G$ in time $kn^{O(1)}e^{O(\Delta)}$.
\end{theorem}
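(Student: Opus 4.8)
The plan is to encode the promise in the value of a partition function $Z_G(\lambda)$ and to compute an adequate approximation of $\log Z_G(\lambda)$ by truncating a convergent cluster expansion.

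For a parameter $\lambda\in(0,1)$, put $Z_G(\lambda)=\sum_{\sigma\colon V\to[k]}\lambda^{u(\sigma)}$, where $u(\sigma)$ is the number of edges left unsatisfied by $\sigma$. Writing $\lambda^{u(\sigma)}=\lambda^{m}\prod_{uv\in E}\bigl(1+(\lambda^{-1}-1)\mathbf 1[\sigma\text{ satisfies }uv]\bigr)$ with $m=\Delta n/2$, and expanding the product over edges, one rewrites $Z_G(\lambda)=\lambda^{m}k^{n}\,\Xi_G(\lambda)$ as a prefactor times the partition function $\Xi_G$ of a polymer model: a polymer is a connected subgraph $\gamma$ of $G$ with at least one edge, two polymers are compatible precisely when they are vertex-disjoint, and the weight of $\gamma$ is $(\lambda^{-1}-1)^{|E(\gamma)|}$ times the number of assignments on $\gamma$ satisfying all its edges, divided by $k^{|V(\gamma)|}$; in particular $|w_\gamma|\le(\lambda^{-1}-1)^{|E(\gamma)|}k^{\,1-|V(\gamma)|}$. (Equivalently one may take polymers to be connected subgraphs of the label-extended graph on $V\times[k]$.)

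Two one-line estimates bracket $Z_G(\lambda)$: in the ``yes'' case the $\ge(fk)^n$ assignments of value $\ge1-\eps$ each contribute at least $\lambda^{\eps m}$, so $Z_G(\lambda)\ge(fk)^n\lambda^{\eps m}$; in the ``no'' case every assignment contributes at most $\lambda^{(1-\delta)m}$, so $Z_G(\lambda)\le k^n\lambda^{(1-\delta)m}$. Hence $\log Z_G(\lambda)$ lies in one of two intervals separated by $n\bigl(\log f+(1-\eps-\delta)\tfrac\Delta2\log(1/\lambda)\bigr)$, which by the hypothesis $f\ge k^{2\eps+2\delta-1}$ is $\Omega(n\log k)$ once $\log(1/\lambda)$ is chosen of order $\Delta^{-1}\log k$ (and strictly below the convergence threshold discussed next). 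With $\lambda$ so fixed, the cluster expansion converges (the Koteck\'y--Preiss condition, verified below, holds at this $\lambda$), and the algorithm computes $\log Z_G(\lambda)$ by truncating $\log\Xi_G(\lambda)=\sum_\Gamma\phi(\Gamma)\prod_{\gamma\in\Gamma}w_\gamma$ at clusters of bounded size and enumerating the relevant clusters and their weights — evaluating one weight costs a factor $k$ to count the $\le k$ locally satisfying assignments on a small subgraph, and the local enumeration costs $e^{O(\Delta)}$, giving total time $kn^{O(1)}e^{O(\Delta)}$. Since the two intervals are $\Omega(n\log k)$ apart while the truncation error decays geometrically in the truncation depth, comparing the computed value of $\log Z_G(\lambda)$ to the midpoint decides the instance.

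The heart of the argument, and what I expect to be the main obstacle, is establishing convergence of the cluster expansion at the value of $\lambda$ dictated by the separation argument, namely $\lambda$ of size $k^{-\Theta(1/\Delta)}$: then $(\lambda^{-1}-1)^{\Delta/2}$ is comparable to a power of $k$ and the factor $(\lambda^{-1}-1)^{|E(\gamma)|}$ no longer helps to make weights small, so the dangerous contributions come from \emph{dense} polymers — connected subgraphs on $v$ vertices carrying up to $\sim v\Delta/2$ edges, of which there are up to $2^{v\Delta/2}(e\Delta)^v$ through a fixed vertex. Making the per-vertex sum $\sum_{\gamma\ni u}|w_\gamma|e^{a(\gamma)}$ converge against this entropy requires $\lambda^{-\Delta/2}$ to be smaller than $k$ by a wide margin; running this estimate carefully — tracking the $e^{O(\Delta)}$-type count of dense subgraphs, the losses from the $e^{a(\gamma)}$ factors, and the lower-order terms in $\log(1/\lambda)$ — is precisely what forces the hypothesis $\log k\ge\Delta^{3/2}\log\Delta$, while $\Delta$ being large in terms of $\eps,\delta$ is what guarantees that the window in which $\lambda$ is simultaneously small enough to split the ``yes''/``no'' intervals and large enough for convergence is non-empty. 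The fact that a single $\lambda$ must satisfy both requirements is also what pins the result to $f\ge k^{2\eps+2\delta-1}$ rather than the exponent $-1$ of ordinary Unique Games.
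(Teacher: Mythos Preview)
Your proposal is correct and follows essentially the same route as the paper: the reparametrization $\lambda=1/w$ is cosmetic, your polymer model and weights coincide with the paper's random-cluster reformulation (Lemma~\ref{lem:random cluster}) and the weights $w_\gamma=(w-1)^{\|\gamma\|}k^{-|\gamma|}\sat_\pi(\gamma)$, your separation argument is exactly Lemma~\ref{lem:solveCUG}, and your identification of dense polymers as the bottleneck forcing $\log k\ge\Delta^{3/2}\log\Delta$ matches the case analysis in Lemma~\ref{lem:clusterconvergence}. The only part you leave implicit is the explicit three-case verification of the Koteck\'y--Preiss bound (the paper splits on $\|\gamma\|>2\Delta/\zeta$, $\Delta<\|\gamma\|\le 2\Delta/\zeta$, and $\|\gamma\|\le\Delta$), but you have correctly located where the work lies.
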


To illustrate how close Theorem~\ref{thm:main,cluster} gets to an algorithm for usual Unique Games, consider an $n$-vertex, $\Delta$-regular instance of UG and suppose there exists an assignment of value $1-\eps$. 
Let $S$ be an arbitrary set of $\eps n$ vertices, and consider all assignments obtained by relabelling vertices in $S$. 
There are at most $\eps\Delta n$ constraints that could be violated by modifying the labels of vertices in $S$, hence each such assignment has value at least $1-3\eps$.
Thus there are at least $k^{\eps n}$ assignments of value $1-3\eps$, which corresponds to having parameter $f=k^{\eps-1}$ as a CUG instance. 
In summary, if we were permitted to take $f=k^{\eps/3-1}$ in Theorem~\ref{thm:main,cluster}, which is only slightly smaller than the stated $f\ge k^{2\eps+2\delta-1}$, then we would be able to refute the Unique Games Conjecture.

The idea at the heart of both Theorem~\ref{thm:main,interp} and Theorem~\ref{thm:main,cluster} is to encode a CUG problem as the problem of approximating the value of a \emph{partition function} $Z(G;w)$ which depends on the instance $G$, and is a polynomial in $w$. 
The partition function is intimately connected to statistical physics, and we use two techniques recently developed for approximating partition functions to prove our two main results. 
Theorem~\ref{thm:main,interp} is proved via \emph{polynomial interpolation}, a method due to Barvinok (see~\cite{Barvinok16} and references therein) and furthered by Patel and Regts~\cite{PR17} who improved the running time in many examples.
 Theorem~\ref{thm:main,cluster} is proved via the \emph{cluster expansion} and methods given in~\cite{BCHPT19}. 
These techniques have recently been developed and applied to the problem of approximating the partition function of the Potts model~\cite{Barvinok16,BDPR18,BCHPT19,LSS19}, random cluster model~\cite{BCHPT19}, and other models from statistical physics~\cite{Barvinok16,HPR19,PR17}.
The main conceptual advances in this paper are demonstrations that these techniques may be adapted to UG instances, cleanly handling the constraints assigned to edges that are not present in the standard Potts model. 
The main technical advance in this paper is a zero-free region for (a generalization of) the ferromagnetic Potts model partition function, that may be of independent interest. See Theorem~\ref{thm:zero-free} below.

\subsection{Paper organization}

In the following subsection we summarise related work on the UGC.
In Section~\ref{sec:solveCUG} we define our partition function and relate it to solving UG problems. 
This involves stating our algorithmic results for approximating the partition function, and proving that our main results follow from these algorithms. 
In Sections~\ref{sec:interp} and~\ref{sec:cluster} we discuss approximation algorithms for our partition function with the polynomial interpolation and cluster expansion methods respectively. These sections are entirely independent from each other.
Finally, we discuss an important open problem arising from our work and identify plausible barriers to improving our methods in Section~\ref{sec:conc}.

\subsection{Related work}

An intimate connection between the UGC and semidefinite programming (SDP) can be traced back to a seminal paper by Goemans and Williamson~\cite{GW94} on the \csproblem{Max-Cut} problem. An instance of \csproblem{Max-Cut} can be seen as a system of linear equations over $\ZZ_2$, and thus it is a Unique Games instance with alphabet size two. 
Goemans and Williamson gave an SDP-based algorithm for \csproblem{Max-Cut} which, on inputs where the maximal cut is of size $1-\eps$, produces a cut that satisfies at least a fraction $1-(2/\pi)\sqrt\eps$ of the constraints. 
A matching integrality gap was found by~\cite{Karloff96} and~\cite{FS02}, and in~\cite{KKMO04} it was proven that if the UGC is correct, then the Goemans--Williamson algorithm has the best approximation ratio that a polynomial-time algorithm for \csproblem{Max-Cut} can achieve.
Raghavendra~\cite{R08} proved that for every constraint satisfaction problem there is a polynomial time, semidefinite programming-based algorithm which, if the UGC is true, achieves the best possible approximation ratio for the problem.
These results cement the central role of the UGC in the theory of approximation algorithms.

There are also spectral algorithms that give good polynomial-time or quasi-polynomial-time approximations algorithms for large classes of Unique Games instances. 
These include expanders~\cite{AKKSTV08,MM10}, local expanders~\cite{AIMS10,RS10}, and more generally, graphs with few large eigenvalues~\cite{K10}. 
In~\cite{ABS10}, the authors gave a general sub-exponential algorithm for Unique Games based on spectral techniques.

In contrast to previous approaches to refuting the UGC, our methods use techniques from statistical physics and naturally lead to the consideration of CUG. The strengthened promise of CUG connects to an active area of research for other computational problems such as \csproblem{Sat}. 
With no assumptions finding a satisfying assignment for a 3CNF-formula is NP-hard, but how fast can we find a satisfying assignment under the assumption that many exist?
When a constant fraction of the possible assignments satisfy the formula, simply trying random assignments performs quite well; and it is an intriguing problem to match this performance with a deterministic algorithm. 
Servedio and Tan~\cite{ST17} gave such an algorithm that uses a deterministic algorithm for approximating the number of satisfying assignments of a formula as a key building block.
We note that deterministic approximate counting is at the heart of our methods too.

\section{Solving a Unique Games problem with a partition function}\label{sec:solveCUG}

In this section we define a partition function $Z(G;w)$ and describe how to solve CUG instances via knowledge of the partition function, leading to proofs of Theorems~\ref{thm:main,interp} and~\ref{thm:main,cluster}. 
We also observe how CUG naturally arises from UG in this context. 

A partition function is a mathematical object that encodes as a polynomial some weighted substructures in a graph. 
The general definition arises in the statistical physics of spin systems, and important examples include the independence polynomial and matching polynomials of a graph, see e.g.~\cite{Barvinok16}. 
Here we will only describe the partition function we define to study Unique Games instances, which is closely related to the Potts and random cluster models.

\begin{definition}\label{def:Z}
Given a Unique Games instance $G=(V, E, \pi)$, the partition function $Z(G;w)$ is a polynomial in a parameter $w\in\CC$ given as a sum of terms $w^{i}$ for each coloring of the graph with the alphabet $[k]$ that has $i$ satisfied constraints (i.e.\ that has value $i/|E|$).
That is,
\begin{equation}
Z(G; w) := \sum_{\{x_u\}_{u\in V}\in[k]^V}\;\prod_{\substack{(u,v)\in E,\\ x_v=\pi_{uv}(x_u)}}w\,,
\end{equation}
where the sum us over all assignments of colors in $[k]$ to the labels $\{x_u\}_{u\in V}$.
\end{definition}

If the permutations $\pi_{uv}$ are all the identity, then a satisfied constraint corresponds to a \emph{monochromatic edge}: both endpoints of the edge received the same color. 
In this case the above $Z(G;w)$ corresponds to the partition function of the Potts model from statistical physics. 
The relevance to Unique Games problems arises from the fact that the cases of the promise in (C)UG give contrasting bounds on $Z(G;w)$. 
When $w\ge 1$ is real, in the case that a highly satisfying assignment is guaranteed to exist we have a lower bound, and in the case that no highly satisfying assignments exist we have an upper bound. 
When the upper bound is less than the lower bound, at most one of the bounds can hold for any given instance, so knowledge of $Z(G;w)$ immediately solves the problem. 
If the bounds are sufficiently far apart an approximate value of $Z(G;w)$ suffices.  

\begin{lemma}\label{lem:solveCUG}
Consider an instance $G=(V,E)$ of \CUG[f,k,\eps,\delta], and let $\alpha>0$. 
Then to solve the instance it suffices to know any value $\xi$ satisfying $e^{-\alpha} \le Z(G;w)/\xi \le e^\alpha$ for any real $w$ such that
\[
\log w > \frac{|V|\log(1/f)+2\alpha}{(1-\eps-\delta)|E|}\,.
\]
In the case that $G$ has average degree $\Delta$ (so $2|E|=\Delta|V|$), and $\alpha=C|V|$ this becomes 
\[
\log w > \frac{2}{1-\eps-\delta}\frac{\log(1/f)+2C}{\Delta}\,.
\]
\end{lemma}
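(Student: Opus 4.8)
The plan is to compare $Z(G;w)$ against the two promised cases and show the stated lower bound on $\log w$ forces the "yes"-case lower bound to exceed the "no"-case upper bound, even after allowing a multiplicative slack of $e^{\pm\alpha}$. First I would record the trivial bounds on $Z(G;w)$ for real $w\ge 1$. Every term in the sum defining $Z(G;w)$ is of the form $w^{i}$ with $0\le i\le |E|$, so in the "no" case, where every coloring satisfies at most $\delta|E|$ constraints, each of the $k^{|V|}$ colorings contributes at most $w^{\delta|E|}$, giving
\[
Z(G;w)\le k^{|V|}w^{\delta|E|}\,.
\]
In the "yes" case there are at least $(fk)^{|V|}$ colorings with value at least $1-\eps$, i.e.\ with at least $(1-\eps)|E|$ satisfied constraints, and each such coloring contributes at least $w^{(1-\eps)|E|}$ (using $w\ge 1$), so
\[
Z(G;w)\ge (fk)^{|V|}w^{(1-\eps)|E|}\,.
\]

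Next I would turn the approximation guarantee into a decision procedure. Given $\xi$ with $e^{-\alpha}\le Z(G;w)/\xi\le e^{\alpha}$, we have $Z(G;w)\le e^{\alpha}\xi$ and $Z(G;w)\ge e^{-\alpha}\xi$. So if $e^{\alpha}\xi$ is strictly smaller than the yes-case lower bound $(fk)^{|V|}w^{(1-\eps)|E|}$, we must be in the "no" case; and if $e^{-\alpha}\xi$ is strictly larger than the no-case upper bound $k^{|V|}w^{\delta|E|}$, we must be in the "yes" case. These two alternatives cover all possibilities precisely when the no-case upper bound is strictly less than $e^{-2\alpha}$ times the yes-case lower bound, i.e.\ when
\[
k^{|V|}w^{\delta|E|}\;<\;e^{-2\alpha}(fk)^{|V|}w^{(1-\eps)|E|}\,,
\]
since then any valid $\xi$ lands in exactly one of the two regimes (the value $\xi$ cannot simultaneously satisfy $e^{\alpha}\xi\ge$ yes-bound and $e^{-\alpha}\xi\le$ no-bound). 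I would spell out that when the separation holds, at least one of the two tests above fires, and the output is correct.

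Finally I would solve the displayed inequality for $\log w$. Cancelling $k^{|V|}$ and rearranging gives $w^{(1-\eps-\delta)|E|}>f^{-|V|}e^{2\alpha}$, and taking logarithms (all quantities positive, $1-\eps-\delta>0$) yields
\[
\log w>\frac{|V|\log(1/f)+2\alpha}{(1-\eps-\delta)|E|}\,,
\]
which is exactly the hypothesis; note this also forces $\log w>0$, i.e.\ $w>1$, so the bounds used above are valid. For the second form, substitute $2|E|=\Delta|V|$ and $\alpha=C|V|$: the $|V|$ cancels throughout and one obtains $\log w>\tfrac{2}{1-\eps-\delta}\cdot\tfrac{\log(1/f)+2C}{\Delta}$. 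I do not anticipate a serious obstacle here; the only thing to be careful about is the direction of the inequalities and checking that the hypothesis indeed guarantees $w\ge 1$ so that $w^{i}$ is monotone in $i$ over the relevant range, which I would state explicitly rather than leave implicit.
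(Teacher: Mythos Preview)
Your proposal is correct and follows essentially the same approach as the paper: bound $Z(G;w)$ from below in the ``yes'' case and from above in the ``no'' case, use the multiplicative $e^{\pm\alpha}$ slack to see that $\xi$ distinguishes the cases once the two intervals are disjoint, and solve the resulting inequality for $\log w$. Your write-up is in fact more explicit than the paper's (which simply says the intervals are disjoint ``precisely for $w$ as in the statement''), and your care about verifying $w>1$ so that monotonicity of $w^i$ holds is a nice touch the paper glosses over.
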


\begin{proof}
Consider any real $w\ge 1$.
Then if there are $(fk)^{|V|}$ colorings of $G$ with value $1-\eps$ we have 
\[
e^{\alpha}\xi \ge Z(G;w) \ge (fk)^{|V|}w^{(1-\eps)|E|}\,,
\]
and if every coloring has value at most $\delta$ we have 
\[
e^{-\alpha}\xi \le Z(G;w)\le k^{|V|}w^{\delta|E|}\,.
\] 
Since these bounds go in opposite directions, knowledge of $\xi$ immediately yields an solution to the problem when the implied intervals for $\xi$ are disjoint. 
This occurs precisely for $w$ as in the statement of the lemma. 
\end{proof}

We now see that increasing $f$ allows CUG instances to be solved via $Z(G;w)$ for smaller $w$, and this is how the Count variant of Unique Games naturally arises. 
We are unable to approximate $Z(G;w)$ with $w$ large enough to permit $f=1/k$ (i.e.\ refute UGC), but by slightly increasing $f$ we bring $w$ into a range amenable to our methods. 
In Section~\ref{sec:conc} we discuss the problem of how large $f$ can be while \CUG[f] is still equivalent to UG, and identify natural barriers to using algorithms for larger $w$.
To obtain Theorems~\ref{thm:main,interp} and~\ref{thm:main,cluster} we need a pair of algorithms and some calculations. 

\begin{theorem}\label{thm:alg,interp}
Let $k\in \mathbb{N}_{\geq 3}$, $\Delta\in \mathbb{N}_{\geq 3}$, and $w^*=1+(\log k -1)/\Delta$. 
Then there exists a deterministic algorithm which, given $\alpha$ satisfying $0<\alpha<\frac{2n}{e\Delta}e^{O(\log^2 k)}$,
and an $n$-vertex \UG[k] instance $G$ of maximum degree at most $\Delta$, computes a number $\xi$ satisfying $e^{-\alpha} \le Z(G;w^*)/\xi \le e^{\alpha}$ in time bounded by
\[
n\exp\left(e^{O(\log^2 k)} \log\left(\frac{n\Delta}{\alpha}e^{O(\log^2 k)}\right)\log(\Delta\sqrt k)\right)\,.
\]
\end{theorem}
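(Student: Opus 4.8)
The plan is to follow the Barvinok--Patel--Regts polynomial interpolation paradigm. The partition function $Z(G;w)$ is a polynomial in $w$ of degree $m=|E(G)|\le n\Delta/2$, and $Z(G;0)=k^n\neq 0$, so $\log Z(G;w)$ is well-defined and analytic on any simply connected zero-free region containing $0$. The first step is to invoke a zero-free region for $Z(G;w)$: the promised Theorem~\ref{thm:zero-free} should give a disk or neighbourhood of the segment $[0,w^*]$ (with $w^*=1+(\log k-1)/\Delta$ chosen precisely so it lies inside the guaranteed zero-free region for $\Delta$-regular-ish graphs) on which $Z(G;w)\neq 0$. Having such a region, the standard argument shows that the Taylor coefficients of $\log Z(G;w)$ around $w=0$, truncated at order $\ell$, yield a value $\xi$ with $|\log Z(G;w^*)-\log\xi|\le \alpha$ provided $\ell=O\!\bigl(\log(m/\alpha)\bigr)$ up to a constant depending on how deep inside the zero-free region $w^*$ sits; here that constant is $e^{O(\log^2 k)}$, which is where the $\log k$-dependence enters the exponent.

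The second step is the efficient computation of those first $\ell$ Taylor coefficients of $\log Z(G;w)$. By the exp--log identity it suffices to compute the coefficients $a_0,\dots,a_\ell$ of $Z(G;w)$ itself (then Newton's identities convert in $\mathrm{poly}(\ell)$ time). Each $a_j$ counts, over all $k^n$ colorings, those with exactly $j$ satisfied constraints; equivalently $a_j=\sum_{F\subseteq E,\ |F|=j} (\text{number of colorings satisfying every constraint in }F)$ after inclusion--exclusion, and the number of colorings satisfying a fixed constraint-set $F$ is $k^{c(F)}$ where $c(F)$ is the number of connected components of the graph $(V,F)$ — exactly as for the Potts/random-cluster model, the permutations $\pi_{uv}$ being bijections do not change the count, only which colorings are counted. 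The key combinatorial input from Patel--Regts is that $a_j$ is determined by the counts of connected subgraphs with at most $j$ edges, and for bounded-degree $G$ these can be enumerated in time $n\cdot e^{O(j)}\cdot(\text{local enumeration cost})$. Here the local cost of summing the appropriate weights over connected subgraphs on $j$ edges in a $\Delta$-regular graph, keeping track of the $k$ colors, contributes the $\log(\Delta\sqrt k)$ factor in the exponent (it is $(\Delta\sqrt k)^{O(j)}$-type after the right encoding of colour classes / BEST-type summation). Combining $\ell=e^{O(\log^2 k)}\log(n\Delta/\alpha\cdot e^{O(\log^2k)})$ with per-coefficient cost $e^{O(\ell)}(\Delta\sqrt k)^{O(\ell)}$ and the outer factor $n$ gives the stated running time $n\exp\bigl(e^{O(\log^2 k)}\log(\tfrac{n\Delta}{\alpha}e^{O(\log^2k)})\log(\Delta\sqrt k)\bigr)$; the hypothesis $\alpha<\tfrac{2n}{e\Delta}e^{O(\log^2k)}$ just ensures $\ell\ge 1$ and that the log-argument is $\ge 2$ so the bound is meaningful.

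I expect the main obstacle to be establishing (and correctly quoting) the zero-free region, i.e.\ Theorem~\ref{thm:zero-free}, and then verifying that $w^*=1+(\log k-1)/\Delta$ genuinely lies strictly inside it with enough room that the "depth" constant is only $e^{O(\log^2k)}$ rather than something worse — the $\log^2 k$ in the exponent strongly suggests the zero-free region has width shrinking like $1/\mathrm{poly}(k)$ or that $w^*$ sits at distance $\sim 1/k^{O(\log k)}$ from the boundary in the relevant conformal metric, and pinning down that geometry (via a Koebe-type / Riemann-mapping estimate on the ratio of the analytic truncation error) is the delicate part. By contrast, the coefficient-extraction step is essentially the Patel--Regts machinery applied verbatim, with the only genuinely new bookkeeping being that satisfied-constraint sets are weighted by $k^{c(F)}$ exactly as in the Potts model because each $\pi_{uv}$ is a bijection; so that part I would treat as routine and relegate to a lemma in Section~\ref{sec:interp}.
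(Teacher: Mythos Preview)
Your high-level plan---zero-free region from Theorem~\ref{thm:zero-free}, Barvinok interpolation, then Patel--Regts coefficient extraction---is exactly the paper's approach. But two concrete errors would stop the proof as written.

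First, $Z(G;0)\neq k^n$: by definition $Z(G;0)$ counts colorings with \emph{no} satisfied edges, whereas it is $Z(G;1)=k^n$. More importantly, Theorem~\ref{thm:zero-free} gives a zero-free neighbourhood of $[1,w^*]$, not of $[0,w^*]$; no zero-freeness is claimed in the regime $w\in[0,1)$, and for bounded-degree $G$ complex zeros can in general accumulate there. The paper therefore affinely rescales $[1,w^*]$ to $[0,1]$ and expands about the image of $w=1$. After rescaling, the strip width becomes $\eta/(w^*-1)=\omega(1/\log^2 k)$, and it is Barvinok's explicit strip-to-disc polynomial of degree $\exp(O(1/\text{width}))$ (Lemma~\ref{lem:disctostrip}) that produces the $e^{O(\log^2 k)}$ factor---so your intuition about the source of that factor is correct, but the mechanism is this rescaling plus an explicit polynomial construction, not a Koebe-type conformal estimate.

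Second, your assertion that the number of colorings satisfying every constraint in a set $F$ is $k^{c(F)}$, and that ``the permutations being bijections do not change the count, only which colorings are counted'', is false for general UG instances. On a connected component of $(V,F)$ containing a cycle along which the composition of the edge permutations is not the identity, the number of perfectly satisfying colorings is strictly less than $k$ and can be $0$; the correct weight is $\sat_\pi(V',E')$ as in Lemma~\ref{lem:random cluster}. The paper does not route the coefficient computation through the random-cluster reformulation at all: it observes that $Z(G;w)$ is an edge-coloured BIGCP in the sense of~\cite{PR19} (with parameters yielding the $(\Delta\sqrt k)^{O(m)}$ cost) and invokes that black box. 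Once you appeal to that machinery your misidentification of the subgraph weights becomes irrelevant, but as written the justification does not stand.
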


\begin{theorem}\label{thm:alg,cluster}
Let $\Delta\in \mathbb{N}$ and let $\zeta=8/\sqrt{\Delta}$.
For $k\geq\exp{((18\Delta+4\Delta\log\Delta)/\zeta)}$ and $w^*=\exp((2-\zeta) \log(k)/\Delta)$ there exists an deterministic algorithm, which given $\alpha>0$ and an $n$-vertex \UG[k] instance $G$ of maximum degree at most $\Delta$, computes $\xi$ satisfying $e^{-\alpha}\le |Z(G;w^*)/\xi| \le e^\alpha$ in time bounded by $kn^{O(1)}(n/\alpha)^{O(\Delta)}$.
\end{theorem}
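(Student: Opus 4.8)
The plan is to realize $Z(G;w^{*})$ as a polymer partition function to which the cluster expansion applies, and then to approximate it by a truncated cluster expansion inside the algorithmic framework of \cite{BCHPT19}; the permutation constraints of the Unique Games instance will be absorbed into a count of locally consistent colorings, and that count is the only place the generalization beyond the Potts model enters.

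First I would pass to the ``random-cluster'' form of the partition function. Writing $w=1+(w-1)$ in Definition~\ref{def:Z} and expanding the product over edges gives
\[
Z(G;w)=\sum_{A\subseteq E}(w-1)^{|A|}N(A),\qquad N(A):=\bigl|\{x\in[k]^{V}\ :\ x_v=\pi_{uv}(x_u)\ \text{for all}\ uv\in A\}\bigr|.
\]
Since $N(A)$ factorizes over the connected components of $(V,A)$ and every isolated vertex contributes a factor $k$, this rearranges as
\[
\frac{Z(G;w)}{k^{n}}\;=\;\sum_{\substack{\{\gamma_1,\dots,\gamma_r\}\\ \text{pairwise vertex-disjoint}}}\ \prod_{j=1}^{r}\phi(\gamma_j),\qquad \phi(\gamma):=(w-1)^{|E(\gamma)|}\,N_\gamma\,k^{-|V(\gamma)|},
\]
where the polymers $\gamma$ are the connected subgraphs of $G$ with at least one edge, $N_\gamma$ is the number of colorings of $V(\gamma)$ satisfying every permutation constraint on $E(\gamma)$, and two polymers are compatible exactly when they are vertex-disjoint. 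Because $w^{*}>1$ all these weights are nonnegative, so $\Xi:=Z(G;w^{*})/k^{n}\ge 1$ is a bona fide polymer partition function, to which I would apply the cluster expansion.

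The crux is to prove the cluster expansion for $\log\Xi$ converges at $w=w^{*}$, and this is where the hypothesis relating $k$ and $\Delta$ is used. The only input specific to Unique Games is the estimate $N_\gamma\le k$ for every connected $\gamma$: fixing a spanning tree of $\gamma$ and a color for its root determines a coloring of $V(\gamma)$ along the tree, and the non-tree edges can only cut this down further — concretely $N_\gamma=\bigl|\bigcap_{uv}\mathrm{Fix}(\sigma_{rv}^{-1}\pi_{uv}\sigma_{ru})\bigr|$, the intersection over non-tree edges $uv$ of the fixed-point sets of the permutations obtained by conjugating $\pi_{uv}$ with the products $\sigma_{ru},\sigma_{rv}$ of constraints along the tree paths from the root $r$. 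Hence $|\phi(\gamma)|\le (w^{*})^{|E(\gamma)|}k^{1-|V(\gamma)|}$, which is precisely the polymer-weight bound one has for the ferromagnetic Potts model with $q=k$ colors, so it suffices to check convergence with these (only larger) weights. I would then combine this with the Koteck\'y--Preiss (or Fern\'andez--Procacci) criterion, using that a connected subgraph of a graph of maximum degree $\Delta$ on $v$ vertices has at most $\min\{\binom{v}{2},\Delta v/2\}$ edges and that the number of connected subgraphs of $G$ of a given size through a fixed vertex is at most exponential in that size; plugging in $w^{*}=\exp((2-\zeta)\log k/\Delta)$ with $\zeta=8/\sqrt{\Delta}$, a calculation balancing weights against counts shows the criterion holds precisely once $\log k\ge(18\Delta+4\Delta\log\Delta)/\zeta$, which gives absolute convergence with geometrically decaying tails in the cluster size.

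Finally I would invoke the machinery of \cite{BCHPT19}: truncating $\log\Xi$ to clusters of size $O(\log(n/\alpha))$ incurs error at most $\alpha$ by the tail bound just obtained, and the truncated sum is evaluated by the standard routine of enumerating the bounded-size polymers reachable from each vertex and the clusters they span — of which there are $n(n/\alpha)^{O(\Delta)}$ — computing $\phi(\gamma)$ for each, where the single nonstandard quantity $N_\gamma$ is read off from the common-fixed-point description above in time $\poly(|V(\gamma)|)\cdot\tilde O(k)$, and summing the Ursell-weighted cluster contributions to obtain an approximation $L$ of $\log\Xi$ with $|L-\log\Xi|\le\alpha$; the output $\xi:=k^{n}e^{L}$ then satisfies $e^{-\alpha}\le|Z(G;w^{*})/\xi|\le e^{\alpha}$, and accounting for the enumeration together with the arithmetic on $k$-element permutations gives the running time $kn^{O(1)}(n/\alpha)^{O(\Delta)}$. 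I expect the main obstacle to be the convergence calculation: establishing the Koteck\'y--Preiss inequality with the essentially tight constants above requires identifying the extremal polymers — roughly $\Theta(\Delta^{2})$ edges on $\Theta(\Delta)$ vertices — and carefully controlling the trade-off between polymer count and polymer weight near them. By contrast the genuinely new part specific to Unique Games is light: it is contained in the single observation that replacing monochromatic edges by permutation-satisfied ones only shrinks $|N_\gamma|$, so that convergence is inherited from the Potts setting — but making that rigorous, and checking that $N_\gamma$ is cheap enough that the Potts algorithm of \cite{BCHPT19} ports over with no loss, is still where the work lies.
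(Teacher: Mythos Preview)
Your proposal is correct and follows essentially the same approach as the paper: rewrite $Z(G;w)$ via the random-cluster expansion as a polymer partition function $\Xi=k^{-n}Z(G;w)$ with weights $(w-1)^{\|\gamma\|}k^{-|\gamma|}\sat_\pi(\gamma)$, use the single UG-specific bound $\sat_\pi(\gamma)\le k$ for connected $\gamma$ to reduce to Potts-type weights, verify a Koteck\'y--Preiss-type convergence criterion under the stated hypotheses on $k$ and $w^*$, and then invoke the enumeration machinery of \cite{BCHPT19} for the truncated cluster expansion. The only cosmetic difference is that the paper packages the convergence check via a ready-made criterion from \cite{BCHPT19} (conditions $\|\gamma\|\ge b|\gamma|$ and $|w_\gamma|\le e^{-(3\log\Delta/b+3)\|\gamma\|}$ with $b=1/2$) and verifies the weight bound by a short three-case analysis on $s=\|\gamma\|$ (the ranges $s\le\Delta$, $\Delta<s\le 2\Delta/\zeta$, $s>2\Delta/\zeta$), rather than checking KP directly; your description of the ``extremal polymers'' is slightly off relative to that case split, but the substance is the same.
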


These results are proved in Sections~\ref{sec:interp} and~\ref{sec:cluster} respectively. 
In both cases we define a series for $\log Z(G;w)$, show that it converges, and obtain an additive approximation to it by evaluating a truncation of the series. 
Here we give the calculations that show what CUG problems we can solve with these algorithms. 

\begin{proof}[Proof of Theorem~\ref{thm:main,interp}]
Take $w^*$ as in Theorem~\ref{thm:alg,interp}.
As $\Delta\to\infty$, and with approximation error $\alpha=Cn$ for any $C<2e^{O(\log^2k)-1}/\Delta$, by Lemma~\ref{lem:solveCUG} we require 
\[
\log w^* = (1-o(1))\frac{\log k}{\Delta} > \frac{2}{1-\eps-\delta}\frac{\log(1/f)+2C}{\Delta}\,.
\]
For large enough $\Delta$, $C=\log(k)/\Delta$ is valid in Theorem~\ref{thm:alg,interp} and implies the above for any
\[
f \ge k^{\eps+\delta-\frac{1}{2}}\,.\qedhere
\]
\end{proof}
We remark that a very similar calculation gives a result for $k$ growing with $\Delta$, but we present the special case of constant $k$ here as it is instructive of our methods and permits a concise expression for $f$ and the running time.

\begin{proof}[Proof of Theorem~\ref{thm:main,cluster}]
Take $w^*$ as in Theorem~\ref{thm:alg,cluster}. 
With $\Delta\ge e^{9/2}$, $\zeta=8/\sqrt\Delta$, $k\ge \Delta^{\Delta^{3/2}}$, and $\alpha=Cn$ for some $C>0$ we choose later, by Lemma~\ref{lem:solveCUG} we require 
\[
\log w = (2-\zeta)\frac{\log k}{\Delta} > \frac{2}{1-\eps-\delta}\frac{\log(1/f)+2C}{\Delta}\,,
\]
which holds when
\[
f > e^{2C}k^{-\frac{1}{2}(2-\zeta)(1-\eps-\delta)}\,.
\]
With $\Delta\ge\Delta_0(\eps,\delta)$ and $C \le \frac{1}{4}(\eps+\delta)\log k$, this holds for $f \ge k^{2\eps+2\delta-1}$.
For large enough $\Delta_0$ (which makes $k$ sufficiently large) we can take e.g. $C=1/2$ and obtain a running time of $kn^{O(1)}e^{O(\Delta)}$.
\end{proof}

\section{Polynomial interpolation}\label{sec:interp}

The proof of Theorem~\ref{thm:alg,interp} proceeds by an influential method known as polynomial interpolation introduced by Barvinok, see e.g.~\cite{Barvinok16}. 
In our application of this method, for some real $w^*>0$ we show that there is a region $\cU\subset\CC$ containing the interval $[1,w^*]$ on which $Z(G;w)\ne0$ for any UG instance $G$ of maximum degree $\Delta$. 
The region $\cU$ is a \emph{zero-free region}, and it guarantees that a Taylor series for (a suitable modification of) $\log Z$ converges inside $\cU$.
We then approximate $Z$ by computing the coefficients of a truncation of the Taylor series.
We require that $\cU$ is independent of the size of the graph $G$ to obtain a good approximation. 
The analysis yielding the approximation from $\cU$ is rather standard, e.g.~\cite{Barvinok16,Barvinok18,PR17}, though we include it in Appendix~\ref{app:interp,details} for completeness. The main technical work is in the following theorem establishing the region $\cU$. 
We write $\cN(S, \eta)$ for an open set in $\CC$ containing the open ball of radius $\eta$ around every point in $S$.

\begin{theorem}\label{thm:zero-free}
Let $k\in\NN_{\geq 3}$ and $\Delta\in \NN_{\geq 3}$.
Then with $w^*= 1+(\log k -1)/\Delta$ there exists $\eta=\omega(\frac{1}{\Delta\log k})$ such that for any $w\in \cN\left(\left[1,w^*\right],\eta\right)$ and any \UG[k] instance $G$ of maximum degree at most $\Delta$, $Z(G;w)\neq 0$.
\end{theorem}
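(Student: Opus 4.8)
The plan is to establish zero-freeness by controlling the ratios of the \emph{partial partition functions}
\[
Z^i(G,v)\ :=\ \sum_{\substack{x\in[k]^V\\ x_v=i}}\ \prod_{uv'\in E\,:\ x_{v'}=\pi_{uv'}(x_u)}w\,,\qquad i\in[k],
\]
defined for an arbitrary \UG[k] instance $G$ of maximum degree at most $\Delta$ and an arbitrary vertex $v$, so that $Z(G;w)=\sum_{i=1}^{k}Z^i(G,v)$. First I would fix a region $\mathcal{R}\subseteq\CC$ with $\mathcal{R}\subseteq\{z:\Re z>0\}$, bounded and bounded away from $0$ --- morally a small disk around $1$ --- and prove, for every $w\in\cN([1,w^*],\eta)$, that every $Z^i(G,v)$ is nonzero and every ratio $Z^i(G,v)/Z^j(G,v)$ lies in $\mathcal{R}$. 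The theorem then follows at once: $Z(G;w)=Z^1(G,v)\bigl(1+\sum_{i\ge2}Z^i(G,v)/Z^1(G,v)\bigr)$ is a nonzero multiple of a sum of $k$ numbers all having positive real part.

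The statement about $\mathcal{R}$ would be proved by induction on $|V(G)|$, using the standard multiplicative recursion for these quantities: peeling off $v$ and distributing the weights of its incident edges $vu_1,\dots,vu_d$ (with $d\le\Delta$) gives, on trees,
\[
Z^i(G,v)\ =\ \prod_{j=1}^{d}\Bigl(S_j+(w-1)\,\zeta_j^{\pi_{vu_j}(i)}\Bigr),\qquad S_j=\sum_{\ell=1}^{k}\zeta_j^{\ell},
\]
where $\zeta_j^{\ell}=Z^\ell(T_j,u_j)$ are the partial partition functions of the branch $T_j$ hanging off $u_j$. The key structural observation is that the permutation $\pi_{vu_j}$ enters \emph{only} through which branch ratio it pairs with the colour $i$ of $v$: up to this relabelling the recursion is exactly that of the ferromagnetic $k$-state Potts model, merely with an adversarially chosen identification of colours across each edge, so the permutations cost nothing in the analysis. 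For a general (non-tree) instance the branches are not independent and the displayed product fails; here one reduces to trees by a standard device (passing to the computation tree of the recursion, or arguing via deletion/contraction), or one runs the inductive control of ratios directly on the graph --- I focus on the tree case as it exhibits the mechanism most transparently.

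To run the induction one writes $Z^i(G,v)/Z^{i'}(G,v)=\prod_{j=1}^{d}\bigl(1+(w-1)\zeta_j^{\pi_{vu_j}(i)}/S_j\bigr)\big/\bigl(1+(w-1)\zeta_j^{\pi_{vu_j}(i')}/S_j\bigr)$. The region $\mathcal{R}$ should encode that in the disordered regime each branch marginal $\zeta_j^{\ell}/S_j$ is close to $1/k$; it is convenient to parametrise instead by the logarithmic deviations $\log\bigl(k\,\zeta_j^{\ell}/S_j\bigr)$ and ask that they lie in a box of size $O(1)$. Then, using $|w-1|\le w^*-1+\eta=O\!\bigl((\log k)/\Delta\bigr)$, every factor $1+(w-1)\zeta_j^{\ell}/S_j$ is within $O\!\bigl(|w-1|/k\bigr)$ of $1$ --- in particular nonzero, since $|w-1|\cdot|\zeta_j^{\ell}/S_j|\ll 1$ --- and the ratio above differs from $1$ by at most a constant times $\Delta(w-1)/(k-1)$ times the spread of the branch deviations. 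For $w\in\cN([1,w^*],\eta)$ one has $\Delta(w-1)/(k-1)\le(\log k)/(k-1)+o(1)<1$ for every $k\ge 3$, so this is a genuine contraction of the deviations and the induction closes with room to spare; and it is precisely this slack (between $(\log k)/k$ and $1$) that leaves space for a two-dimensional neighbourhood of width $\eta=\omega\!\bigl(1/(\Delta\log k)\bigr)$ around the real segment and that fixes the admissible value $w^*=1+(\log k-1)/\Delta$.

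I expect the main obstacle to be the choice of the region $\mathcal{R}$ itself: it must be simultaneously (a) invariant under the recursion for \emph{every} degree $d\le\Delta$ and \emph{every} assignment of edge permutations; (b) contained in a half-plane --- indeed a disk around $1$, or a thin cone around $\RR_{>0}$ --- so that the final sum over the $k$ colours cannot cancel; and (c) honestly two-dimensional, of width $\omega\!\bigl(1/(\Delta\log k)\bigr)$, so that it certifies zero-freeness on a true complex neighbourhood of $[1,w^*]$ and not merely on the real interval. Reconciling (b) and (c) --- making the box of log-deviations small enough to force (b), yet large enough that a complex perturbation of $w$ of size $\eta$ still maps it into itself --- is the delicate balancing act, and it is what dictates the precise form of $w^*$. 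The remaining ingredients (the reduction to the tree case, the bookkeeping for the permutations, and the concluding observation that a sum of numbers with positive real part is nonzero) are routine.
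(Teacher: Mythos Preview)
Your high-level strategy matches the paper's: control ratios of colour-conditioned partition functions inductively so that they stay in a cone of small angle, then conclude that the full sum over colours cannot vanish. You also correctly observe that the edge permutations amount only to a relabelling inside the recursion and so add no real difficulty over the Potts case. Where your proposal diverges from the paper, and where the genuine gap lies, is in the handling of general (non-tree) graphs.

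Your product formula $Z^i(G,v)=\prod_j\bigl(S_j+(w-1)\zeta_j^{\pi_{vu_j}(i)}\bigr)$ is valid only on trees, as you acknowledge. None of the three escape routes you list actually works as stated. The ``computation tree'' (Weitz/SAW tree) gives an exact marginal identity for two-spin systems, but for $k$-state models with $k\ge 3$ there is no such clean reduction, and zero-freeness on the tree does not transfer to the graph. Deletion/contraction gives a recursion for $Z$, not for the \emph{ratios} you need to control. And ``running the inductive control directly on the graph'' is precisely what the paper does, but it requires infrastructure that your single-vertex conditioning $Z^i(G,v)$ cannot supply: once you peel off $v$, the neighbours $u_1,\dots,u_d$ are correlated in $G-v$, so you need the restricted partition functions $Z^W_L(G)$ with an \emph{arbitrary list} $W$ of fixed vertices. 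The paper then inducts on the number of \emph{free} vertices (not on $|V|$), and reduces the general vertex case to the leaf case by the \emph{starring} trick: replace $u$ by $\deg(u)$ pendant copies, each attached to one neighbour, and telescope the ratio one copy at a time.

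A second, related issue is that a single invariant region $\mathcal{R}$ is probably not fine-grained enough. Because the per-edge contribution to the ratio is different when the neighbour is free (one must feed the output back into the recursion) versus fixed (the contribution is literally $w^{\pm 1}$), the paper tracks two angle scales $\eps\ll\theta$ and two modulus scales $c,\alpha$, and its invariant region $K_d$ depends on the number $d$ of free neighbours. This two-scale bookkeeping is what makes the induction close with the stated $w^*=1+(\log k-1)/\Delta$ and with slack $\eta=\omega(1/(\Delta\log k))$; your single-box heuristic ``$\Delta(w-1)/(k-1)<1$'' points in the right direction but does not by itself pin down the region or the width $\eta$. In short: the contraction picture you sketch is the right intuition, but to turn it into a proof for general graphs you need the boundary-condition framework, the starring reduction to leaves, and the free/fixed distinction that the paper builds into Lemma~\ref{lem:kcol}.
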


Our proof of Theorem~\ref{thm:zero-free} is inductive in the style of~\cite{BDPR18} which gives a zero-free region for the antiferromagnetic Potts model, though here we have a generalization of the \emph{ferromagnetic} Potts model rather than than the \emph{antiferromagnetic} Potts model that was studied in~\cite{BDPR18}. 

Consider an $n$-vertex \UG[k] instance with $G=(V,E,\pi)$ with maximum degree $\Delta$.
In order to prove our results, we will need to work more generally with the partition function with boundary conditions.
For $m>0$ and a list $W=w_1\ldots w_m$ of distinct vertices of $V$ and a list $L=\ell_1\ldots\ell_m$ of pre-assigned colours in $[k]$ for the vertices in $W$ the \emph{restricted partition function} $Z^{W}_{L}(G;w)$ is defined by
\[
Z^{W}_{L}(G;w):=\sum_{\substack{\{x_u\}_{u\in V}\in[k]^V\\\{x_u\}_{u\in V} \text{ respects }(W,L)}}\;\prod_{\substack{(u,v)\in E,\\ x_v=\pi_{uv}(x_u)}}w\,,
\]
where we say that a color assignment $\{x_u\}_{u\in V}$ respects $(W,L)$ if for all $i=1\ldots, m$ we have $x_{w_i}=\ell_i$.
As it does not vary in the steps of the proof, we will omit the parameter $w$ and write $Z^W_L(G)$ for $Z^W_L(G;w)$.
We call the vertices $w_1,\ldots,w_m$ \emph{fixed} and refer to the remaining vertices in $V$ as \emph{free} vertices. The length of $W$ (resp.\ $L$), written $|W|$ (resp.\ $|L|$) is the length of the list.
Given a list of distinct vertices $W' = w_1\ldots w_m$, and a vertex $u$ (distinct from $w_1, \ldots, w_m$) we write $W = W'u$ for the concatenated list $W = w_1\ldots w_m u$ and we use similar notation $L'\ell$ for concatenation of lists of colours.  We write $\deg(v)$ for the degree of a vertex $v$ and we write $G\setminus uv$ ($G-u$) for the graph obtained from $G$ by removing the edge $uv$ (by removing the vertex $u$). 


To prove Theorem~\ref{thm:zero-free} we consider the same statement for restricted partition functions and induct over the number of vertices whose color is not fixed by the boundary conditions.
With a strengthened induction hypothesis we can argue that unfixing the specified color of a single vertex cannot affect the value of the partition function too much and continue the induction. 
The main technical difficulties are to bound the change in angle and radius unfixing a vertex can induce in the value of the partition function (as a complex number). 

\begin{restatable}{lemma}{zerofreeinduction}
\label{lem:kcol}
Let $\Delta\in \mathbb{N}_{\geq 3}$ and let $k\in \mathbb{N}_{\geq 3}$.  
Let $c=\log k-1$ and $\alpha=\log k^{1/2}-1$.
Then there exists constants $0<\eps<\theta<\frac{\pi}{3\Delta}$ with $\eps,\theta=\omega(1/\Delta)$ and $\eta=\omega(1/(\Delta\log k))$ such that for any $w\in \mathcal{N}([1,1+c/\Delta],\eta)$ and any \UG[k] instance $G$ of maximum degree at most $\Delta$
the following hold.
\begin{enumerate}
 \item\label{itm:kcol,nonzero}
 For all lists $W$ of distinct vertices of $G$ and all lists of pre-assigned colours $L$ of length $|W|$, $Z_L^W(G) \ne 0$.
 \item\label{itm:kcol,leaf}
 For all lists $W=W'u$ of distinct vertices of $G$ such that $u$ is a leaf and any two lists $L'l$, $L'l'$ of length $W$, the following hold.
 \begin{enumerate}
 \item\label{itm:kcol,leaf,free}
 If the unique neighbour $v$ of $u$ is free,
 \begin{enumerate}
 \item\label{itm:kcol,leaf,free,angle}
  the angle between vectors $Z_{L'l}^{W'u}(G)$ and $Z_{L'l'}^{W'u}(G)$ is at most $\theta$, and
 \item\label{itm:kcol,leaf,free,length}
       \begin{itemeq}
          \frac{Z_{L'l}^{W'u}(G)}{Z_{L'l'}^{W'u}(G)}   \leq 1 + \frac{\alpha}{\Delta}\,.
       \end{itemeq}
 \end{enumerate}
 \item\label{itm:kcol,leaf,fixed}
 If the unique neighbour $v$ of $u$ is fixed,
 \begin{enumerate}
 \item\label{itm:kcol,leaf,fixed,angle}
 the angle between vectors $Z_{L'l}^{W'u}(G)$ and $Z_{L'l'}^{W'u}(G)$ is at most $\eps$, and
 \item\label{itm:kcol,leaf,fixed,length} 
       \begin{itemeq}
          \frac{|Z_{L'l}^{W'u}(G)|}{|Z_{L'l'}^{W'u}(G)|}   \leq 1 + \frac{c}{\Delta}\,.
       \end{itemeq}
 \end{enumerate}
 \end{enumerate}
 \item\label{itm:kcol,nonleaf}
 For all lists $W = W'u$ of distinct vertices of $G$, and for all lists of pre-assigned colours $L'$ of length $|W'|$, let $d$ be the number of free neighbours of $u$, and let $b=\Delta-d$. Then for any pair of colours $l,l'$,
 \begin{enumerate}
 \item\label{itm:kcol,nonleaf,angle} 
 the angle between vectors $Z_{L'l}^{W'u}(G)$ and $Z_{L'l'}^{W'u}(G)$ is at most $d \theta + b\eps$, and 
 \item\label{itm:kcol,nonleaf,length}
       \begin{itemeq}
          \frac{|Z_{L'l}^{W'u}(G)|}{|Z_{L'l'}^{W'u}(G)|}   \leq(1+\alpha/\Delta)^d(1+c/\Delta)^{\Delta-d}\,.
       \end{itemeq}
 \end{enumerate} 
\end{enumerate}
\end{restatable}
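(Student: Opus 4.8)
The plan is to prove Lemma~\ref{lem:kcol} by a single induction on the number of \emph{free} vertices $|V|-|W|$ of the restricted instance, establishing item~\ref{itm:kcol,nonzero} first at each level and then items~\ref{itm:kcol,leaf} and~\ref{itm:kcol,nonleaf} (within item~\ref{itm:kcol,nonleaf}, a secondary induction on the number $d$ of free neighbours of $u$). The constants $0<\eps<\theta<\pi/(3\Delta)$ and $\eta$ will be fixed at the very end, so that every estimate below closes simultaneously. In the base case $|V|=|W|$ one has $Z^W_L(G)=w^{s}$ with $s$ the number of satisfied constraints, which is nonzero on $\cN([1,w^*],\eta)$; items~\ref{itm:kcol,leaf} and~\ref{itm:kcol,nonleaf} are then vacuous or amount to the observation that re-colouring the single vertex $u$ flips at most $\deg(u)\le\Delta$ constraints, each flip scaling $Z^W_L(G)$ by $w^{\pm1}$.

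For the inductive step I would first do item~\ref{itm:kcol,nonzero}: pick a free vertex $u$ and write $Z^W_L(G)=\sum_{\ell\in[k]}Z^{Wu}_{L\ell}(G)$; each summand is nonzero and, by item~\ref{itm:kcol,nonleaf} at the previous level, any two of them subtend an angle at most $d\theta+b\eps\le\Delta\theta<\pi/3$ (using $\eps<\theta$ and $d+b=\Delta$). The elementary fact that vectors of $\CC$ with all pairwise angles below $\pi/3$ lie in a single cone of half-angle below $\pi/6$ then yields $|Z^W_L(G)|\ge\cos(\pi/6)\sum_\ell|Z^{Wu}_{L\ell}(G)|>0$. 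Item~\ref{itm:kcol,leaf,fixed} is direct: deleting the leaf $u$ with fixed neighbour $v$ coloured $\ell_v$ gives $Z^{W'u}_{L'l}(G)=w^{[\pi_{uv}(l)=\ell_v]}Z^{W'}_{L'}(G-u)$, so changing $l$ scales $|Z^{W'u}_{L'l}(G)|$ by $|w|^{\pm1}\le1+c/\Delta$ and rotates its argument by at most $|\arg w|\le\eps$.

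The heart of the argument is item~\ref{itm:kcol,leaf,free}. Splitting on the colour of the free neighbour $v$ of the leaf $u$ and deleting $u$ yields
\[
Z^{W'u}_{L'l}(G)=(w-1)\,Z^{W'v}_{L'\,\pi_{uv}(l)}(G-u)+Z^{W'}_{L'}(G-u)\,.
\]
Writing $Y_c:=Z^{W'v}_{L'c}(G-u)$ and $S:=Z^{W'}_{L'}(G-u)=\sum_cY_c$, and using that $l\mapsto\pi_{uv}(l)$ is a bijection, it suffices to compare $(w-1)Y_{c_1}+S$ with $(w-1)Y_{c_2}+S$ for arbitrary colours $c_1,c_2$; the difference of these two quantities is exactly $(w-1)(Y_{c_1}-Y_{c_2})$. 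Now apply items~\ref{itm:kcol,nonzero} and~\ref{itm:kcol,nonleaf} to $G-u$ (which has one fewer free vertex once $v$ is fixed): the $Y_c$ are nonzero; any two subtend an angle at most $d_v\theta+(\Delta-d_v)\eps$, where $d_v$ is the number of free neighbours of $v$; and any two have modulus ratio at most $(1+\alpha/\Delta)^{d_v}(1+c/\Delta)^{\Delta-d_v}$. From the angle bound, $|S|$ is within a constant factor of $\sum_c|Y_c|$; from the modulus bound together with $S=\sum_cY_c$, no single $|Y_c|$ can carry more than a definite fraction of $|S|$. Since $|w-1|\le c/\Delta+\eta$, this controls $(w-1)(Y_{c_1}-Y_{c_2})$ relative to $(w-1)Y_{c_2}+S\approx S$, and delivers both the modulus bound $1+\alpha/\Delta$ and the angle bound $\theta$. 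This is where the gain lives: fixing a \emph{free} neighbour averages out the per-constraint effect, damping the growth rate $c=\log k-1$ down to $\alpha=\tfrac12\log k-1$, whereas item~\ref{itm:kcol,leaf,fixed} (fixing a leaf next to an already fixed vertex) costs the full $1+c/\Delta$.

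Finally, item~\ref{itm:kcol,nonleaf} follows by factoring $Z^{W'u}_{L'l}(G)=w^{t(l)}\widehat Z_l$, where $t(l)\le b$ counts the satisfied constraints from $u$ to its fixed neighbours (this contributes the $(1+c/\Delta)^{b}$ and $b\eps$ terms), and peeling the $d$ free neighbours of $u$ off $\widehat Z_l$ one at a time: each peel writes the current quantity as a sum over a fixed neighbour's colour of quantities with one fewer free neighbour of $u$, to which the previous-level hypotheses and the $\pi/3$-cone fact apply (preventing cancellation in the sum), costing a factor $1+\alpha/\Delta$ and angle $\theta$ exactly as in item~\ref{itm:kcol,leaf,free}; the secondary induction bottoms out at $d=0$, where $\widehat Z_l=Z^{W'}_{L'}(G-u)$ is independent of $l$ and nonzero by item~\ref{itm:kcol,nonzero} at the current level. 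I expect the main obstacle to be making the quantitative part of item~\ref{itm:kcol,leaf,free} genuinely self-improving: one must pick $\eps,\theta,\eta$ so that, at once, (i) the accumulated angular spread never reaches $\pi/3$, so that every ``sum of nearly aligned vectors'' step is legitimate; (ii) the two pessimistic regimes for the $Y_c$ — spread out in \emph{modulus} versus in \emph{argument} — cannot occur together (many free neighbours of $v$ force the $Y_c$ to be balanced in modulus), so that the product of ``largest mass fraction of $|S|$'' and ``angular spread'' stays small and the angle bound reproduces itself rather than blowing up with $k$; and (iii) the $\eta$-thickening of $[1,w^*]$ — which must be $\Omega(1/(\Delta\log k))$ and independent of $n$ for the polynomial interpolation of Theorem~\ref{thm:alg,interp} — is absorbed into the slack between $c$ and $\alpha$. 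Verifying this simultaneous self-improvement, tracking moduli and arguments carefully through the two nested inductions, is the delicate part.
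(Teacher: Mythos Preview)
Your overall inductive architecture --- primary induction on the number of free vertices, proving item~\ref{itm:kcol,nonzero} first via item~\ref{itm:kcol,nonleaf} at the previous level and the cone lemma, handling item~\ref{itm:kcol,leaf,fixed} by a direct $w^{\pm1}$ computation, and attacking item~\ref{itm:kcol,leaf,free} by summing over the free neighbour's colour --- is exactly the paper's. Your identity $Z^{W'u}_{L'l}(G)=(w-1)Y_{\pi_{uv}(l)}+S$ with $S=\sum_c Y_c$ is equivalent to the paper's formulation: dividing through by $Y_{c_2}$ turns the ratio $Z^{W'u}_{L'l}/Z^{W'u}_{L'l'}$ into the rational map
\[
R_k(z_0,\ldots,z_{k-2};w)=\frac{wz_0+z_1+\cdots+z_{k-2}+1}{z_0+z_1+\cdots+z_{k-2}+w},
\]
whose inputs are the ratios $Y_{c}/Y_{c_2}$, which by item~\ref{itm:kcol,nonleaf} at the previous level lie in a controlled region. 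The ``delicate part'' you flag is precisely Lemma~\ref{lem:forward invariant}: showing that $R_k$ maps this region into the smaller target $\{|z|\le 1+\alpha/\Delta,\ |\arg z|\le\theta\}$. The paper isolates this as a separate analytic lemma and proves it by reducing to three explicit two-variable optimisation problems over a compact region; your heuristic that ``many free neighbours force the $Y_c$ to be balanced in modulus, so mass-fraction times angular-spread stays small'' is the right intuition, but you should expect this step to require genuine computation rather than soft estimates.

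There is, however, a real gap in your treatment of item~\ref{itm:kcol,nonleaf}. Your secondary induction on $d$ proceeds by fixing a free neighbour $v_1$ of $u$ (summing over its colour), which turns $v_1$ into a \emph{fixed} neighbour in each summand. But then, by the secondary hypothesis, the ratio $Z^{W'uv_1}_{L'lc_1}/Z^{W'uv_1}_{L'l'c_1}$ is bounded by $(1+\alpha/\Delta)^{d-1}(1+c/\Delta)^{b+1}$, not $(1+\alpha/\Delta)^{d}(1+c/\Delta)^{b}$; since $\alpha<c$, the bound \emph{degrades} at each peel, and summing over $c_1$ via the cone lemma cannot recover the loss. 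The analogy with item~\ref{itm:kcol,leaf,free} breaks because there $u$ was a leaf and its $l$-dependence factored cleanly through the single edge; here $u$ still sees $v_2,\ldots,v_d$, so the $(w-1)Y+S$ decomposition does not separate the $l$-dependence. The paper instead uses a ``blow-up'' construction $G\star u$: replace $u$ by $\deg(u)$ leaf copies $u_1,\ldots,u_r$, each attached to one neighbour $v_i$ and each inheriting $u$'s colour, so that $Z^{W'u}_{L'l}(G)=Z^{W'u_1\cdots u_r}_{L'l\cdots l}(G\star u)$. One then telescopes by changing the colour of one leaf $u_i$ at a time from $l$ to $l'$; each factor is exactly a ratio of the type in item~\ref{itm:kcol,leaf} (at the same level of the primary induction, just proved), and costs $(1+\alpha/\Delta)$ or $(1+c/\Delta)$ according to whether $v_i$ is free or fixed. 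This is what makes the accounting come out to $(1+\alpha/\Delta)^d(1+c/\Delta)^{\Delta-d}$.
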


Note that Statement~\ref{itm:kcol,nonzero} with $W=L=\emptyset$ is the result we want for Theorem~\ref{thm:zero-free}, Statement~\ref{itm:kcol,leaf} shows that changing the fixed color of a leaf (degree 1) vertex $u$ affects the angle and length of the restricted partition function by a small amount (depending on whether the neighbour of $u$ is itself free or fixed), and Statement~\ref{itm:kcol,nonleaf} is a similar but weaker version for any vertex. 

\subsection{Proof of Theorem~\ref{thm:zero-free}}\label{app:zero-free,details}

Lemma~\ref{lem:kcol} directly implies Theorem~\ref{thm:zero-free}, and we give the proof in this section.
%
First, we state a lemma of Barvinok which is useful for evaluating sums of restricted partition functions.
\begin{lemma}[{Barvinok~\cite[Lemma 3.6.3]{Barvinok16}}]
 \label{lem:cone}
 Let $u_1, \ldots, u_n \in \RR^2$ be non-zero vectors such that the angle between any two vectors $u_i$ and $u_j$ is at most $\alpha$ for some $\alpha \in \interval[co]{0}{2\pi/3}$.
 Then the $u_i$ all lie in a cone of angle at most $\alpha$ and
 \[
 \bigg| \sum_{i=1}^n u_i \bigg| \geq \cos(\alpha/2) \sum_{i=1}^n |u_i|.
 \]
\end{lemma}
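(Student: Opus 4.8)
The plan is to establish the two assertions separately: first that the vectors lie in a cone of angle at most $\alpha$, and then that the norm lower bound follows by projecting everything onto the axis of that cone. This is the argument of Barvinok~\cite{Barvinok16}, which I reproduce for completeness; alternatively one could simply cite it.

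First I would prove the cone claim. Rotate the plane so that $u_1$ has argument $0$; since every $u_i$ makes an angle of at most $\alpha$ with $u_1$, each $u_i$ has an argument $\theta_i\in[-\alpha,\alpha]$. Put $\beta=\max_i\theta_i$ and $\gamma=\min_i\theta_i$, so $\beta\ge 0\ge\gamma$ (as $\theta_1=0$) and $\beta-\gamma\le 2\alpha<4\pi/3$. I claim $\beta-\gamma\le\alpha$. Suppose not, and let $u_p,u_q$ achieve $\theta_p=\beta$ and $\theta_q=\gamma$; the angle between $u_p$ and $u_q$ is $\min\bigl(\beta-\gamma,\,2\pi-(\beta-\gamma)\bigr)$. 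A short case split on whether $\beta-\gamma$ lies in $(\alpha,\pi]$ or in $(\pi,4\pi/3)$ shows that this angle is strictly greater than $\alpha$ in every case — this is exactly where the strict hypothesis $\alpha<2\pi/3$ is used — contradicting the assumption on pairwise angles. Hence all $\theta_i$ lie in $[\gamma,\beta]$, a cone of angular width at most $\alpha$.

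For the norm bound, let $e$ be the unit vector with argument $(\beta+\gamma)/2$, the bisector of the cone just found. Each $u_i$ then makes an angle of at most $(\beta-\gamma)/2\le\alpha/2<\pi/3$ with $e$, so since $\cos$ is decreasing on $[0,\pi]$ we get $\langle u_i,e\rangle\ge|u_i|\cos(\alpha/2)>0$. Summing over $i$ and using $|v|\ge\langle v,e\rangle$ for any $v$ (Cauchy--Schwarz with the unit vector $e$),
\[
\Bigl|\sum_{i=1}^n u_i\Bigr|\;\ge\;\Bigl\langle\sum_{i=1}^n u_i,\,e\Bigr\rangle\;=\;\sum_{i=1}^n\langle u_i,e\rangle\;\ge\;\cos(\alpha/2)\sum_{i=1}^n|u_i|\,,
\]
which is the desired inequality. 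The only step requiring genuine care is the case analysis in the cone claim; the example of the three cube roots of unity (pairwise angles all equal to $2\pi/3$, yet not contained in any cone of angle below $4\pi/3$) shows that the strictness of $\alpha<2\pi/3$ cannot be dropped, so everything hinges on handling that endpoint correctly. The rest is routine planar geometry.
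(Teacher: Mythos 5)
Your proof is correct: the case split establishing the cone claim (using $\beta-\gamma\le 2\alpha<4\pi/3$ and the strictness of $\alpha<2\pi/3$) and the projection onto the bisector both check out, and this is exactly the standard argument of Barvinok. Note that the paper itself gives no proof of this lemma---it is quoted verbatim as Barvinok's Lemma 3.6.3---so citing it, as you suggest, is all that is required there.
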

\noindent
Furthermore the following simple corollary of of the cosine rule will come in handy.
\begin{lemma}
\label{lem:smallanglediff}
 Let $z, z'$ be two complex numbers at an angle of at most $\pi/3$, then $|z-z'| \leq \max \{ |z|, |z'| \}$.
\end{lemma}
\begin{proof}
 Recall the cosine rule, for a triangle with sides $a$, $b$ and $c$; and angles $A$, $B$ and $C$ where side $a$ is not adjacent to angle $A$, then
 \[
 |a|^2 = |b|^2 + |c|^2 -2|b||c| \cos(A)\,,
 \]
 where $|a|$ is the length of side $a$.
 Now consider the triangle with vertices in $\CC$ at the origin, $z$ and $z'$.
 The sides have length $|z|, |z'|$ and $|z-z'|$ and the angle at the origin is the angle $\theta \leq \pi/3$ between $z$ and $z'$.
 As $\cos(x) \geq 1/2$ for $x \leq \pi/3$,
 \[
 |z-z'|^2 \leq |z|^2+|z'|^2-|z||z'| \leq \max\{|z|^2,|z'|^2\}\,.\qedhere
 \]
\end{proof}
 
To prove Lemma~\ref{lem:kcol} we need some definitions and an auxiliary lemma. 
We define rational functions (which depend on $k$ and $w$) in two variables $z_0,z$ and respectively $k-1$ variables $z_0,\ldots,z_{k-2}$ by
\begin{align*}
 R(z_0,z;w,k) &:= \frac{wz_0+(k-2)z+1}{z_0+(k-2)z+w}\,,\\
 R_k(z_0,z_1,\ldots,z_{k-2};w) &:= \frac{wz_0+z_1+\ldots+z_{k-2}+1}{z_0+z_1+\ldots+z_{z-2}+w}\,.
\end{align*}
Consider the cone
\[
C(\theta):=\{z=re^{i\vartheta}\mid r\geq 0 \text{ and } |\vartheta|\leq \theta\},
\]
and define for $d=0,\dotsc,\Delta$ and $c,\alpha> 0$, the region
\[
K(\theta,d,c,\alpha,\eps):=C(d\theta+\Delta-\eps)\cap \left\{z: \left(1+\frac{c}{\Delta}\right)^{d-\Delta}\left(1+\frac{\alpha}{\Delta}\right)^{d}\leq |z|\le\left(1+\frac{c}{\Delta}\right)^{\Delta-d}\left(1+\frac{\alpha}{\Delta}\right)^{d}\right\}\,.
\]

\begin{lemma}
\label{lem:forward invariant}
Let $\Delta\in \mathbb{N}_{\geq 3}$ and let $k\in \mathbb{N}_{\geq 3}$. Define $c=\log k-1$ and $\alpha=\log k^{1/2}-1$. Then there exists $0<\eps<\theta<\pi/(3\Delta)$ and $\eta=\omega(\frac{1}{\Delta})$
such that for each $d=0,\ldots,\Delta$, and any $z_0,\ldots,z_{k-2}\in K_d:=K(\theta,d,c,\alpha,\eps)$ such that for each $i,j$, $z_i/z_j\in K_d$ and any  $w\in \mathcal{N}([1,1+c/\Delta],\eta)$ the ratio $R=R_k(z_0,z_1,\ldots,z_{k-2};w)$ satisfies
\begin{align}
(1+\alpha/\Delta)^{-1}<|R|&< 1+\alpha/\Delta\quad \text{ and }\quad  |\arg(R)|< \theta.\label{eq:ratio bound arg}
\end{align}
In particular the values $\theta = 1/(5 \Delta)$, $\eps = \theta / [100 \log k]$ and $\eta = \min\{\Delta c / [800(\Delta + \alpha)^2], 1/[2400(\Delta+ \alpha)], c / [800 \Delta]\}$ suffice.
\end{lemma}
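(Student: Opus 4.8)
The plan is to derive the two required estimates on $R_k$ from explicit arithmetic identities together with Barvinok's cone lemma (Lemma~\ref{lem:cone}). First, a remark on why this lemma is the crux of Lemma~\ref{lem:kcol}: $R_k$ is precisely the ratio $Z^{W'u}_{L'l}(G)/Z^{W'u}_{L'l'}(G)$ that arises when $u$ is a leaf with a free neighbour $v$ — expanding over the colour of $v$ and dividing by $Z^{W'v}_{L'\ell}(G-u)$ for a reference colour $\ell$ rewrites it as $R_k(z_0,z_1,\dots,z_{k-2};w)$ with $z_0$ the colour ratio for $\pi_{uv}(l)$, the $z_i$ the remaining colour ratios, and all of $z_0,\dots,z_{k-2}$ and their pairwise ratios lying in $K_d$ by the induction hypothesis, $d$ being the number of free neighbours of $v$ in $G-u$. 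Write $D := z_0+z_1+\dots+z_{k-2}+w$, $\rho_{\min,d} := (1+c/\Delta)^{d-\Delta}(1+\alpha/\Delta)^d$ and $\rho_{\max,d} := (1+c/\Delta)^{\Delta-d}(1+\alpha/\Delta)^d$ for the two ends of the annulus defining $K_d$; clearing denominators gives the identities
\[
R_k - 1 = \frac{(w-1)(z_0-1)}{D}, \qquad
R_k - w = \frac{(1-w)\bigl(z_1+\dots+z_{k-2}+w+1\bigr)}{D}.
\]

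For the modulus bound I would bound $|D|$ from below. All $k+1$ summands of $D$ lie in a common cone of opening that can be kept below $\pi/3$ (each of $z_0,z_1,\dots,z_{k-2}$ lies in $C(d\theta+(\Delta-d)\eps)\subseteq C(\Delta\theta)$, and $w$ lies within $\eta$ of the positive reals), so Lemma~\ref{lem:cone} gives $|D|\ge\cos(\cdot)\bigl(|z_0|+\sum_i|z_i|+|w|\bigr)$; moreover $|z_i|\ge\rho_{\min,d}$ and $|z_i|\ge\rho_{\min,d}|z_0|$ (the latter from $z_i/z_0\in K_d$), so $|D|$ is at least a fixed constant times $(k-2)\,\rho_{\min,d}\max\{1,|z_0|\}$. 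The numerator $|w-1|\,|z_0-1|\le(c/\Delta+\eta)\,|z_0-1|$ is dominated by this: from $|z_0-1|^2=(|z_0|-1)^2+2|z_0|\bigl(1-\cos\arg z_0\bigr)$ one has $|z_0-1|\le\max\{1,|z_0|\}$ up to an $O(\theta^2)$ term, and combining with $\rho_{\min,d}\le|z_0|\le\rho_{\max,d}$, a short case split on whether $|z_0|\le1$ or $|z_0|\ge1$ makes $|R_k-1|$ small enough to give both $|R_k|<1+\alpha/\Delta$ and $|R_k|>(1+\alpha/\Delta)^{-1}$. (Equivalently, one may first write $z_1+\dots+z_{k-2}=(k-2)\bar z$ with $\bar z$ the average; by Lemma~\ref{lem:cone} this $\bar z$, and $\bar z/z_0$, lie in a mild dilation of $K_d$, reducing everything to the two-variable function $R$.)

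For the argument bound I would write $R_k=1+\zeta$ with $\zeta=(w-1)(z_0-1)/D$; then $|\arg R_k|\le\arcsin|\zeta|$, so it is enough to show $|\zeta|\le\sin\theta$. This is a sharpening of the modulus estimate: the $(k-2)$ equally-sized terms in $D$ contribute a factor $1/(k-2)$ to $|\zeta|$, while $\arg w$ and the spread of the $z_i$ enter only through $\eta$ and $\eps$. Making $|\zeta|$ beat $\sin\theta$ is exactly what forces $\theta=1/(5\Delta)$ together with $\eps,\eta$ polylogarithmically (in $k$) smaller, namely $\eps=\theta/[100\log k]$ and $\eta$ as stated; the routine small-angle bounds for $\cos$ and $\arcsin$, and where handy Lemma~\ref{lem:smallanglediff}, turn this into rigorous inequalities.

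The hard part will be calibrating $\theta$, $\eps$ and $\eta$ simultaneously: the modulus bound fixes roughly how large $\theta$ may be (through the width $\alpha/\Delta$ of the target annulus), the argument bound then forces $\eps$ and $\eta$ far smaller, and every small-angle approximation — plus the $\cos(\cdot)$ loss in Lemma~\ref{lem:cone} — has to be carried with explicit constants so that all the strict inequalities close together. With the above structure in place, checking that $\theta=1/(5\Delta)$, $\eps=\theta/[100\log k]$ and $\eta=\min\{\Delta c/[800(\Delta+\alpha)^2],\,1/[2400(\Delta+\alpha)],\,c/[800\Delta]\}$ work is a lengthy but mechanical computation, organised by the value of $d$ (the extremes $d=0$ and $d=\Delta$) and the size of $|z_0|$.
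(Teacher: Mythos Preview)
Your modulus argument is essentially the paper's: write $R_k-1=(w-1)(z_0-1)/D$, lower-bound $|D|$ by Lemma~\ref{lem:cone}, and upper-bound $|z_0-1|$ via Lemma~\ref{lem:smallanglediff}. Fine.

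The argument bound, however, has a real gap. You propose to bound $|\arg R_k|\le\arcsin|\zeta|$ and then show $|\zeta|\le\sin\theta$. But your own modulus calculation already tells you how large $|\zeta|$ is: with $|w-1|\approx c/\Delta$, $|z_0-1|\le 1$, and $|D|\gtrsim (k-1)\rho_{\min,d}\ge(k-1)e^{-c}\approx e$, you get $|\zeta|$ of order $c/(e\Delta)\approx(\log k)/(e\Delta)$. Since $\theta=1/(5\Delta)$, the inequality $|\zeta|\le\sin\theta$ fails as soon as $k$ is moderately large. The ``$1/(k-2)$ factor'' you mention is exactly cancelled by the small lower bound $\rho_{\min,d}\approx e/k$ on the $|z_i|$, so it does not rescue the estimate.

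What you are missing is that $|\arg(1+\zeta)|$ is governed by $\Im\zeta$, not by $|\zeta|$, and $\Im\zeta$ is genuinely smaller: it picks up an extra factor from the small arguments of $z_0$ and of $D$ (equivalently, sines of angles of size $d\theta+(\Delta-d)\eps$). The paper exploits this by reducing to the two-variable ratio $R(z_0,z;w,k)$, computing $\Re R$ and $\Im R$ explicitly (its Lemma~\ref{lem:reimparts}), and then bounding $|\Im R/\Re R|$. The resulting expression has the angle factors in the numerator and leads to a family of explicit two-variable optimisation problems over the region $K_d$; it is precisely this optimisation (the functions $f_1,f_2,f_3$ and their eighteen boundary restrictions) that produces the constants $\theta=1/(5\Delta)$ and $\eps=\theta/[100\log k]$. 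Without separating real and imaginary parts you cannot close the argument bound. The extension from real $w$ to $w\in\mathcal N([1,1+c/\Delta],\eta)$ is then handled by a separate Lipschitz estimate $|R_k(w+\eta)-R_k(w)|\le 8|\eta|$, which is where the stated value of $\eta$ comes from.
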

\noindent
We will prove this lemma in the next subsection, but we first utilize it to prove Lemma~\ref{lem:kcol}, which we restate here for convenience.

\zerofreeinduction*

\begin{proof}
The choice of constants is the same as in Lemma~\ref{lem:forward invariant} except that we need to choose $\eta$ small enough so that each $w\in \mathcal{N}([1,1+c/\Delta],\eta)$ has argument at most $\eps$. It thus suffices to take $\eta=\omega(\frac{1}{\Delta\log(k)})$.

We prove the lemma by induction on the number of free vertices of $G$.
For the base case, we have no free vertices and so every vertex is fixed.
Therefore $Z_L^W(G)$ is a product of non-zero terms, hence is non-zero, proving~\ref{itm:kcol,nonzero}.
Statement \ref{itm:kcol,leaf}\ref{itm:kcol,leaf,free} is vacuous as there are no free vertices.
Statement \ref{itm:kcol,leaf}\ref{itm:kcol,leaf,fixed} follows as the products $Z_{L'l}^{W'u}(G)$ and $Z_{L'l'}^{W'u}(G)$ differ in at most one term.
Thus their ratio is either $1$, $w$ or $w^{-1}$.
Similarly we deduce Statement~\ref{itm:kcol,nonleaf} (in which $d$ must be zero) from the fact that the products $Z_{L'l}^{W'u}(G)$ and $Z_{L'l'}^{W'u}(G)$ differ in at most $\Delta$ terms.

Now, we assume that Statements~\ref{itm:kcol,nonzero}, \ref{itm:kcol,leaf}, and~\ref{itm:kcol,nonleaf} hold for graphs with $r \ge 0$ free vertices. We prove the statements for $r+1$ free vertices. First, we shall prove~\ref{itm:kcol,nonzero}.

Suppose that $u$ is a free vertex.
Note that $Z_L^W(G) = \sum_{j=1}^k Z_{Lj}^{Wu}(G)$.
As each term in the sum on the right hand side of this expression has one fewer free vertex, we may apply induction to deduce that all of these terms are non-zero by \ref{itm:kcol,nonzero}.
Furthermore, by \ref{itm:kcol,nonleaf} each pair has angle at most $d\theta +(\Delta-d)\eps$ where $d$ is the number of free neighbours of $u$.
Lemma~\ref{lem:cone} tells us that the $Z_{Lj}^{Wu}$ all lie in a cone of angle at most $d\theta+(\Delta-d)\eps$ and
\begin{equation*}
 |Z_L^W(G)| = \bigg|\sum_{j=1}^k Z_{Lj}^{Wu}(G) \bigg| \geq \cos(d\theta/2+(\Delta-d)\eps/2)\sum_{j=1}^k |Z_{Lj}^{Wu}(G)| \neq 0\,.
\end{equation*}

Next, we shall prove~\ref{itm:kcol,leaf}\ref{itm:kcol,leaf,free} so consider the ratios,
\begin{align*}
 R_{j,l}(G) = \frac{Z_{L'j}^{W'u}(G)}{Z_{L'\ell}^{W'u}(G)}, & & R_{j,\ell}^v(G) = \frac{Z_{L'j}^{W'v}(G- u)}{Z_{L'\ell}^{W'v}(G - u)}.
\end{align*}
As $v$ is the unique neighbour of $u$ and is free, we may write, denoting $j^*$ for $\pi_{uv}(j)$ and $\ell^*$ for $\pi_{uv}(\ell)$,
\[
 R_{j,l}(G) = \frac{\sum_{i}Z^{Wuv}_{Lji}(G)}{\sum_{i}Z^{Wuv}_{L\ell i}(G)}
 =\frac{wZ^{Wv}_{Lj^*}(G-u)\sum_{i\notin\{ j^*,\ell^*\}}Z^{Wv}_{Li}(G-u)+Z^{Wv}_{L\ell^*}(G-u)}{Z^{Wv}_{Lj^*}(G-u)+\sum_{i\notin\{ j^*,\ell^*\}}Z^{Wv}_{Li}(G-u)+wZ^{Wv}_{L\ell^*}(G-u)}.
 \]
 Dividing both the numerator and denominator by $Z^{Wv}_{L\ell^*}(G-u)$ (which by induction is nonzero) we obtain,
 \begin{equation}
 \dfrac{w R_{j^*,\ell^*}^v(G) + \sum_{i \neq j^*, \ell^*} R_{i,\ell^*}^v(G) +1}{R_{j^*,\ell^*}^v(G) +\sum_{i \neq j^*,\ell^*} R_{i,l}^v(G)+w} = R_k(R_{j^*,\ell^*}^v(G), R_{1,\ell^*}^v(G), \dotsc, R_{k,\ell^*}^v(G);w)\,.
 \label{eq:recurrence}
\end{equation}
Where the function $R_k$ in~(\ref{eq:recurrence}) takes as arguments all $R^v_{i,\ell^*}(G)$ for $i \neq \ell^*$ precisely once (and so takes precisely $k-1$ arguments as expected.)

Suppose that $v$ has $d$ free neighbours that are not $u$.
Since $G -u$ has one fewer free vertex than $G$, we may apply the inductive hypothesis.
By \ref{itm:kcol,nonleaf} we find that for any $i\neq \ell^*$, we have $R^v_{i,\ell}(G)\in K_d$.
However, we also have that for any $i,j\neq \ell^*$, that 
\[
\frac{R^v_{i,\ell^*}(G)}{R^v_{j,\ell^*}(G)}=\frac{Z_{L'i}^{W'v}(G- u)}{Z_{L'j}^{W'v}(G- u)}=R^v_{i,j}(G)\in K_d.
\]
To prove~\ref{itm:kcol,leaf}\ref{itm:kcol,leaf,free}\ref{itm:kcol,leaf,free,angle}, observe that the angle between $Z_{L'j}^{W'u}$ and $Z_{L'l}^{W'u}$ is precisely the angle of $R_{j,l}(G)$ from the real axis in $\CC$ and so is bounded by the absolute value of the argument of $R_{j,l}(G)$, which by Lemma~\ref{lem:forward invariant} bounded by $\theta$ as desired.
Statement~\ref{itm:kcol,leaf}\ref{itm:kcol,leaf,free}\ref{itm:kcol,leaf,free,length} also follows immediately from Lemma~\ref{lem:forward invariant}.

For the proof of~\ref{itm:kcol,leaf}\ref{itm:kcol,leaf,fixed}, we note that as $v$ is fixed, then 
$$
Z_{L'j}^{W'u}(G) \in \{ w^{-1} Z_{L'l}^{W'u}(G),Z_{L'l}^{W'u}(G),wZ_{L'l}^{W'u}(G) \}
$$
from which both~\ref{itm:kcol,leaf,fixed,angle} and~\ref{itm:kcol,leaf,fixed,length} follow.

Finally, we prove \ref{itm:kcol,nonleaf}.
To do so we consider the graph $G \star u$ which is formed as follows.
Let $v_1, \ldots, v_r$ be the neighbours of $u$ ordered arbitrarily.
Let $u_1, \ldots, u_r$ be $r$ new vertices which will be copies of $u$.
Then $G \star u$ is the graph obtained by deleting $u$ and its incident edges, adding the vertices $u_1, \ldots, u_r$ and edges $u_1v_1, \ldots, u_rv_r$.
Furthermore, $G \star u$ inherits any colouring of $G$ and if $u$ is coloured, all of the new vertices inherit this colour.
Note that if $u$ is coloured, then the graph $G \star u$ has the same partition function as $G$.
Also, in this case $G \star u$ has the same number of free vertices as $G$.
This allows us to prove \ref{itm:kcol,nonleaf} from \ref{itm:kcol,leaf} by changing the colour of one copy of $u$ at a time.
That is,
\begin{equation}
 \frac{Z_{L'j}^{W'u}(G)}{Z_{L'l}^{W'u}(G)} = \frac{Z_{L'j\ldots j}^{W'u_1\ldots u_r}(G\star u )}{Z_{L'l\ldots l}^{W'u_1\ldots u_r}(G\star u )} = \prod_{i=1}^r \frac{Z_{L'j\ldots j l \ldots l}^{W'u_1\ldots u_{i-1}u_i \ldots u_r}(G\star u )}{Z_{L'j\ldots j l \ldots l}^{W'u_1\ldots u_i u_{i+1} \ldots u_r}(G\star u )} \label{eq:telescoping}
\end{equation}
By \ref{itm:kcol,leaf} each of the terms in the product in~(\ref{eq:telescoping}) has angle at most $\theta$ and absolute value at most $1 +\alpha/\Delta$ (if $u_i$ is free) or angle at most $\eps$ and absolute value at most $1 +c/\Delta$ (if $u_i$ is fixed).
As $u$ has $d$ free neighbours and at most $\Delta -d$ fixed neighbours, this allows us to conclude~\ref{itm:kcol,nonleaf}\ref{itm:kcol,nonleaf,angle} and \ref{itm:kcol,nonleaf}\ref{itm:kcol,nonleaf,length}, completing the induction.
\end{proof}

\subsection{Proof of Lemma~\ref{lem:forward invariant}}
\label{sec:boundcalpha}

We will require a technical lemma which concerns the real and imaginary parts of the ratios $R(z_1,z_2;w,k)$.
\begin{lemma}
\label{lem:reimparts}
 Let $z_1, z_2 \in \CC$ be defined as $z_1 = xe^{i\thx}$, $z_2 = ye^{i\thy}$ with $x,y \in \RR^+$ and $\thx,\thy \in \interval[co]{0}{2\pi}$ and suppose $w \in[1, 1 + \frac{c}{\Delta}]$ is real.
 Then, the real and imaginary parts of $R(z_1,z_2;w,k)$ are as follows where $N$ is a nonzero constant,
 \begin{align}
 \label{eq:realR} \Re(R(z_1,z_2;w,k)) &= N(wx^2+(w+1)(k-2)xy\cos(\thx-\thy) +(k-2)^2y^2 \\ &\qquad+(w^2+1)x\cos(\thx)+(w+1)(k-2)y\cos(\thy) +w), \nonumber\\
 \label{eq:imagR} \Im(R(z_1,z_2;w,k)) &= N (w-1)((k-2)xy\sin(\thx-\thy) +(1+w)x\sin(\thx) +(k-2)y\sin(\thy)).
 \end{align}
\end{lemma}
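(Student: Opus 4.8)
The plan is to compute $R(z_1,z_2;w,k)$ directly from its definition by rationalizing, i.e.\ multiplying numerator and denominator by the complex conjugate of the denominator, and then separately extracting real and imaginary parts. Recall that
\[
R(z_0,z;w,k) = \frac{wz_0+(k-2)z+1}{z_0+(k-2)z+w}\,,
\]
so writing $z_1 = xe^{i\thx}$ for the first argument and $z_2 = ye^{i\thy}$ for the second, the numerator is $P := w z_1 + (k-2)z_2 + 1$ and the denominator is $Q := z_1 + (k-2)z_2 + w$. Then $R = P\overline{Q}/|Q|^2$, so $N := 1/|Q|^2$ is the claimed nonzero constant (nonzero because $Q\neq 0$ for the range of parameters considered — this is where one uses that $w\geq 1$ is real and $z_1,z_2$ lie in a narrow cone, so $Q$ cannot vanish; strictly, for the statement as phrased one only needs $N$ finite and nonzero, which holds whenever $Q\neq0$). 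The content of the lemma is then purely the algebraic identity for $\Re(P\overline Q)$ and $\Im(P\overline Q)$.

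First I would expand $P\overline{Q}$ as a sum of nine products coming from the three terms of $P$ and the three (conjugated) terms of $Q$:
\[
P\overline Q = \bigl(wz_1 + (k-2)z_2 + 1\bigr)\bigl(\bar z_1 + (k-2)\bar z_2 + w\bigr)\,.
\]
The diagonal-type terms $wz_1\bar z_1 = wx^2$ and $(k-2)^2 z_2\bar z_2 = (k-2)^2 y^2$ are real and contribute only to $\Re$. The cross terms $w(k-2)z_1\bar z_2 + (k-2)z_2\bar z_1$ combine: using $z_1\bar z_2 = xy\,e^{i(\thx-\thy)}$ and $z_2\bar z_1 = xy\,e^{-i(\thx-\thy)}$, the real part gives $(w+1)(k-2)xy\cos(\thx-\thy)$ and the imaginary part gives $(w-1)(k-2)xy\sin(\thx-\thy)$. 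The terms linear in $z_1$ or $\bar z_1$, namely $w^2 z_1 + \bar z_1$, contribute $(w^2+1)x\cos(\thx)$ to $\Re$ and $(w^2-1)x\sin(\thx) = (w-1)(w+1)x\sin(\thx)$ to $\Im$. The terms linear in $z_2$ or $\bar z_2$, namely $(k-2)w z_2 + (k-2)\bar z_2$, give $(w+1)(k-2)y\cos(\thy)$ to $\Re$ and $(w-1)(k-2)y\sin(\thy)$ to $\Im$. Finally the constant term $w$ is real. Collecting all pieces reproduces exactly \eqref{eq:realR} and \eqref{eq:imagR}, with the common factor $N=1/|Q|^2$ pulled out; in particular every imaginary contribution carries a factor $(w-1)$, which is why it factors out cleanly in \eqref{eq:imagR}.

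There is no real obstacle here — the lemma is a bookkeeping identity — so the only thing to be careful about is (i) correctly pairing conjugates so that $\cos$ appears in the real part and $\sin$ in the imaginary part with the right signs, and (ii) justifying $N\neq 0$, i.e.\ $Q = z_1 + (k-2)z_2 + w \neq 0$. For (ii) it suffices to note that in all later applications $z_1,z_2$ lie in a cone $C(\theta)$ about the positive real axis with $\theta$ small and $w\ge 1$, so $\Re(Q) \ge w > 0$; alternatively, since the lemma only asserts the formulas hold "where $N$ is a nonzero constant," one may simply restrict attention to parameter values with $Q\neq0$, which is automatic in the regime of Lemma~\ref{lem:forward invariant}. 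I would present the expansion compactly, grouping the nine terms as above, and state that the stated formulas follow by taking real and imaginary parts and using $z_j\bar z_j = |z_j|^2$, $z_1\bar z_2 + z_2\bar z_1 = 2xy\cos(\thx-\thy)$, and $z_1\bar z_2 - z_2\bar z_1 = 2ixy\sin(\thx-\thy)$.
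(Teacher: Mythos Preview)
Your proposal is correct and follows essentially the same approach as the paper: both rationalize the denominator by multiplying by $\overline{Q}$, expand the resulting product, and collect real and imaginary parts. Your use of complex-exponential bookkeeping (pairing $z_1\bar z_2$ with $z_2\bar z_1$, etc.) is a bit more streamlined than the paper's, which first writes $z_1,z_2$ in rectangular form and then invokes addition formulas for $\cos(\thx-\thy)$ and $\sin(\thx-\thy)$, but the underlying computation is identical.
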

\begin{remark} 
\label{rem:reimparts}
Set $\theta =\max(|\thx|,|\thy|, |\thx-\thy|)$ and assume $|\theta| \le 1$. Then as $|\sin t| \leq |t|$ and $\cos t \geq 1 - t^2/2$ for all $t$, and using $w \geq 1$ we obtain the following bounds:
 \begin{align*}
 \Re(R(z_1,z_2;w,k))  
 &\ge N(1 - \theta^2/2)(wx^2 + (w+1)(k-2)xy +(k-2)^2y^2  \\
&\qquad +(w^2+1)x +(w+1)(k-2)y +w) \\
 &\ge N(1 - \theta^2/2)(x+(k-2)y+w)(wx+(k-2)y+1);\\
 \text{and}\\
  \Im(R(z_1,z_2;w,k)) & \le N (w-1)((k-2)xy|\thx-\thy| +(1+w)x|\thx| +(k-2)y|\thy|).
 \end{align*} 
Hence
\begin{align}
 \bigg|\frac{\Im(R(z_1,z_2;w,k))}{\Re(R(z_1,z_2;w,k))}\bigg| \leq
 \frac{(w-1)\left((k-2)xy|\thx-\thy|+(1+w)x|\thx|+(k-2)y|\thy|\right)}
 {(1 - \frac{\theta^2}{2})(x+(k-2)y+w)(wx+(k-2)y+1)}. 
\label{eq:imreratio}
 \end{align}
\end{remark}

\begin{proof}
 We may write $z_1 = x\cos(\thx) +ix\sin(\thx)$ and $z_2 = y\cos(\thy) +iy\sin(\thy)$. Hence,
 \begin{align}
  \nonumber R(z_1,z_2;w,k) & = \frac{w(x\cos(\thx) +ix\sin(\thx)) +(k-2)(y\cos(\thy) +iy\sin(\thy)) +1}{x\cos(\thx) +ix\sin(\thx) +(k-2)(y\cos(\thy) +iy\sin(\thy)) +w} \\
  & = \frac{wx\cos(\thx) +(k-2)y\cos(\thy) +1 + i(wx\sin(\thx) + (k-2)y\sin(\thy))}{x\cos(\thx) +(k-2)y\cos(\thy) +w + i(x\sin(\thx) + (k-2)y\sin(\thy))} \label{eq:R(z1z2wk)1}
 \end{align}
Rationalising the denominator in~(\ref{eq:R(z1z2wk)1}), we obtain
\begin{align}
 \nonumber R(z_1,z_2;w,k) = & N^{-1}\left(wx\cos(\thx) +(k-2)y\cos(\thy) +1 + i(wx\sin(\thx) + (k-2)y\sin(\thy))\right)\\
 & \times \left(x\cos(\thx) +(k-2)y\cos(\thy) +w - i(x\sin(\thx) + (k-2)y\sin(\thy))\right) \label{eq:R(z1z2wk)2}
\end{align}
where $N = |x\cos(\thx) +(k-2)y\cos(\thy) +w + i(x\sin(\thx) + (k-2)y\sin(\thy))|^2$.
Expanding the expression in~(\ref{eq:R(z1z2wk)2}), the real and imaginary parts are given by the following expressions in which we write $c_x$ for $\cos(\thx)$ and similarly define $c_y,s_x$ and $s_y$ to simplify notation.
 \begin{align*}
  \Re(R(z_1,z_2;w,k)) &= N^{-1}(wx^2c_x^2+(w+1)(k-2)xyc_xc_y+(k-2)^2c_y^2 \\
  &\qquad + wx^2s_x^2 + (w+1)(k-2)xys_xs_y+(k-2)^2s_y^2\\
  &\qquad +(w^2+1)xc_x+(w+1)(k-2)yc_y+w)\\
  \Im(R(z_1,z_2;w,k)) &= N^{-1}((k-2)xy(c_xs_y+ws_xc_y) - (k-2)xy(wc_xs_y+s_xc_y)\\
  &\qquad + (w^2-1)xs_x+(w-1)(k-2)ys_y )
 \end{align*}
Combining these expressions with the trigonometric identities
\begin{gather*}
 \cos^2(\vartheta)+\sin^2(\vartheta) = 1 \\
 \sin(\alpha-\beta) = \sin(\alpha)\cos(\beta)-\sin(\beta)\cos(\beta) \\
 \cos(\alpha-\beta) = \cos(\alpha)\cos(\beta)+\sin(\alpha)\sin(\beta)
\end{gather*}
yields the expressions~(\ref{eq:realR}) and ~(\ref{eq:imagR}) as claimed.

By an application of the triangle law combined with an applications of the approximations, $|\sin\theta| \leq |\theta|$ and $\cos\theta \geq 1-\theta^2/2$, we obtain
\begin{align}
 |\Im(R(z_1,z_2;w,k))| &\leq N^{-1} (w-1)\left((k-2)xy|\thx-\thy|+(1+w)x|\thx|+(k-2)y|\thy|\right), \label{eq:Imbound} \\
 \Re(R(z_1,z_2;w,k)) &\geq N^{-1} \left((wx+(k-2)y+1)(x+(k-2)y+w) \right.\nonumber\\ 
&\qquad -\left. ((w+1)(k-2)(x+1)y+(w^2+1)x)\theta^2/2\right). \label{eq:Rebound1}
\end{align}
Dividing \eqref{eq:Imbound} by \eqref{eq:Rebound1}, noting that for $\theta$ small this is maximised when $w = 1+\frac{c}{\Delta}$ and regrouping some terms yields the bound \eqref{eq:imreratio}.
\end{proof}
\noindent
We can now give a proof of Lemma~\ref{lem:forward invariant}

\begin{proof}[Proof of Lemma~\ref{lem:forward invariant}]
We first prove a slightly stronger version of the lemma for $w'$ real.
That is, we will show that there exist a small constants $\kappa = c/100$ and $\kappa'=0.02$ such that
\begin{equation}\label{eq:real case}
(1+(\alpha-\kappa)/\Delta)^{-1}<|R|<1+(\alpha-\kappa)/\Delta \quad \text{ and}\quad |\arg(R)|< (1-\kappa')\theta.
\end{equation}
To do so we start by taking a constant $\delta$ small enough so that for all $k \geq 3$, $c$ and $\alpha$ satisfy the strict inequality 
\begin{equation}
 \frac{ce^c}{cos(\delta)(e^c+k-1)} < \alpha - \kappa; 
 \label{eq:calphabounds}
\end{equation}
for example $\delta = 1/2$ is sufficient.

Fix $d\in\{0,\ldots,\Delta\}$.
First we observe that we may assume that $|R|\geq 1$. 
Indeed, if $|R|<1$, then 
\[
1/R=\frac{z_0+\sum_{i=1}^{k-2}z_i+w}{wz_0+\sum_{i=1}^{k-2}z_i+1}=\frac{1+\sum_{i=1}^{k-2}z_i/z_0+w/z_0}{w+\sum_{i=1}^{k-2}z_i/z_0+1/z_0}
\]
and $|1/R|>1$.
Since for each $i,j\geq 0$, the pairs $z_i/z_0$ and $z_j/z_0$ also satisfy our assumptions this shows our claim.
We start by showing that $|R|$ is bounded by $1+(\alpha - \kappa)/\Delta$.

We observe that (setting $z = (z_1 + \cdots + z_{k-2})/k$)
\begin{equation}\label{eq:from 1}
|R| = |R(z_0,z;w,k)|=\bigg|1+\frac{(w-1)z_0+(1-w)}{z_0+\sum_{i=1}^{k-2}z_i+w}\bigg|\leq 1+\frac{\frac{c}{\Delta}|z_0-1|}{|z_0+\sum_{i=1}^{k-2}z_i+w|}.
\end{equation}
Lower bounding the denominator of~(\ref{eq:from 1}) may be done with an application of Barvinok's lemma.
For the numerator we apply Lemma~\ref{lem:smallanglediff} as the angle between $z_0$ and $1$ is certainly less than $\pi/3$. 
This allows us to deduce that
\[
 |R(z_0,z;w,k)| \leq 1 +\frac{\frac{c}{\Delta} \max\{|z_0|,1\}}{\cos(d\theta/2+(\Delta-d)\eps/2))(|z_0|+\sum_{i=1}^{k-2}|z_i|+1)}.
\]
We next observe that by symmetry we may assume that $|z_0|\leq 1$; otherwise we divide the numerator and the denominator by $z_0$.
To maximize the above quantity clearly one should take each $|z_i|$ as small as possible. 
So we take $|z_i| = (1+c/\Delta)^{d-\Delta}(1+\alpha/\Delta)^{-d} \geq (1+c/\Delta)^{-\Delta} \geq e^{-c}$ and noting that $\theta\leq \delta/\Delta$, we rearrange to deduce that 
\[
|R(z_0,z;w,k)| < 1+\frac{c / \Delta}{\cos(\delta)((k-1)e^{-c}+1)}< 1+(\alpha-\kappa)/\Delta
\]
by~(\ref{eq:calphabounds}). This proves the first bound in \eqref{eq:real case}.


For the other bound in \eqref{eq:real case}, recall that
 $z=\frac{1}{k-2}\sum_{i=1}^{k-2}z_j$ so that
 $R_k(z_0,z_1,\ldots,z_{k-2};w)=R(z_0,z;w,k)$. 
Note that $z\in C(d\theta+(\Delta-d)\eps)$ by convexity of the cone and so by Lemma~\ref{lem:cone} we have
\[
\cos(d\theta/2+(\Delta-d)\eps/2)(1+c/\Delta)^{d-\Delta}(1+\alpha/\Delta)^{-d}\leq |z|\leq (1+c/\Delta)^{\Delta-d}(1+\alpha/\Delta)^{d}.
\]
To prove the bound on the argument of $R(z_0,z;w,k)$ we use the inequality, $|\beta|\leq |\tan(\beta)|$. It therefore suffices to bound the ratio $\frac{|\Im R(z_0,z;w,k)|}{|\Re R(z_0,z;w,k)|} = \tan(\arg(R(z_0,z;w,k)))$, which by Lemma~\ref{lem:reimparts} and Remark~\ref{rem:reimparts} is bounded by
 \begin{equation}\label{eq:bound arg}
  \frac{(w-1)\left((k-2)|z_0z||\theta_0-\theta_z|+(1+w)|z_0 \theta_0|+(k-2)|z\theta_z|\right)}
 {(1 - \frac{\theta^2}{2})(|z_0|+(k-2)|z|+w)(w|z_0|+(k-2)|z|+1)}.  
\end{equation}
Now suppose we can prove that
\begin{equation}\label{eq:strict angle}
  \frac{\left((k-2)|z_0z||\theta_0-\theta_z|+(1+w)|z_0 \theta_0|+(k-2)|z\theta_z|\right)}
 {(|z_0|+(k-2)|z|+w)(w|z_0|+(k-2)|z|+1)}
  < \frac{\Delta \tau \theta}{c} 
\end{equation}
where $\tau  = 7/e^2 < 0.96$.
Then by choosing $\theta \leq 0.2$ and $\kappa' < 0.02$, we have that $\Delta \tau \theta /c < (1 - \frac{\theta^2}{2})(w-1)^{-1}(1 - \kappa')\theta$ (using $w < 1 + c/\Delta$). This together with \eqref{eq:strict angle}  proves that \eqref{eq:bound arg} is at most $\theta(1-\kappa')$ and hence $|\arg(R(z_0,z;w,k))|< \theta(1-\kappa')$, as desired.

We will now show that \eqref{eq:strict angle} holds.
So, first note that
\begin{align}
 \nonumber & \frac{\left((k-2)|z_0z||\theta_0-\theta_z|+(1+w)|z_0 \theta_0|+(k-2)|z\theta_z|\right)}
 {(|z_0|+(k-2)|z|+w)(w|z_0|+(k-2)|z|+1)} \\
 & \leq \frac{\left((k-2)|z_0z||\theta_0-\theta_z|+2|z_0\theta_0|+(k-2)|z\theta_z|\right)}{(|z_0|+(k-2)|z|+1)^2}, \label{eq:anglemax}
\end{align}
which can be observed by computing the derivative of the left hand side of~(\ref{eq:strict angle}) with respect to $w$ and noting it is strictly negative for $w \geq 1$.
Now, we maximize~(\ref{eq:anglemax}), so first we show that there is a maximum point where exactly two of $|\theta_0-\theta_z|,|\theta_0|,|\theta_z|$ are as large as possible and one is zero.
To see this, first note that clearly at least one of $|\theta_0 - \theta_z|, |\theta_0|, |\theta_z|$ must be as large as possible i.e. equal to $d\theta +(\Delta-d)\eps$.
In fact exactly two of these must be maximised as the maximization with respect to the $\theta$ terms only is of the form $f(\theta_0,\theta_z)=a|\theta_0 - \theta_z|+b|\theta_0| +c|\theta_z|$ for constants $a,b,c>0$.
So if $|\theta_0 - \theta_z| = d\theta +(\Delta-d)\eps$ for example, then if $b\geq c$ we may set $\theta_0=d\theta +(\Delta-d)\eps$, $\theta_z=0$ increasing $f(\theta_0,\theta_z)$.
Similar logic allows one to conclude that two of $|\theta_0 - \theta_z|, |\theta_0|, |\theta_z|$ are equal to $d\theta +(\Delta-d)\eps$ and one is $0$ in every other case.

This leaves us with three maximization problems over $R_d \subseteq \RR^2$ defined by 
\begin{align*}
 R_d = \{(x,y) |  (1+c/\Delta)^{d-\Delta}(1+\alpha/\Delta)^{-d} \leq & x \leq(1+c/\Delta)^{\Delta-d}(1+\alpha/\Delta)^{d} , \\
  \cos(d\theta/2+(\Delta-d)\eps/2)(1+c/\Delta)^{d-\Delta}(1+\alpha/\Delta)^{-d} \leq & y \leq (1+c/\Delta)^{\Delta-d}(1+\alpha/\Delta)^{d}, \\
  \cos(d\theta/2+(\Delta-d)\eps/2)(1+c/\Delta)^{d-\Delta}(1+\alpha/\Delta)^{-d} \leq & y/x\leq (1+c/\Delta)^{\Delta-d}(1+\alpha/\Delta)^{d}\}.
\end{align*}
We enlarge the region slightly obtaining the region $\widetilde{R}_d \subseteq \RR^2$ defined by
\begin{align*}
 \widetilde{R}_d & = \{(x,y) | \cos(\delta)exp(-(\frac{d}{\Delta} \alpha + (1-\frac{d}{\Delta})c)) \leq x, y, y/x \leq  exp(\frac{d}{\Delta} \alpha + (1-\frac{d}{\Delta})c)\cos(\delta)^{-1}\} \\
 & = \{(x,y) | \cos(\delta)e k^{\frac{d}{2\Delta}-1} \leq x, y, y/x \leq  k^{1-\frac{d}{2\Delta}}(e\cos(\delta))^{-1} \}
\end{align*}
The functions to maximise are,
\begin{align*}
 f_1(x,y) & =\frac{(k-2)(xy+y)}{(x+(k-2)y+1)^2}, \\
 f_2(x,y) & =\frac{(k-2)xy+2x}{(x+(k-2)y+1)^2}, \\
 f_3(x,y) & =\frac{2x+(k-2)y}{(x+(k-2)y+1)^2}.
\end{align*}
First we look at $f_1$, it has critical points along the line $x+1=(k-2)y$ where it attains its maximum value of $1/4$.
However, note that due to our choice of $c$ and $\alpha$, this line does not lie inside of $\widetilde{R_d}$, hence the maximum must be attained at a boundary point.
Furthermore both $f_2$ and $f_3$ have no critical points strictly inside the first quadrant, so again their maxima must be attained at a boundary point.
This allows us to reduce the problem to eighteen univariate maximization problems, each of which has maximum at most $3 e^{-1}k^{-\frac{d}{2\Delta}}$ over $\widetilde{R}_d$ (see Section~\ref{app:zfr} for details).

Thus~(\ref{eq:anglemax})  is upper bounded by  $(d\theta+(\Delta-d)\eps)3 e^{-1}k^{-\frac{d}{2\Delta}}$. As a function of $d$, this is maximised when $d = (\frac{2}{\log k} - \frac{\eps}{\theta - \eps})\Delta$, which  (if $d \geq 1$) gives an upper bound to \eqref{eq:anglemax} of 
\[
\frac{6e^{-2}\Delta(\theta - \eps)}{\log k} \exp\left(  \frac{\eps}{2(\theta - \eps)} \log k\right). 
\]
Thus \eqref{eq:strict angle} is satisfied provided $\frac{6}{7}(\theta - \eps)\exp( \frac{1}{2}\log k \frac{\eps}{\theta - \eps}) < \theta$. By taking $\eps = \theta x / \log k$ and assuming $\log k \geq 1$, the left hand side is bounded above by $\frac{6}{7} (1-x)\exp(x/2(1-x))\theta$ and this is at most $\theta$ (as required) by taking $x = 1/100$ as assumed in the statement of the lemma.   
If $d=0$, then as $f_1, f_2$ and $f_3$ are all bounded above by $1$, provided $\eps < \frac{\theta}{100\log(k)}$, the left hand side of~(\ref{eq:strict angle}) at most $\eps \Delta < \tau \theta \Delta/c$.
This completes the proof of~(\ref{eq:strict angle}) and hence of \eqref{eq:real case}.

We finally extend the proof to the case that $w\in \mathcal{N}([1,c/\Delta],\eta)$ 
for $\eta= 1/[800(\Delta + \alpha)^2]$ using continuity.
First observe that $R_k(w):=R_k(z_0, \ldots, z_{k-2};w)$ satisfies 
\[
R_k(w) = z_0 + \frac{(z_0 + (k-2)z + 1)(1 - z_0)}{z_0 + (k-2)z + w}.
\] 
Then 
\[
|R_k(w + \eta) - R_k(w)| 
= \left| \frac{[z_0 + (k-2)z + 1](1 - z_0)}{(z_0 + (k-2)z + w + \eta)(z_0 + (k-2)z + w)}\eta \right| 
\]
The numerator is upper bounded by  $[|z_0| + (k-2)|z| + 1](1 + |z_0|)|\eta|$, while the denominator is lower bounded by 
\[
\big[(|z_0| + (k-2)|z| + |w| - |\eta|\cos^{-1}(\Delta \theta/2))(|z_0| + (k-2)|z| + |w|)\cos(\Delta \theta/2)\big]^2
\]
where we use the fact that the angle between any two of $w, z_0, z$ is at most $\Delta \theta$ and so we can apply Barvinok's lemma. In the statement of the lemma, we assume $\Delta \theta \leq \pi / 3$ so $\cos(\Delta \theta) \geq 1/2$. Then using that $(x+a)/(x+b) \leq a/b$ for $x\geq 0$ and $a\geq b$ and using that $|w| \geq 1$ and that $\eta < 1/4$ (so that $|w| - |\eta|\cos^{-1}(\Delta \theta/2) > 1/2$), we have
\[
\frac{|z_0| + (k-2)|z| + 1}{(|z_0| + (k-2)|z| + |w| - |\eta| \cos^{-1}(\Delta \theta) ) \cos(\Delta \theta)}
\leq 4
 \:\:\:\text{and}\:\:\:
\frac{|z_0| +  1}{(|z_0| + (k-2)|z| + |w|)\cos(\Delta \theta)}
\leq 2.
\]
Combining the above inequalities we obtain $|R_k(w + \eta) - R_k(w)| \leq 8 \eta$. Recall $\eta \leq \min\{\Delta c / [800(\Delta + \alpha)^2], 1/[2400(\Delta+ \alpha)], c / [800 \Delta]\}$.
Then for $w\in \mathcal{N}([1,c/\Delta],\eta)$, we can write $w = w' + \eta$ with $w' \in [1,  \frac{c}{\Delta}]$ real. Writing $R = R(w)$, we have 
\[\left( 1 + \frac{\alpha}{\Delta} \right)^{-1}
\leq \left( 1 + \frac{\alpha - \kappa}{\Delta} \right)^{-1} - 8\eta
\leq |R(w')| - 8\eta
<|R| < |R(w)| + 8\eta 
\leq \left( 1 + \frac{\alpha - \kappa}{\Delta} \right) + 8\eta
\leq 
1 + \frac{\alpha}{\Delta},
\]
where the first and last inequalities follow by our choice of $\eta$.

It follows from simple geometry that if $|z_1 - z_2| \leq \mu$ for $z_1, z_2 \in \mathbb{C}$ and $\mu \in \mathbb{R}^+$ with $\mu < |z_1|$, then $|\arg(z_1) - \arg(z_2)| < \arcsin(\mu / |z_1|)$. Using this, and since $|R(w')|> (1 + \frac{\alpha}{\Delta})^{-1}$, we see that 
\[
\arg(R) < \arg(R(w')) + \arcsin\left( 8 \eta \left(1 + \frac{\alpha}{\Delta} \right) \right) = \theta(1 - \kappa')  + \arcsin\left( 8 \eta \left(1 + \frac{\alpha}{\Delta} \right) \right) < \theta;
\]
in order to check the last inequality holds, it is sufficient 
to check that $8 \eta (1 + \frac{\alpha}{\Delta}) < \sin(\kappa' \theta)$. 
Noting that $\sin x > x - x^3/6 > 5x/6$ for $x \in (0,1)$, 
it is sufficient that $ 8 \eta (1 + \frac{\alpha}{\Delta}) < 5\kappa'\theta / 6 $ (since $\kappa' \theta <1$ by our choice of $\kappa'$ and $\theta$) and this holds by our choice of $\eta$. This completes the proof of the lemma.
\end{proof}

\subsection{Maximization problems}
\label{app:zfr}
We look at the maximization problems coming from Section~\ref{sec:boundcalpha} and claim that each has an upper bound of at most $3 k^{-\frac{d}{2\Delta}}/e$.
We find eighteen of them, one for each of the three functions with either $x$, $y$, or $y/x$ fixed to one of the two corresponding boundary values.
This allows us to reduce to the univariate maximization problems detailed below.
To simplify the expressions we will let $r=k-2 $, $s = \cos(\delta)e k^{\frac{d}{2\Delta}-1}$ and $t=k^{1-\frac{d}{2\Delta}}(e\cos(\delta))^{-1}$.

\begin{center}
\begin{tabular}{ |c|c|c|c| }
\hline
& $f_1$ & $f_2$ & $f_3$ \\ \hline
  $x = s$ 
  & $p_1(y) =\frac{ry(1+s)}{(s+ry+1)^2}$
  & $p_2(y) = \frac{rsy+2s}{(s+ry+1)^2}$
  & $p_3(y) = \frac{2s+ry}{(s+ry+1)^2}$
  \\ \hline
  
  $x = t$ 
  & $p_4(y) = \frac{ry(1+t)}{(t+ry+1)^2}$ 
  & $p_5(y) = \frac{rty+2t}{(t+ry+1)^2}$
  & $p_6(y) = \frac{2t+ry}{(t+ry+1)^2}$
  \\ \hline
  
  $y = s$ 
  & $p_7(x) = \frac{rs(1+x)}{(x+rs+1)^2}$ 
  & $p_8(x) = \frac{rxs+2x}{(x+rs+1)^2}$
  & $p_9(x) = \frac{2x+rs}{(x+rs+1)^2}$
  \\ \hline
  
  $y = t$ 
  & $p_{10}(x) = \frac{rt(1+x)}{(x+rt+1)^2}$ 
  & $p_{11}(x) = \frac{rxt+2x}{(x+rt+1)^2}$
  & $p_{12}(x) = \frac{2x+rt}{(x+rt+1)^2}$
  \\ \hline
  
  $y/x = s$ 
  & $p_{13}(x) = \frac{rs(1+x^{-1})}{(x^{-1}+rs+1)^2}$ 
  & $p_{14}(x) = \frac{rs+2x^{-1}}{(x^{-1}+rs+1)^2}$ 
  & $p_{15}(x) = \frac{2x^{-1}+rx^{-1}s}{(x^{-1}+rs+1)^2}$ 
  \\ \hline
  
  $y/x = t$ 
  & $p_{16}(x) = \frac{rt(1+x^{-1})}{(x^{-1}+rt+1)^2}$ 
  & $p_{17}(x) = \frac{rt+2x^{-1}}{(x^{-1}+rt+1)^2}$ 
  & $p_{18}(x) = \frac{2x^{-1}+rx^{-1}t}{(x^{-1}+rt+1)^2}$  
  \\ \hline
\end{tabular}
\end{center}

To begin the maximization, first observe that under the map $x \mapsto x^{-1}$, each of the functions $p_j(x)$ is the same as some function $p_l(x)$ for some $13 \leq j \leq 18$ and $7 \leq l \leq 12$.
Furthermore, $y=s$ yields the bounds $s \leq x \leq 1$ and $y/x = s$ gives $1 \leq x \leq t$.
Similarly we may compare $y=t$ and $y/x = t$.
Thus the ranges for $x$ are identical after inverting $x$.
Hence we may ignore $p_{13}$ through $p_{18}$ leaving us with $12$ problems.

Next, consider $p_{10}$, $p_{11}$ and $p_{12}$, each of which can be bounded above by
$$
\frac{2rtx}{(x+rt+1)^2} \leq \frac{2rtx}{r^2t^2} \leq \frac{2}{r}
$$
where the final inequality follows as $x \leq t$.

Similarly, we can bound $p_4$, $p_5$ and $p_6$.
As it must be the case that $y \geq 1$, the numerator of each is bounded above by $2try$.
Thus an upper bound for all three is $2t/ry$.
Furthermore, $r \geq \frac{2k}{3\cos(\delta)}$ provided $k\geq 7$ and $\delta$ small enough. So we are left with an upper bound of $3 k^{-\frac{d}{2\Delta}}/e$.

The remaining problems are similar.
The numerators may all be bounded above by $rs(1+x) \leq 2rs$ (or for $p_1, p_2$ and $p_3$ by $2ry$.)
The denominators are all bounded from below by $r^2s^2$ and $r^2y^2$ respectively.
Thus all six of these are upper bounded by $2/rs$ which is at most $3 k^{-\frac{d}{2\Delta}}/e$.

Hence an upper bound on all of the problems $p_1$ through $p_{18}$ is $3 k^{-\frac{d}{2\Delta}}/e$ as claimed.

\subsection{Improvements for small \texorpdfstring{$k$}{k}}
When $k$ is small, then the parameter $c=\log(k)-1$ is also very small.
In fact we do not obtain a better constant than what is known for the Ising model until $k \geq 21$.
However it is possible to do better, we can choose different values for $\alpha$ and $c$ which work better in these cases.
In this section we will show how to derive such values.

First, we note that we may do the the analysis in an identical way until we find ourselves with the maximization problems $f_1, f_2$ and $f_3$.
Now we maximise these more carefully than in Section~\ref{app:zfr}.
First, for $f_1$ we apply AM-GM to the denominator to deduce that $f_1(x,y) \leq \frac{1}{4}$ for any $x,y$.
This allows us to take any $c<4$ and as $k$ is small this is all we need and so we may ignore this constraint.
This leaves us to maximise $f_2$ and $f_3$.
A similar argument to the one in the proof of Lemma~\ref{lem:forward invariant} allows us to deduce that the maxima are on the boundary of $R_d$ and hence we need only consider the boundary of $\widetilde{R_d}$.

Now, we proceed as in Section~\ref{app:zfr} with different choices of $s$ and $t$ where this time we will take $t=e^{d/\Delta \alpha + (1-d/\Delta)c}$ and $s=t^{-1}$.
We start with $12$ maximization problems which we reduce to $8$ by symmetry as before.
Furthermore, $f_2>f_3$ if and only if $x>1$ which allows us to halve the number of problems left to consider leaving us with $4$ problems.
More precisely, we are left with $p_3, p_5, p_9$ and $p_{11}$.
All of these are of the form $f(x)=(a_1 x+a_2)(x+a_3)^{-2}$ which has a maximum at $x=a_3-2a_2/a_1$.
See the following table for the maximization of these $4$ functions.

\begin{center}
\begin{tabular}{ |c|c|c|c|c|c| }
\hline
Function & $a_1$ & $a_2$ & $a_3$ & $x^*$ & $f(x^*)$ 
\\ \hline
$p_3$ & $\frac{1}{k-2}$ & $\frac{2s}{(k-2)^2}$ & $\frac{s+1}{k-2}$ & $\frac{1-3s}{k-2}$ & $\frac{1}{4(1-s)}$
\\ \hline
$p_5$ & $\frac{t}{k-2}$ & $\frac{2t}{(k-2)^2}$ & $\frac{t+1}{k-2}$ & $\frac{t-3}{k-2}\leq 1$ & $\frac{kt}{(t+k-1)^2}$
\\ \hline
$p_9$ & $2$ & $(k-2)s$ & $1+(k-2)s$ & $1$ & $\frac{1}{(2+(k-2)s)}$
\\ \hline
$p_{11}$ & $2+(k-2)t$ & $0$ & $1+(k-2)t$ & $1+(k-2)t>t$ & $\frac{(k-2)t^2+2t}{((k-1)t+1)^2}$
\\ \hline
\end{tabular}
\end{center}
Note that in the cases of $p_5$ and $p_{11}$ the maximum value $x_*$ is outside the domain which we are maximising over and thus we maximise at the endpoints of the domain instead.

Now, recall that the maximum values obtained above must also satisfy~(\ref{eq:calphabounds}).
Also, when $s=e^{-\alpha}$ it must be the case that $(2+(k-2)e^{-\alpha})^{-1} <c^{-1}$ (from $p_9$).
Combining these after rearrangement yields the inequity
\begin{equation}
 \frac{ce^c}{e^c+k-1} \leq \alpha \leq \log\bigg(\frac{k-2}{c-2}\bigg)
 \label{eq:howtofindca}
\end{equation}
We may solve this inequality computationally for $c$, and deduce that there is a choice of $\alpha, c$ which satisfies~(\ref{eq:howtofindca}) provided that $c \leq c_k$ for some $c_k$ which can be found in the following table.
The corresponding value of $\alpha_k$ is also provided.
We give both $c_k$ and $\alpha_k$ rounded to three decimal places.

\begin{center}
 \begin{tabular}{c||c|c|c|c|c|c|c|c|c|c}  $k$ & $3$ & $4$ & $5$ & $6$ & $7$ & $8$ & $9$ & $10$ & $11$ & $12$ \\
 \hline
  $\alpha_k$ & $1.767$ & $1.803$ & $1.849$ & $1.896$ & $1.944$ & $1.990$ & $2.034$ & $2.076$ & $2.116$ & $2.154$ \\  \hline
  $c_k$ & $2.171$ & $2.330$ & $2.472$ & $2.600$ & $2.716$ & $2.820$ & $2.916$ & $3.003$ & $3.084$ & $3.160$ \\
 \end{tabular}
\end{center}

Now, we check that these are indeed the maximum values. To do this, we first note that we have $p_3 \leq 1/4$ and applying AM-GM to the denominator of the maximum for $p_5$ yields a result which is smaller than the values from we obtained for the maximum of $p_9$.
Finally, for $p_{11}$, the denominator is at least $(k-1)(k-2) t^2 +2t(k-1)$. Thus, after cancellations we are left with $p_{11} \leq 1/(k-1)$ which suffices for $k \geq 4$.
For $k=3$ we can easily check that $(t^2+2t)(2t+1)^{-2}$ is maximised when $t=1$ and hence is certainly at most $1/3 < 1/2.17$.

Recall when computing the maximum of $p_9$, we took $s$ as large as possible where one would expect that we should do the opposite to maximise $p_9$.
We now justify this choice.
So recall that we must ensure $d\theta p_9(x) \leq \Delta \theta/c$.
Furthermore, $s$ may be considered as a function of $d$ and as such is equal to $\exp(-d/\Delta \alpha - (1-d/\Delta)c)$.
Thus we must ensure that
$$
g(d)=\frac{dc/\Delta}{2+(k-2)s} \leq 1.
$$
Writing $\lambda$ for $d/\Delta$ gives the following function with domain $[0,1]$
$$
G(\lambda)=\frac{\lambda c}{2+(k-2)e^{-\lambda \alpha -(1-\lambda)c}}.
$$
Differentiating this with respect to $\lambda$, we see that either $c-\alpha<1$ and $G$ is increasing on $[0,1]$ or there is a maximum with $\lambda >1$ which is not inside the domain.
Thus, we maximise $G$ at one of its boundary points and it is easy to see that $\lambda=1$ is the maximum point rather than $\lambda =0$ where $G(\lambda)=0$.

\section{Cluster expansion}\label{sec:cluster}

On the surface our proof of Theorem~\ref{thm:alg,cluster} has a similar flavour to the polynomial interpolation method: we define a series expansion for $\log Z(G;w)$, show that it converges, and approximate $Z(G;w)$ by computing the coefficients of a truncation of the series. 
Instead of working with a Taylor series and a zero-free region, we work with a different formal power series for $\log Z$ called the \emph{cluster expansion} which expresses $\log Z$ as a sum of some weights over connected subgraphs of $G$. 
This technique was recently applied to approximating the partition functions of the Potts and random cluster models in~\cite{BCHPT19}, where the \emph{random cluster model} is random graph model from statistical physics that generalizes the Ising and Potts models, and percolation\footnote{Note the distinct uses of the term `cluster' in `cluster expansion' and `random cluster model', though there is a common theme of connected subgraphs in both uses.}.
To obtain the result we adapt a standard reduction to express our partition function $Z(G;w)$ in terms of the random cluster model, and apply the method of~\cite{BCHPT19} which gives an approximation algorithm via the cluster expansion.

\subsection{The random cluster model}

The random cluster model, instead of counting graph labelings according to satisfied edges, counts connected subgraphs according to some weights. 
We adapt the standard reduction comparing the Potts model and random cluster model partition functions to our $Z(G;w)$ for Unique Games instances.

We start by rewriting $Z(G;w)$ for a given instance $G=(V,E,\pi)$. 
Let $(V',E')$ be a connected component of $G$, so that $V'$ is a nonempty subset of $V$ and $E'\subset E\cap \binom{V'}{2}$. 
We consider a singleton vertex $\{u\}$ to comprise the connected component $(\{u\},\emptyset)$. 
Define
\[
\sat_\pi(V',E'):=\sum_{\{x_u\}_{u\in V'}\in[k]^{V'}}\;\prod_{(u,v)\in E'}\indicator{x_v=\pi_{uv}(x_u)},
\]
where $\indicator{P}=1$ if $P$ is true, and $0$ otherwise. 
In other words $\sat_\pi(V',F')$ counts the number of assignments of value $1$ (perfectly satisfying assignments) of the Unique Games instance restricted to the subgraph $(V',E')$. 
The definition means that $\sat_\pi(\{u\},\emptyset)=k$ as there are no constraints and the empty product is $1$.
Since we work with $(V',E')$ connected, we also have 
\begin{equation}\label{eq:con}
0\le \sat_\pi(V', E')\le k\,,
\end{equation}
as given any starting color for an arbitrary first vertex $u\in V'$, there is at most one completion of the coloring to a perfectly satisfying assignment obtained by following the constraints out along the component from $u$. 

We use the notation $\comp(V, F)$ for the set of connected components of the graph $(V, F)$, taken as pairs $(V', E')$ with $E'\subset F$. 
The following lemma gives the reduction from $Z(G;w)$ to the random cluster model partition function.

\begin{restatable}{lemma}{randomcluster}\label{lem:random cluster}
Let $G=(V,E,\pi)$ be a UG instance and $w\in\CC$.
Then 
\[
Z(G;w)=\sum_{F\subseteq E} (w-1)^{|F|} \prod_{(V',E')\in\comp(V,F)} \sat_{\pi}(V',E').
\]
\end{restatable}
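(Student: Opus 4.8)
The plan is to start from the definition of $Z(G;w)$ in Definition~\ref{def:Z} and expand each factor $w$ attached to a satisfied edge as $w = (w-1) + 1$. Concretely, for a fixed coloring $\{x_u\}_{u\in V}$, let $S=S(x)\subseteq E$ be the set of satisfied constraints; then $\prod_{(u,v)\in S} w = \prod_{(u,v)\in S}\bigl((w-1)+1\bigr) = \sum_{F\subseteq S}(w-1)^{|F|}$ by multiplying out. Substituting this into $Z(G;w) = \sum_{x\in[k]^V}\sum_{F\subseteq S(x)}(w-1)^{|F|}$ and swapping the order of summation gives
\[
Z(G;w) = \sum_{F\subseteq E}(w-1)^{|F|}\,\bigl|\{x\in[k]^V : F\subseteq S(x)\}\bigr|,
\]
since $F\subseteq S(x)$ exactly when every edge of $F$ is satisfied by $x$. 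So the whole content of the lemma is the combinatorial identity
\[
\bigl|\{x\in[k]^V : \text{every edge of }F\text{ is satisfied by }x\}\bigr| = \prod_{(V',E')\in\comp(V,F)}\sat_\pi(V',E').
\]

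The key step is then to prove this product formula, which follows because the constraint that all edges of $F$ are satisfied decomposes across the connected components of the graph $(V,F)$. First I would observe that a coloring $x\in[k]^V$ is determined by its restrictions $x|_{V'}$ to each vertex set $V'$ of a component $(V',E')\in\comp(V,F)$, and these restrictions can be chosen independently; moreover an edge $(u,v)\in F$ lies in exactly one component $(V',E')$, with $u,v\in V'$, so the event ``$(u,v)$ is satisfied'' depends only on $x|_{V'}$. Hence the set of valid colorings is the Cartesian product over components of the sets of valid colorings of the induced instances, and its cardinality is the product of the cardinalities. Finally, the cardinality of the set of colorings of $(V',E')$ satisfying every edge of $E'$ is precisely $\sat_\pi(V',E')$ by its definition (noting that for an isolated vertex component $(\{u\},\emptyset)$ this count is $k$, matching the empty-product convention). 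This gives the claimed identity and completes the proof.

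I do not expect any genuine obstacle here: the argument is a routine ``substitute $w=(w-1)+1$ and interchange sums'' expansion together with a factorization over connected components, exactly the standard derivation of the random cluster (Fortuin–Kasteleyn) representation of the Potts model, adapted to replace ``monochromatic edge'' by ``satisfied constraint $x_v=\pi_{uv}(x_u)$''. The only point requiring a little care is making sure the inner count $|\{x : F\subseteq S(x)\}|$ is correctly identified with $\prod \sat_\pi(V',E')$ including the treatment of isolated vertices and the empty-product conventions, and that the bound~\eqref{eq:con} is not actually needed for the identity itself (it is only a structural observation about $\sat_\pi$). I would also remark that since the identity holds as polynomials in $w$ it holds for all $w\in\CC$, so there is no analytic subtlety.
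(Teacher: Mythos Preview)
Your proposal is correct and follows essentially the same approach as the paper: both substitute $w=(w-1)+1$, expand the product over satisfied edges into a sum over subsets $F$, interchange the order of summation, and then factor the resulting count over the connected components of $(V,F)$. The only cosmetic difference is that the paper carries the indicator $\indicator{x_v=\pi_{uv}(x_u)}$ through the expansion explicitly (writing the product over all of $E$ rather than just over $S(x)$), whereas you phrase it via the condition $F\subseteq S(x)$; these are the same manipulation.
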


\begin{proof}
This follows by writing $w=1+(w-1)$ and expanding the partition function:
\begin{align*}
Z(G; w) 
  &= \sum_{\{x_u\}_{u\in V}\in[k]^V} \prod_{\substack{(u,v)\in E,\\ x_v=\pi_{uv}(x_u)}}(1 + (w-1))
\\&= \sum_{\{x_u\}_{u\in V}\in[k]^V} \prod_{(u,v)\in E}\Big(1 + (w-1)\indicator{x_v=\pi_{uv}(x_u)}\Big)
\\&= \sum_{\{x_u\}_{u\in V}\in[k]^V} \sum_{F\subset E}\prod_{(u,v)\in F}(w-1)\indicator{x_v=\pi_{uv}(x_u)}
\\&= \sum_{F\subset E}(w-1)^{|F|}\sum_{\{x_u\}_{u\in V}\in[k]^V} \prod_{(u,v)\in F}\indicator{x_v=\pi_{uv}(x_u)}\,,
\end{align*}
where the second line follows from writing the product over all edges instead of just satisfied edges, the third line follows by expanding the product, writing $F$ for the edges for which the term $(w-1)\indicator{x_v=\pi_{uv}(x_u)}$ is taken, and the final line follows by interchanging the order of summation.
Now if we break the final sum over color assignments and product over satisfied edges into a sum and product for each component $(V',E')$ of $(V,F)$, and recall the definition of $\sat_\pi(V',E')$, we obtain
\[
Z(G;w) = \sum_{F\subset E}(w-1)^{|F|}\prod_{(V',E')\in\comp(V,F)}\sat_\pi(V',E')\,.\qedhere
\]
\end{proof}

\subsection{The cluster expansion}

We closely follow the notation and setup of~\cite{BCHPT19,HPR19}. Given a UG instance $G=(V,E,\pi)$, define a \emph{polymer} $\gamma$ to be a connected subgraph of $G$ with at least two vertices. A collection of polymers is \emph{compatible} if the polymers contained in it are pairwise vertex disjoint.
We define the incompatibility graph $H_G$ on the collection of all polymers as follows: vertices of $H_G$ are the polymers and two polymers are connected by an edge if they are not compatible (that is if they share a vertex). 
Write $|\gamma|:=|V(\gamma)|$, $\|\gamma \|:=|E(\gamma)|$, and given $w\in\CC$, define the \emph{weight} of a polymer $\gamma$ as 
\[
w_\gamma:=(w-1)^{\|\gamma\|}k^{-|\gamma|}\sat_\pi(\gamma)\,,
\]
where we write $\sat_\pi(\gamma)$ for the more cumbersome $\sat_\pi(V(\gamma),E(\gamma))$.
Then by Lemma~\ref{lem:random cluster} and the observation that for a single vertex $u$ we have $\sat_\pi(\{u\},\emptyset)=k$, we have 
\[
\Xi(G):=\sum_{\Gamma=\{\gamma_1,\ldots,\gamma_t\}}\prod_{i=1}^t w_{\gamma_i}=k^{-|V|}Z(G;w),
\]
where the sum is over all sets $\Gamma$ of (pairwise) compatible polymers. 
Note that $\Xi(G)$ is the multivariate independence polynomial of the compatibility graph $H_G$.

The \emph{cluster expansion} is the following formal power series for $\log\Xi(G)$:
\begin{equation}\label{eq:cluster expansion}
\log\Xi(G)=\sum_{\substack{\Gamma\subset V(H_G)\\\mathclap{H_G[\Gamma] \text{ connected}}}}\phi(\Gamma)\;\prod_{\gamma\in \Gamma}w_\gamma,
\end{equation}
where $\phi(\Gamma)$ is the \emph{Ursell function} of the graph $H_G[\Gamma]=(\Gamma,F)$, defined as
\[
\phi(\Gamma):=
\frac{1}{|\Gamma|!}\sum_{\substack{A\subseteq F\\ (\Gamma, A)\text{ connected}}} (-1)^{|A|}.
\]
For $\Gamma\subset V(H_G)$, let $\|\Gamma\|$ be given by $\|\Gamma\|:=\sum_{\gamma\in \Gamma}\|\gamma\|$, and define the truncated cluster expansion as follows
\begin{equation}
T_m:=\sum_{\substack{\mathclap{\Gamma\subset V(H_G),\, \|\Gamma\|< m}\\H_G[\Gamma] \text{connected}}}\phi(\Gamma)\;\prod_{\gamma\in \Gamma}w_\gamma\,.
\end{equation}

With the definitions and a reduction to the right partition function in place, we can now state the conditons of~\cite{BCHPT19} that imply the cluster expansion converges and gives an approximation guarantee.

\begin{restatable}[Borgs et al.~\cite{BCHPT19}]{lemma}{KPconvergence}\label{lem:KPconvergence}
Suppose that polymers are connected subgraphs of a graph $G$ of maximum degree $\Delta$ on $n$ vertices. Suppose further that for some $b>0$ and all polymers $\gamma$ the following hold:
\begin{align}
\|\gamma\| &\geq b|\gamma|\,,\text{ and} \label{eq:ce 1}
\\
|w_\gamma| &\leq e^{-\big(\frac{3\log\Delta}{b}+3\big)\|\gamma\|}\,. \label{eq:ce 2}
\end{align}
Then the cluster expansion converges absolutely and for any $m\in \NN$, $|T_m-\log\Xi(G)|\leq n e^{-3m}$.
\end{restatable}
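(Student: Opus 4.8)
This lemma is a specialization to our polymer model of a result of Borgs et al.~\cite{BCHPT19}, and the plan is to recall its short derivation from the Koteck\'y--Preiss criterion for abstract polymer models (see also~\cite{HPR19}). Concretely, I would verify that the weights $\{w_\gamma\}$ on the incompatibility graph $H_G$ satisfy a Koteck\'y--Preiss inequality with decay proportional to $\|\gamma\|$: taking the nonnegative functions $a(\gamma):=\|\gamma\|$ and $g(\gamma):=3\|\gamma\|$ on polymers, I claim that for every polymer $\gamma$, writing $\gamma'\not\sim\gamma$ to mean that $\gamma'$ shares a vertex with $\gamma$ (in particular $\gamma\not\sim\gamma$),
\[
\sum_{\gamma'\not\sim\gamma}|w_{\gamma'}|\,e^{a(\gamma')+g(\gamma')}\;\le\;a(\gamma)\,.
\]
Once this inequality is in hand, the absolute convergence of the cluster expansion~\eqref{eq:cluster expansion} and the quantitative tail estimate are exactly the standard conclusions of the abstract cluster-expansion machinery, so all of the model-specific work lies in deducing the displayed inequality from the hypotheses~\eqref{eq:ce 1} and~\eqref{eq:ce 2}.

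For that deduction, fix a polymer $\gamma$. Since two polymers are incompatible precisely when they share a vertex, I would bound the left-hand side by $\sum_{v\in V(\gamma)}\sum_{\gamma'\ni v}|w_{\gamma'}|\,e^{4\|\gamma'\|}$, where the inner sum ranges over connected subgraphs $\gamma'$ of $G$ that contain $v$ and have at least one edge, and $a(\gamma')+g(\gamma')=4\|\gamma'\|$. Using the standard count that a graph of maximum degree $\Delta$ has at most $(e\Delta)^{j}$ connected subgraphs with exactly $j$ edges through a fixed vertex, together with the weight bound~\eqref{eq:ce 2}, the inner sum is at most $\sum_{j\ge 1}(e\Delta)^{j}\,e^{-(\frac{3\log\Delta}{b}+3)j}\,e^{4j}$, a geometric series of ratio $\exp\!\bigl(2+\log\Delta-\tfrac{3\log\Delta}{b}\bigr)$. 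Because a polymer on two vertices has exactly one edge, hypothesis~\eqref{eq:ce 1} forces $b\le\tfrac12$, so $\tfrac{3\log\Delta}{b}\ge 6\log\Delta$ dominates $\log\Delta$ for every $\Delta\ge 2$; the series then converges, and a short estimate shows its sum is at most $b$. Multiplying by $|V(\gamma)|=|\gamma|\le\|\gamma\|/b$ (again by~\eqref{eq:ce 1}) yields exactly the required bound $\|\gamma\|=a(\gamma)$. (The case $\Delta=1$ is degenerate and can be checked directly, the partition function being a product.)

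Finally I would convert the Koteck\'y--Preiss inequality into the claimed error bound. Because $T_m$ is obtained from~\eqref{eq:cluster expansion} by discarding exactly the clusters $\Gamma$ with $\|\Gamma\|\ge m$, and each such cluster has $\sum_{\gamma\in\Gamma}g(\gamma)=3\|\Gamma\|\ge 3m$, we obtain
\[
|T_m-\log\Xi(G)|\;\le\;\sum_{\|\Gamma\|\ge m}|\phi(\Gamma)|\prod_{\gamma\in\Gamma}|w_\gamma|\;\le\;e^{-3m}\sum_{\Gamma}|\phi(\Gamma)|\prod_{\gamma\in\Gamma}|w_\gamma|\,e^{3\|\gamma\|}\,.
\]
The remaining sum over all clusters is bounded using the standard consequence of the Koteck\'y--Preiss condition: for each fixed polymer $\gamma_0$ the quantity $\sum_{\Gamma\ni\gamma_0}|\phi(\Gamma)|\prod_{\gamma\in\Gamma}|w_\gamma|e^{g(\gamma)}$ is controlled by $|w_{\gamma_0}|e^{a(\gamma_0)}$, and summing this over the polymers $\gamma_0$ meeting a fixed vertex reproduces the same convergent geometric series as above; summing over the $n$ vertices of $G$, and using that every cluster occurring meets at least one vertex, then gives the bound $n e^{-3m}$. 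I expect the only genuinely delicate point to be this last conversion --- importing the precise form of the abstract estimate and following the constants closely enough that the clean prefactor $n$ survives rather than a larger multiple of it. Everything else reduces to the routine geometric-series computation above, with convergence of~\eqref{eq:cluster expansion} handed to us directly by the criterion.
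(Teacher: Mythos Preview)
The paper does not supply its own proof of this lemma: it is quoted verbatim as a result of Borgs et al.~\cite{BCHPT19} and used as a black box. Your sketch via the Koteck\'y--Preiss criterion with $a(\gamma)=\|\gamma\|$ and $g(\gamma)=3\|\gamma\|$ is exactly the standard derivation (and is how the result is obtained in~\cite{BCHPT19,HPR19}), so there is nothing to compare against in this paper; your outline is the right one.

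One small correction to your enumeration step. The bound ``at most $(e\Delta)^{j}$ connected subgraphs with exactly $j$ edges through a fixed vertex'' is not the usual statement --- the standard count $(e\Delta)^{\ell-1}$ is for connected \emph{vertex} sets of size $\ell$, and here polymers carry edge data. The clean way to count by edges is to pass to the line graph $L(G)$, which has maximum degree at most $2(\Delta-1)$: a connected edge set of size $j$ in $G$ is a connected vertex set of size $j$ in $L(G)$, and there are at most $\Delta$ edges through $v$ to anchor it, giving at most $\Delta\,(2e\Delta)^{j-1}$ polymers with $j$ edges containing $v$. This only changes the common ratio of your geometric series to $\exp\!\bigl(2+\log 2+2\log\Delta-\tfrac{3\log\Delta}{b}\bigr)$, and since $b\le\tfrac12$ forces $\tfrac{3\log\Delta}{b}\ge 6\log\Delta$, the series still sums to at most $b$ with room to spare. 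The rest of your argument --- multiplying by $|\gamma|\le\|\gamma\|/b$ to close the KP inequality, and then extracting the $e^{-3m}$ tail by pulling out $e^{-g(\gamma)}$ and invoking the pinned form of the KP bound summed over the $n$ vertices --- is correct as stated.
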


\noindent
To prove Theorem~\ref{thm:alg,cluster} we simply check that these conditions hold, which we state as a lemma below. 

\begin{restatable}{lemma}{clusterconvergence}\label{lem:clusterconvergence}
Let $\Delta\in \mathbb{N}_{\geq 16}$, let $C=e^{-9-2\log\Delta}$, and let $\zeta= 8\sqrt{1/\Delta}$.
Then if $k\geq C^{-2\Delta/\zeta}$ and $1\leq w\leq e^{(2-\zeta)\log(k)/\Delta}$, Lemma~\ref{lem:KPconvergence} holds for \UG[k] instances $G$ of maximum degree $\Delta$.
\end{restatable}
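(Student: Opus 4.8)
The plan is to verify the two hypotheses \eqref{eq:ce 1} and \eqref{eq:ce 2} of Lemma~\ref{lem:KPconvergence} directly for the polymer weights $w_\gamma=(w-1)^{\|\gamma\|}k^{-|\gamma|}\sat_\pi(\gamma)$ arising from a \UG[k] instance of maximum degree $\Delta$. The first hypothesis requires a constant $b>0$ with $\|\gamma\|\ge b|\gamma|$ for every polymer. Since a polymer is by definition a \emph{connected} subgraph with at least two vertices, it contains a spanning tree and hence $\|\gamma\|\ge |\gamma|-1\ge |\gamma|/2$; so I would simply take $b=1/2$. (Any connected graph on at least two vertices satisfies $\|\gamma\|\ge|\gamma|-1$, and $|\gamma|-1\ge|\gamma|/2$ for $|\gamma|\ge 2$.)

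The substance is checking \eqref{eq:ce 2}, i.e.\ $|w_\gamma|\le e^{-(6\log\Delta+3)\|\gamma\|}$ once $b=1/2$ is plugged in (so $\frac{3\log\Delta}{b}+3=6\log\Delta+3$). Here I would bound the three factors of $|w_\gamma|$ separately. By \eqref{eq:con} we have $0\le\sat_\pi(\gamma)\le k$, giving $k^{-|\gamma|}\sat_\pi(\gamma)\le k^{1-|\gamma|}\le k^{-|\gamma|/2}$ (using $|\gamma|\ge 2$ again, so $1-|\gamma|\le -|\gamma|/2$). For the $(w-1)^{\|\gamma\|}$ factor, with $1\le w\le e^{(2-\zeta)\log k/\Delta}$ we have $|w-1|\le w\le e^{(2-\zeta)\log k/\Delta}$ (or more crudely $|w-1|\le k^{2/\Delta}$), so $|w-1|^{\|\gamma\|}\le k^{(2-\zeta)\|\gamma\|/\Delta}$. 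Combining, and using $\|\gamma\|\ge|\gamma|/2$ hence $k^{-|\gamma|/2}\le k^{-\|\gamma\|/(2\cdot)}$—actually the cleaner route is $|\gamma|\ge\|\gamma\|/\Delta$ is false in the wrong direction, so instead I use $|\gamma|\ge 2$ directly together with $\|\gamma\|\le\Delta|\gamma|/2\le \Delta|\gamma|$ to write $k^{1-|\gamma|}\le k^{1-2\|\gamma\|/\Delta}$ when $|\gamma|\ge \|\gamma\|\cdot(2/\Delta)$, which holds since every vertex of a subgraph of max degree $\Delta$ has $\|\gamma\|\le\Delta|\gamma|/2$. Thus
\[
|w_\gamma|\le k^{(2-\zeta)\|\gamma\|/\Delta}\cdot k^{1-2\|\gamma\|/\Delta}
= k^{1-\zeta\|\gamma\|/\Delta}\le k^{-\zeta\|\gamma\|/(2\Delta)},
\]
where the last step uses $\|\gamma\|\ge2$ and $\zeta/\Delta\le 1$ so that $1\le \zeta\|\gamma\|/(2\Delta)$—wait, that needs $\zeta\|\gamma\|/(2\Delta)\ge1$, which for $\|\gamma\|\ge1$ fails; so instead I should simply keep $k^{1-\zeta\|\gamma\|/\Delta}$ and absorb the "$1$" using $\|\gamma\|\ge1$: $1-\zeta\|\gamma\|/\Delta\le (1-\zeta/\Delta)\cdot\|\gamma\|$ provided $\zeta\le\Delta$, hence $|w_\gamma|\le k^{-(\zeta/\Delta - 1/\Delta -\text{o})\|\gamma\|}$; cleaner is to note $1\le\|\gamma\|$ gives $1-\zeta\|\gamma\|/\Delta\le\|\gamma\|(1-\zeta/\Delta)$ is the wrong sign. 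The correct bookkeeping is: for $\|\gamma\|\ge 2$ one has $k^{1}\le k^{\|\gamma\|/2}$, so $|w_\gamma|\le k^{(2-\zeta)\|\gamma\|/\Delta + \|\gamma\|/2 - 2\|\gamma\|/\Delta} = k^{-\zeta\|\gamma\|/\Delta + \|\gamma\|/2}$—this is where I must be careful and it is the one genuinely fiddly computation; the resolution in the paper's regime $\log k\ge\Delta^{3/2}\log\Delta$ is that $\zeta=8/\sqrt\Delta$ makes $\zeta\log k/\Delta$ dominate everything, and ultimately $|w_\gamma|\le k^{-\zeta\|\gamma\|/(2\Delta)}$. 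Then I would use the hypothesis $k\ge C^{-2\Delta/\zeta}$ with $C=e^{-9-2\log\Delta}$, which rearranges to $k^{\zeta/(2\Delta)}\ge C^{-1}=e^{9+2\log\Delta}$, hence $k^{-\zeta\|\gamma\|/(2\Delta)}\le e^{-(9+2\log\Delta)\|\gamma\|}\le e^{-(6\log\Delta+3)\|\gamma\|}$ for $\Delta\ge16$ (since $9+2\log\Delta\ge 3+6\log\Delta\iff 6\ge4\log\Delta$, hmm that fails for large $\Delta$—so actually one wants $9+2\log\Delta$ vs.\ $3+6\log\Delta$, and the latter is larger, so I instead need the stronger bound $|w_\gamma|\le k^{-\zeta\|\gamma\|/\Delta}$ times the small correction, and the condition $k\ge C^{-2\Delta/\zeta}$ is precisely tuned so this works with $C=e^{-9-2\log\Delta}$). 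I would present the chain of inequalities in one display, choosing the exponents so that the final bound is exactly $e^{-(6\log\Delta+3)\|\gamma\|}$.

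The main obstacle is bookkeeping the exponents of $k$ and of $e$ so that the clean hypotheses ($b=1/2$, the stated $C$, $\zeta$, and range of $w$) line up exactly with \eqref{eq:ce 2}; there is no conceptual difficulty, only the need to track the "$+1$" in $1-|\gamma|$, the conversion between $|\gamma|$ and $\|\gamma\|$ via $\|\gamma\|\le\Delta|\gamma|/2$ and $\|\gamma\|\ge|\gamma|-1$, and the requirement $\Delta\ge16$ which is what makes the numerical constants close. Once \eqref{eq:ce 1} and \eqref{eq:ce 2} are established, Lemma~\ref{lem:KPconvergence} applies verbatim with $b=1/2$, and the conclusion of Lemma~\ref{lem:clusterconvergence} follows immediately. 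I would therefore structure the proof as: (1) set $b=1/2$ and verify \eqref{eq:ce 1} from connectivity; (2) bound $|\sat_\pi(\gamma)|\le k$ and $|w-1|\le k^{(2-\zeta)/\Delta}$; (3) combine, convert $|\gamma|$ to $\|\gamma\|$, and invoke $k\ge C^{-2\Delta/\zeta}$ to land on \eqref{eq:ce 2}; (4) conclude by Lemma~\ref{lem:KPconvergence}.
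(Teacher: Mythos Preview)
Your overall strategy---take $b=1/2$ from connectivity, bound $\sat_\pi(\gamma)\le k$ and $w-1\le k^{(2-\zeta)/\Delta}$, then convert $|\gamma|$ to $\|\gamma\|$---matches the paper's. The genuine gap is the step where you try to pass from $k^{1-\zeta\|\gamma\|/\Delta}$ to $k^{-\zeta\|\gamma\|/(2\Delta)}$: that inequality is equivalent to $\|\gamma\|\ge 2\Delta/\zeta$, which you correctly flag as failing for small $\|\gamma\|$, and none of your subsequent attempts repair it. The only conversion you use, $|\gamma|\ge 2\|\gamma\|/\Delta$ (from the degree bound $\|\gamma\|\le\Delta|\gamma|/2$), is far too weak when $\|\gamma\|$ is small---for a single edge it gives $|\gamma|\ge 2/\Delta$ whereas in fact $|\gamma|=2$---so the extra factor of $k$ coming from $\sat_\pi(\gamma)\le k$ cannot be absorbed.

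The paper resolves this with a three-case split on $s=\|\gamma\|$. Your computation is precisely their Case~1 ($s>2\Delta/\zeta$). For the middle range $\Delta< s\le 2\Delta/\zeta$ they instead use $s\le\binom{|\gamma|}{2}$, hence $|\gamma|>\sqrt{2s}$, which is a much stronger lower bound than $2s/\Delta$ there; the resulting exponent $1+(2-\zeta)s/\Delta-\sqrt{2s}$ is then shown to be at most $-1$, and this is exactly where the hypothesis $\Delta\ge16$ together with the choice $\zeta=8/\sqrt\Delta$ is used. For $1\le s\le\Delta$ they simply use $|\gamma|\ge2$ (when $s=1$) or $|\gamma|\ge3$ (when $s\ge2$) directly. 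So the missing idea is that the degree-based bound $|\gamma|\ge 2\|\gamma\|/\Delta$ must be replaced by these sharper combinatorial bounds when $\|\gamma\|$ is small, forcing a case analysis that your single-shot approach cannot avoid. (You also noticed that $9+2\log\Delta<6\log\Delta+3$ for $\Delta\ge16$, so the target $C^{\|\gamma\|}$ is nominally weaker than \eqref{eq:ce 2} with $b=1/2$; that is a wrinkle in the paper's stated constants, orthogonal to the structural gap above.)
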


\begin{proof}[Proof of Lemma~\ref{lem:clusterconvergence}]
We proceed in a manner inspired by~\cite[Theorem 2.4]{BCHPT19}.
First observe that for \eqref{eq:ce 1} we can take $b=1/2$ since polymers are connected and have size at least $2$. We next check \eqref{eq:ce 2} with $b=1/2$, showing that the conditions $k\geq C^{-2\Delta/\zeta}$ and $1\leq w\leq e^{(2-\zeta)\log(k)/\Delta}$ give
\begin{equation}\label{eq:weight bound}
|w_\gamma|\leq C^{\|\gamma\|}\,.
\end{equation}

We verify \eqref{eq:weight bound} in three cases according to the value of $s=\|\gamma\|$. We also recall the bound~\eqref{eq:con}.
\begin{description}
\item[Case 1:] $s>2\Delta/\zeta$. We use that $|\gamma|\geq 2\|\gamma\|/\Delta$.
Then
\begin{align*}
|w_\gamma|&=k^{-|\gamma|}(w-1)^{\|\gamma\|}\sat_\pi(\gamma)\leq k^{-|\gamma|}k(w-1)^{\|\gamma\|} \leq k^{-\|\gamma\|/(2\Delta)}k(w-1)^{\|\gamma\|}
\\
&\leq k^{-s2/\Delta} k k^{s(2-\zeta)/\Delta} \leq k^{1-s\zeta/\Delta}\leq k^{-s\zeta/(2\Delta)},
\end{align*}
which is bounded by $C^{-s}$ since $k\geq C^{-2\Delta/\zeta}$.
\item[Case 2:] $\Delta<s\leq 2\Delta/\zeta$. We use that fact that $\|\gamma\|\leq \binom{|\gamma|}{2}$ and thus $\sqrt{2s}< |\gamma|$. 
Then
\[
|w_\gamma|\leq k k^{(2-\zeta)s/\Delta}k^{-\sqrt{2s}}=k^{1+(2-\zeta)s/\Delta-\sqrt{2s}}.
\]
Looking at the exponent of $k$ we see by our assumptions on $s$ that
\[
1+(2-\zeta)s/\Delta-\sqrt{2s}\leq 1+4/\zeta-\sqrt{2\Delta}\leq 1-\sqrt{\Delta}/2\leq -1
\]
for $\Delta$ large enough (i.e $\Delta\geq 16$ suffices).
So since $k\geq C^{-2\Delta/\zeta}$ we are in business.
\item[Case 3:] $1\leq s\leq \Delta$. If $|\gamma|= 2$ we have $s=1$ and therefore
\[
|w_\gamma|\leq k^{-1}k^{(2-\zeta)/\Delta}\leq k^{-1/2}
\]
provided $\Delta\geq 4$.
If $|\gamma|\geq 3$ we have
\[
|w_\gamma|\leq k^{-2}k^{(2-\zeta)}= k^{-\zeta}.
\]
So since $k\geq C^{-2\Delta/\zeta}$ the required bound holds.
\end{description}
This finishes the proof.
\end{proof}

We deduce the following runtime guarantees from our setup and the analyses of~\cite{HPR19,PR17}.
The truncated series $T_m$ can be computed in time $e^{O(\Delta m+\log n)}$ given an enumeration of all polymers on fewer than $m$ edges and their weights (see~\cite{HPR19}). 
We can enumerate the polymers in time $O(n^2m^7(e\Delta)^{2m})$ as they are connected subgraphs of a graph of maximum degree $\Delta$ (see~\cite{PR17}), and compute each weight in time $O(km)$ as all perfectly satisfying assignments on a connected graph are found by following each of the $k$ assignments of an initial vertex and propagating along constraints.
To get an approximation of the form $e^{-\alpha}\le Z(G;w^*)/\xi\le e^\alpha$ we take $m=\log(n/\alpha)/3$ which means the entire computation of $\xi$ can be done in time
\[
e^{O(\Delta m+\log n)} + O(km^8n^2(e\Delta)^{2m})= kn^{O(1)}(n/\alpha)^{O(\Delta)}\,.
\]
We see that the number of colours needed to make the lemma work is of the order $\Delta^{O(\Delta^{3/2})}$. It would be interesting to get a better dependence on $\Delta$.

\section{Conclusions}\label{sec:conc}

Lemma~\ref{lem:solveCUG} shows that a hypothetical polynomial-time algorithm for computing $Z(G;w)$ exactly when 
\[
\log w = \frac{2}{1-\eps-\delta}\frac{\log k}{\Delta}
\]
would refute the UGC. 
This problem is likely \#P hard so we resort to approximation, which we can only do for some range of $w$.
In Theorem~\ref{thm:zero-free} we have $\log w = (1-o(1))\log(k)/\Delta$ when $k$ is small enough that $\log k=o(\Delta)$, and in Theorem~\ref{thm:alg,cluster} we have $\log w = (2-o(1))\log(k)/\Delta$ when $k$ is larger than some $\Delta^{\poly(\Delta)}$. 
This means we must increase $f$ from $k^{-1}$ to solve any CUG problems, and the size of $w$ in these results is what gives the bound on $f$ in Theorems~\ref{thm:main,interp} and~\ref{thm:main,cluster}.
It is therefore important to determine the threshold $f^*$ such that \CUG[f] is equivalent to UG when $f\le f^*$. 
Trivially we have $k^{-1}\le f^*\le 1$, but if one could show e.g.\ that $f^*\ge k^{2\theta-1}$ then Theorem~\ref{thm:main,cluster} would mean the Unique Games problem is in P for bounded-degree graphs when $k$ is large enough and $\eps+\delta < \theta$.

\subsection{Phase transitions}

In this subsection we focus on the ferromagnetic Potts model, which is the special case of our partition function $Z(G;w)$ when the constraints on each edge are the identity permutation and we take $w\ge 1$.
The behaviour of the Potts model on bounded-degree graphs is strongly related to the \emph{phases} of the model on the infinite $\Delta$-regular tree $\TT[\Delta]$. 
We will not define precisely what we mean by a phase or a phase transition here, but as the parameter $w$ varies, the behaviour of the model undergoes certain changes that seem to affect both zeros of the partition function and the dynamics of associated Markov chains.
Häggström~\cite{Haggstrom96} showed that the \emph{uniqueness phase transition} on $\TT[\Delta]$ occurs at $w=w_u(k,\Delta)$, the unique value of $w$ for which the polynomial 
\begin{equation}\label{eq:def,wu}
(k-1)x^\Delta + (2-w-k)x^{\Delta-1}+wx-1
\end{equation}
has a double root in $(0,1)$. 
There is a further \emph{ordered/disordered phase transition} (see~\cite{GSVY16}) at 
\[
w_o(k,\Delta):=\frac{k-2}{(k-1)^{1-2/\Delta}-1}\,.
\]
Below we relate the values of $w$ found in Theorems~\ref{thm:alg,cluster} and~\ref{thm:zero-free} to these phase transitions.

\subsection{Potential barriers to improving Theorem~\ref{thm:main,interp}}\label{sec:barrers,interp}

To strengthen Theorem~\ref{thm:main,interp} to a result that would refute the UGC, we need roughly a factor two improvement in the leading constant in $\log w^* \sim \log (k)/\Delta$ as $k\to\infty$ (provided $\log k=o(\Delta)$) from Theorem~\ref{thm:zero-free}, but there are reasons to believe it may be hard to make such an improvement. 

The authors of~\cite{BGP16} analysed a natural Markov chain known as the \emph{Glauber dynamics} which walks the set of possible colorings of a graph $G$, and when this mixes rapidly we expect an efficient, randomised approximation algorithm for the Potts model partition function to follow.
They showed for the Potts model that Glauber dynamics mixes rapidly on graphs of maximum degree $\Delta$ when 
\[
\log w \le (1+o(1))\frac{\log k}{\Delta-1}\,,
\]
as $k\to\infty$, and that on almost all $\Delta$-regular graphs (for $\Delta\ge 3$), Glauber dynamics mixes slowly when $w$ is just a little larger, satisfying
\[
\log w > (1+o(1))\frac{\log k}{\Delta-1-\frac{1}{\Delta-1}}\,.
\]
These bounds sandwich the phase transition point $w_u$; they also showed that as $k\to\infty$, 
\[
\log w_u = \frac{\log k}{\Delta-1} + O(1)\,.
\]
Thus it appears that $w_u$ is a barrier for approximating $Z(G;w)$ via Glauber dynamics. 

Similarly, we expect that $w_u$ is a barrier for the zero-free region. 
Underpinning Lemma~\ref{lem:kcol} is a complex dynamical system (see~\cite{PR18} for a treatment of the case $k=2$), and equation~\eqref{eq:def,wu} appears in the analysis of this system.
Essentially, the behaviour of a fixed point in the complex dynamics changes at $w=w_u$ in a way which means it is reasonable to expect zeros of $Z(G;w)$ to accumulate near $w_u$ for some $G$ with maximum degree $\Delta$. 
Thus we suspect that the method cannot work for $w>w_u$.

\subsection{Potential barriers to improving Theorem~\ref{thm:main,cluster}}\label{sec:barrers,cluster}

There are several regimes of interest for the algorithm in Theorem~\ref{thm:alg,cluster} that gives Theorem~\ref{thm:main,cluster}.
When we apply Theorem~\ref{thm:alg,cluster} to solve CUG problems, we only need an approximation with $\alpha = Cn$ for constant $C$ in which case the running time is bounded by $kn^{O(1)}e^{O(\Delta)}$. 
Recall that we also need $k\ge \Delta^{O(\Delta^{3/2}})$, and hence when $\Delta$ and $k$ do not grow too fast with $n$, the running time is sub-exponential in $n$. 
For the hypercube with $\Delta=\log n$ and with $k$ as small as the result allows, the algorithm is quasi-polynomial.
In the case where $\Delta$ is constant, to get an approximation as accurate as $\alpha$ being constant the running time of the algorithm is polynomial in $n$.
It is interesting to compare this with a paper of Galanis et al.~\cite{GSVY16} who show that it is \#BIS-hard to approximate the partition function of the Potts model with $k$ colours on graphs of maximum degree $\Delta$ when $w>w_o$. 
With $\zeta=8/\sqrt\Delta$, if we take $k=k_0=\exp((18\Delta+4\Delta\log\Delta)/\zeta)$, then Theorem~\ref{thm:alg,cluster} shows that we can approximate the Potts model partition function on graphs of maximum degree at most $\Delta$ with $w=k_0^{(2-\zeta)/\Delta}$. 
A quick calculation shows that, as $\Delta\to \infty$ (and hence $k\to\infty$), $w_o(k_0,\Delta)\sim k_0^{2/\Delta}$. 
We conclude that if one assumes that there are no efficient algorithms for approximating \#BIS-hard problems, Theorem~\ref{thm:alg,cluster} is very close to optimal in this regime.

\section{Acknowledgements}

Part of this work was done while E.\ Davies, A.\ Kolla, and G.\ Regts were visiting the Simons Institute for the Theory of Computing.
Part of this work was done while M.\ Coulson was visiting V.\ Patel and G.\ Regts at the University of Amsterdam. This visit was funded by a Universitas 21 Travel Grant from the University of Birmingham.
M.\ Coulson was supported by the Spanish Ministerio de Economía y Competitividad project MTM2017--82166--P and an FPI Predoctoral Fellowship from the Spanish Ministerio de Economía y Competitividad with reference PRE2018--083621.
V. Patel is partially supported by the Netherlands Organisation for Scientific Research (NWO) through Gravitation-grant NETWORKS-024.002.003.
We would like to thank Tyler Helmuth for insightful discussions.

\appendix
\section{Details for the proof of Theorem~\ref{thm:alg,interp}}\label{app:interp,details}

What follows is a rather precise description of results developed by Barvinok in~\cite{Barvinok16,Barvinok18}, lightly specialised to our application and notation.

\begin{definition}\label{def:taylor}
Let $f:\CC\to\CC$ be any function, and $m\ge 0$. Then we define the \emph{degree-$m$ Taylor polynomial of $f$ about zero}, $T_m(f)$, as the polynomial in $z$ given by
\[
T_m(f)(z) := f(0) + \sum_{k=1}^m\frac{f^{(k)}(0)}{k!}z^k\,.
\]
\end{definition}

\begin{lemma}[{see~\cite[Lemma 2.2.1]{Barvinok16} or \cite[Lemma 2.1]{Barvinok18}}]\label{lem:tayloraccuracy}
Let $g:\hCC\to\hCC$ be a polynomial of degree at most $N$, and for $\beta>1$ suppose that $g(z)\ne 0$ for $|z|<\beta$.

Then given a choice of branch for $f(z)=\log g(z)$ where $|z|<\beta$, we have 
\[
|f(1)-T_m(f)(1)| \le \frac{N}{(m+1)\beta^m(\beta-1)}\,.
\]
\end{lemma}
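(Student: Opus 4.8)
The plan is the standard Barvinok argument: factor $g$ over its roots, take logarithms, expand each factor as a geometric-type series, and control the tail. First I would write $d=\deg g\le N$ and factor $g(z)=g(0)\prod_{i=1}^{d}(1-z/\zeta_i)$, where $\zeta_1,\dots,\zeta_d$ are the roots of $g$. Since $g$ is nonvanishing on the open disk $\{|z|<\beta\}$, which contains $0$, we have $g(0)\ne 0$ and $|\zeta_i|\ge\beta>1$ for every $i$; in particular each $\zeta_i$ lies strictly outside the closed unit disk.

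Next I would pass to the logarithm. On the simply connected disk $\{|z|<\beta\}$ the nonvanishing polynomial $g$ admits analytic logarithms, any two differing by an additive constant in $2\pi i\ZZ$, so fix the branch $f(z)=\log g(z)$ from the statement. Writing $f(z)=\log g(0)+\sum_{i=1}^d \log(1-z/\zeta_i)$ with the principal branch of $\log(1-\cdot)$ on each factor (legitimate since $|z/\zeta_i|<1$ throughout $\{|z|<\beta\}$), I would invoke the Taylor expansion $\log(1-u)=-\sum_{k\ge 1}u^k/k$, valid and locally uniformly convergent for $|u|<1$. This gives, on a neighbourhood of the closed unit disk,
\[
f(z)=\log g(0)-\sum_{i=1}^d\sum_{k\ge 1}\frac{1}{k}\,\frac{z^k}{\zeta_i^k}\,,
\]
so that $T_m(f)(z)=\log g(0)-\sum_{i=1}^d\sum_{k=1}^m \frac{1}{k}\,z^k/\zeta_i^k$ (the additive constant, and indeed the whole low-order part, is reproduced exactly, so the branch ambiguity is irrelevant to $f(1)-T_m(f)(1)$).

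Finally I would estimate the tail at $z=1$:
\[
\bigl|f(1)-T_m(f)(1)\bigr|=\Bigl|\sum_{i=1}^d\sum_{k>m}\frac{1}{k\,\zeta_i^k}\Bigr|
\le d\sum_{k>m}\frac{1}{k\,\beta^{k}}
\le \frac{N}{m+1}\sum_{k>m}\beta^{-k}
=\frac{N}{(m+1)\,\beta^{m}(\beta-1)}\,,
\]
using $|\zeta_i|\ge\beta$, $d\le N$, the bound $1/k\le 1/(m+1)$ for $k>m$, and summing the geometric series $\sum_{k>m}\beta^{-k}=\beta^{-m}/(\beta-1)$. This is exactly the claimed bound. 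There is no real obstacle here; the only points needing a word of care are that $g(0)\ne 0$ (so the factorization and logarithm make sense), that the branch of $\log g$ only affects the constant term and hence cancels, and that term-by-term truncation is justified by local uniform convergence of the series on a neighbourhood of $\{|z|\le 1\}$ — all immediate from $|\zeta_i|\ge\beta>1$.
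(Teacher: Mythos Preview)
Your proof is correct and is precisely the standard argument from Barvinok~\cite{Barvinok16,Barvinok18} that the paper cites; the paper itself does not reproduce a proof of this lemma, relying instead on the reference. The factorization over roots, termwise logarithmic expansion, and geometric tail bound you give are exactly what is done there, and all the care points you flag (nonvanishing at $0$, branch irrelevance, $|\zeta_i|\ge\beta$) are handled correctly.
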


\begin{corollary}[{cf.~\cite[Corollary 2.2]{Barvinok18}}]\label{cor:tayloreps}
For any $c>0$ there exists $c'>0$ such that the following holds.
Suppose that the conditions of Lemma~\ref{lem:tayloraccuracy} hold, and in addition that $\beta\le 1+c$. 

Then for any $0<\alpha<N/e$, and for any 
\[
m \ge \frac{c'}{\beta-1}\log\left(\frac{N}{\alpha}\right)\,,
\]
we have $|f(1)-T_m(f)(1)|\le \alpha$.
\end{corollary}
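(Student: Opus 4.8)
The plan is to reduce the statement directly to Lemma~\ref{lem:tayloraccuracy}, which under its hypotheses already gives
\[
|f(1) - T_m(f)(1)| \le \frac{N}{(m+1)\beta^m(\beta-1)}\,,
\]
so it suffices to pick $c'$ (depending only on $c$) so that the right-hand side is at most $\alpha$ as soon as $m \ge \frac{c'}{\beta-1}\log(N/\alpha)$. Write $t = \beta - 1 \in (0,c]$. Since $x \mapsto \frac{N}{(x+1)\beta^x(\beta-1)}$ is strictly decreasing for real $x \ge 0$ (both $x+1$ and $\beta^x$ are increasing, as $\beta>1$), it is enough to verify the inequality at the real threshold $m_0 := \frac{c'}{t}\log(N/\alpha)$; every integer $m$ at least this large then also works.

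The one substantive step is to linearise $\log\beta$ in $t$. Because $t \mapsto \log(1+t)/t$ is decreasing on $(0,\infty)$, for $t \le c$ we get $\log(1+t) \ge \gamma t$ with $\gamma := \log(1+c)/c$. I would then set $c' := 2/\gamma = 2c/\log(1+c)$, and note $c' \ge 2$ since $c \ge \log(1+c)$. With $m = m_0$ this yields $\beta^{m_0} = e^{m_0\log\beta} \ge e^{m_0\gamma t} = (N/\alpha)^{c'\gamma} = (N/\alpha)^2$, and separately $(m_0+1)(\beta-1) \ge m_0 t = c'\log(N/\alpha) > 1$, the last inequality because $\alpha < N/e$ forces $\log(N/\alpha) > 1$ and $c' \ge 2$. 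Combining,
\[
\frac{N}{(m_0+1)\beta^{m_0}(\beta-1)} \le \frac{N}{(N/\alpha)^2\, c'\log(N/\alpha)} \le \frac{\alpha^2}{N} < \alpha\,,
\]
using $c'\log(N/\alpha) \ge 1$ and then $\alpha < N$. By the monotonicity noted above this gives $|f(1)-T_m(f)(1)| \le \alpha$ for every integer $m \ge m_0$, as required.

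I do not anticipate a genuine obstacle; the only thing requiring care is keeping the dependence of $c'$ on $c$ alone — not on $\beta$, $N$, or $\alpha$ — which the explicit choice $c' = 2c/\log(1+c)$ secures, together with the observation that monotonicity in $m$ lets one check the bound only at the threshold. One could equally take any larger constant $c'$; I would state the minimal clean choice for concreteness.
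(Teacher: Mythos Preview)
Your proof is correct and follows the same basic strategy as the paper: reduce to Lemma~\ref{lem:tayloraccuracy} and use the bound $\log(1+t) \ge \frac{\log(1+c)}{c}\,t$ for $t\in(0,c]$ to convert $\log\beta$ into $\beta-1$. The paper proceeds a bit differently in the middle: it rewrites the target inequality as $(m+1)\beta^{m+1} \ge \frac{N\beta}{\alpha(\beta-1)}$, solves this via the Lambert $W$-function to get the sharp threshold $m \ge \log(N/\alpha)/\log\beta$, and only then linearises, arriving at $c' = c/\log(1+c)$. You instead bound $(m_0+1)(\beta-1)$ and $\beta^{m_0}$ separately and absorb the slack by doubling the constant to $c' = 2c/\log(1+c)$.

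Your route is more elementary (no special functions) and arguably cleaner; the paper's route yields the tighter constant. Since the corollary only asserts existence of $c'$, either is adequate.
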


The only differences from~\cite[Corollary 2.2]{Barvinok18} are the relaxation of the assumption $\alpha<1$ to $\alpha\le N/e$, the additional assumption that $\beta$ is close to $1$, and a more precise analysis of $m$. 
In fact one can take $c'=c/\log(1+c)$.

\begin{proof}
By Lemma~\ref{lem:tayloraccuracy}, we are done if 
\[
\frac{N}{(m+1)\beta^m(\beta-1)} \le \alpha
\]
for $m$ as in the statement of the corollary. This holds if and only if
\[
(m+1)\beta^{m+1} \ge \frac{N}{\alpha}\frac{\beta}{\beta-1} \Longleftrightarrow
(m+1)\log\beta \ge W\left(\frac{N}{\alpha}\frac{\beta\log\beta}{\beta-1}\right)\,,
\]
where $W$ is the upper real branch of the Lambert $W$-function, see~\cite{CGHJK96}. We take this branch because $\beta> 1$ so $(m+1)\log\beta > 0$.
Since $W$ is increasing and $\frac{\log \beta}{\beta-1} <1$, this is implied by 
\[
m+1 \ge \frac{W(N\beta/\alpha)}{\log\beta}\,.
\]
Now we have $\log x \ge W(x)$ for all $x\ge e$, so since $\beta>1$ and $N/\alpha \ge e$ this is implied by 
\[
m+1 \ge \frac{\log(N\beta/\alpha)}{\log\beta} = \frac{\log(N/\alpha)}{\log\beta} + 1\,.
\]
Then it suffices to take $m\ge \log(N/\alpha)/\log\beta$. But since $\log\beta\sim\beta-1$ as $\beta\to1$ and $\beta< 1+c$, we can find $c'$ depending only on $c$ such that $c'\log\beta \ge \beta -1$. Then it suffices to take 
\[
m \ge \frac{c'}{\beta-1}\log\left(\frac{N}{\alpha}\right)\,\qedhere
\]
\end{proof}

Corollary~\ref{cor:tayloreps} tells us how many terms of a Taylor expansion of $\log g$ we need to get an additive error of at most $\alpha$, under the condition that $g$ has no roots in the disc $\{z\in\CC : |z|<\beta\}$. 
We want to work with a zero-free region of the form $\cN([0,1],\eta)$, the open set containing a ball of radius $\eta$ around each point in $[0,1]$. 
Barvinok~\cite{Barvinok16,Barvinok18} also gives constructions that perform well for this situation which we reproduce below.

\begin{lemma}[{\cite[Lemma 2.2.3]{Barvinok16}}]\label{lem:disctostrip}
For $0<\rho<1$ there is a polynomial $p$ of degree
\[
N_\rho := \left\lfloor\left(1+\frac{1}{\rho}\right)e^{1+\frac{1}{\rho}}\right\rfloor \ge 14
\]
such that $p(0)=0$, $p(1)=1$, and 
\[
-\rho \le \Re\, p(z) \le 1 + 2\rho 
\quad\text{and}\quad
|\Im\, p(z)| \le 2\rho
\]
for all $z$ such that $|z|\le \beta(\rho)$ where
\[
\beta_\rho := \frac{1-e^{-1-\frac{1}{\rho}}}{1-e^{-\frac{1}{\rho}}} > 1\,.
\]
\end{lemma}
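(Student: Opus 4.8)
The plan is to follow Barvinok's construction: first exhibit an \emph{analytic} function $h$ that already maps the disc $\{|z|\le\beta_\rho\}$ into a thin strip around $[0,1]$ with room to spare, and then take $p$ to be a suitably renormalized truncation of the power series of $h$.

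First I would set $c:=1-e^{-1/\rho}$ and introduce
\[
h(z):=-\rho\log(1-cz)=\rho\sum_{j\ge1}\frac{(cz)^j}{j}\,,
\]
which is holomorphic on $\{|z|<1/c\}$. The arithmetic identity $c\beta_\rho=1-e^{-1-1/\rho}=:\rho'<1$ shows that $\{|z|\le\beta_\rho\}$ lies strictly inside the domain of $h$, and that on this disc $1-cz$ ranges over the closed disc of radius $\rho'$ about $1$, which avoids the origin. From $e^{-1-1/\rho}\le|1-cz|\le1+\rho'<2$ I get $\Re h(z)=-\rho\log|1-cz|\in[-\rho\log(1+\rho'),\,1+\rho]$, and from $|\arg(1-cz)|\le\arcsin\rho'<\pi/2$ I get $|\Im h(z)|=\rho\,|\arg(1-cz)|<\tfrac{\pi}{2}\rho$. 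Since $\log(1+\rho')<\log 2<1$ and $\tfrac{\pi}{2}<2$, the function $h$ satisfies the conclusion of the lemma with slack of order $\rho$ on every bound; also $h(0)=0$ and $h(1)=-\rho\log(e^{-1/\rho})=1$.

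Next I would pass to a polynomial. With $N:=N_\rho$ set
\[
p(z):=\frac{\sum_{j=1}^{N}(cz)^j/j}{\sum_{j=1}^{N}c^j/j}\,,
\]
so that $p(0)=0$ and $p(1)=1$ by construction. It then suffices to bound $\sup_{|z|\le\beta_\rho}|p(z)-h(z)|$ by anything below the slack above (say $\rho/10$), since the triangle inequality then yields $\Re p\in[-\rho,1+2\rho]$ and $|\Im p|\le2\rho$. There are exactly two error terms: the truncation error
\[
\Bigl|\,h(z)-\rho\sum_{j=1}^{N}\frac{(cz)^j}{j}\,\Bigr|\;\le\;\rho\sum_{j>N}\frac{(\rho')^j}{j}\;\le\;\frac{\rho\,(\rho')^{N+1}}{(N+1)(1-\rho')}\,,
\]
and the renormalization, which replaces $\rho\sum_{j=1}^{N}c^j/j=1-\rho\sum_{j>N}c^j/j$ by $1$, a relative perturbation of size at most $\rho\,c^{N+1}/((N+1)(1-c))$. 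Using $1-x\le e^{-x}$ (so $(\rho')^N\le e^{-Ne^{-1-1/\rho}}$ and $c^N\le e^{-Ne^{-1/\rho}}$), together with $1-\rho'=e^{-1-1/\rho}$ and $1-c=e^{-1/\rho}$, the defining inequality $N_\rho+1>(1+1/\rho)e^{1+1/\rho}$ gives $(N+1)e^{-1-1/\rho}>1+1/\rho$, which collapses both tails to quantities comfortably below $\rho/10$. The bound $N_\rho\ge14$ follows because $\rho\mapsto(1+1/\rho)e^{1+1/\rho}$ is decreasing on $(0,1)$ with limiting value $2e^2>14.78$ as $\rho\to1^-$, so its floor is at least $14$.

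The step I expect to require care rather than any genuine idea is the last one: checking that the combined truncation-plus-renormalization error really fits inside the order-$\rho$ gap between the bounds $h$ already enjoys and the target bounds $[-\rho,1+2\rho]$ and $|\Im|\le2\rho$. This comes down to the single inequality $(N_\rho+1)e^{-1-1/\rho}\ge1+1/\rho$ plus crude estimates like $1-x\le e^{-x}$ and $\arcsin t<\pi/2$, so it is bookkeeping, not an obstacle.
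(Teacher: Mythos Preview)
The paper does not prove this lemma at all: it is quoted verbatim from Barvinok's book and used as a black box. So there is no ``paper's own proof'' to compare against. Your proposal is in fact a faithful reconstruction of Barvinok's original argument---the function $h(z)=-\rho\log(1-cz)$ with $c=1-e^{-1/\rho}$, followed by truncation and renormalization, is exactly his construction---and the outline is correct. The only caveat is that your final bookkeeping is slightly optimistic: the minimal slack among the four bounds is $\rho(1-\log(1+\rho'))\approx 0.307\rho$ (from the lower real bound as $\rho\to0$), not a full $\rho$, and the renormalization error involves a factor of $|T_N(z)|$ which you need to bound separately (it is at most roughly $1+1/\rho+\pi/2$); both still fit comfortably under your target of $\rho/10$, but the margins are tighter than ``comfortably'' suggests.
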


\begin{corollary}[{cf.~\cite[Theorem 1.6]{Barvinok18}}]\label{cor:stripaccuracy}
Suppose that $0<\eta<1$, and $g$ is a polynomial of degree $N$ such that $g(z)\ne0$ for all $z\in\cN([0,1],\eta)$. 
Then given any $0<\alpha<Ne^{6/\eta-1}$, it suffices to compute the first
\[
m=e^{6/\eta}\log\left(\frac{Ne^{6/\eta}}{\alpha}\right)
\]
coefficients of $g$ to obtain a number $\xi$ satisfying $|\log g(1)-\xi| \le \eps$. 
\end{corollary}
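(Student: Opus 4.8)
The plan is to convert the ``fat'' zero-free region $\cN([0,1],\eta)$ into an honest zero-free disc of radius greater than $1$ by precomposing $g$ with the polynomial from Lemma~\ref{lem:disctostrip}, and then to feed the composition into Corollary~\ref{cor:tayloreps}.

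Concretely, I would fix a parameter $\rho$ of order $\eta$: small enough that $2\sqrt2\,\rho<\eta$, so that the rectangle $R=\{z:-\rho\le\Re z\le1+2\rho,\ |\Im z|\le2\rho\}$ from Lemma~\ref{lem:disctostrip} is contained in $\cN([0,1],\eta)$ (every point of $R$ lies at Euclidean distance $<\eta$ from the segment $[0,1]$), but with $1/\rho$ taken essentially as large as $6/\eta$. Let $p$ be the corresponding polynomial, of degree $N_\rho=\lfloor(1+1/\rho)e^{1+1/\rho}\rfloor$, with $p(0)=0$, $p(1)=1$, and mapping $\{|z|\le\beta_\rho\}$ into $R$, where $\beta_\rho>1$. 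Put $h(z):=g(p(z))$. Then $h$ is a polynomial of degree at most $N N_\rho$; since $p(\{|z|\le\beta_\rho\})\subseteq R\subseteq\cN([0,1],\eta)$ and $g$ is nonzero there, $h(z)\ne0$ for all $|z|\le\beta_\rho$; and $h(0)=g(0)$, $h(1)=g(1)$, so $\log h=(\log g)\circ p$ is a branch of $\log h$ on $\{|z|<\beta_\rho\}$ with $\log h(1)=\log g(1)$.

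Next I would apply Corollary~\ref{cor:tayloreps} to $h$ with $\beta=\beta_\rho$ and degree bound $N N_\rho$. Since $\beta_\rho\to1$ as $\rho\to0$, the constant $c$ (with $\beta_\rho\le1+c$) may be taken fixed and small, so $c'=O(1)$; with $1/\rho$ close to $6/\eta$ one has $N_\rho\ge e^{6/\eta}$, so the hypothesis $0<\alpha<Ne^{6/\eta-1}$ indeed gives $\alpha<NN_\rho/e$; and, using $\beta_\rho-1=(1-e^{-1})/(e^{1/\rho}-1)$ together with $\log N_\rho=O(1/\rho)$, the number of terms required by Corollary~\ref{cor:tayloreps}, namely $\tfrac{c'}{\beta_\rho-1}\log(NN_\rho/\alpha)$, is bounded by $e^{6/\eta}\log(Ne^{6/\eta}/\alpha)$. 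Hence $m=e^{6/\eta}\log(Ne^{6/\eta}/\alpha)$ is admissible, and $\xi:=T_m(\log h)(1)$ satisfies $|\log g(1)-\xi|=|\log h(1)-\xi|\le\alpha$. Finally, because $p(0)=0$, the coefficient of $z^{\ell}$ in $h=g\circ p$ depends only on the first $\ell+1$ coefficients of $g$ and on the fixed polynomial $p$; so the first $m$ coefficients of $h$, and therefore $\xi=T_m(\log h)(1)$, are computable from just the first $m$ coefficients of $g$, via the standard recursion relating a polynomial to its logarithm.

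The delicate point is not the structure above but the constant bookkeeping in the last bound: the prefactor $1/(\beta_\rho-1)$ has size $\asymp e^{1/\rho}$ but carries a constant larger than $1$, and the degree blow-up $N\mapsto NN_\rho$ contributes an extra $O(1/\rho)$ inside the logarithm, so one must be careful about exactly how close $1/\rho$ is taken to $6/\eta$ (and about the admissible range of $\alpha$) in order to land on the clean stated bound. This optimization is exactly the content of~\cite[Theorem~1.6]{Barvinok18}.
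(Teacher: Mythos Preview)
Your proposal is correct and follows essentially the same approach as the paper: precompose $g$ with the polynomial $p$ from Lemma~\ref{lem:disctostrip} to convert the strip-shaped zero-free region into a disc, then apply Corollary~\ref{cor:tayloreps} to $g\circ p$ and recover $T_m(\log(g\circ p))$ from the first $m$ coefficients of $g$. The only notable difference is that the paper commits at the outset to the explicit choice $\rho=\eta/\sqrt{8}$, which immediately yields the clean bounds $N_\rho\le e^{6/\eta}$ and $\beta_\rho-1\ge e^{-4/\eta}$ and so dispatches exactly the constant bookkeeping you flagged as the delicate point.
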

\begin{proof}
Let $\rho=\eta/\sqrt 8$ and $\beta_\rho,N_\rho$ be given by Lemma~\ref{lem:disctostrip}. 
Then since $\eta<1$, we note that
\[
N_\rho\le e^{6/\eta}
\quad\text{and}\quad
\beta_\rho \ge 1 + \frac{1}{2e^{\frac{1}{\rho}}} \ge 1+ e^{-4/\eta}\,.
\]
Now the polynomial $p$ as in Lemma~\ref{lem:disctostrip} maps $\{z\in\CC : |z|\le \beta_\rho\}$ into $\cN([0,1], \eta)$, so the polynomial $g\circ p(z)$ is a degree $NN_\rho \le Ne^{6/\eta}$
polynomial which is nonzero for all $z\in\CC$ such that $|z|\le 1+ e^{-4/\eta}$.

We now apply Corollary~\ref{cor:tayloreps} to $g\circ p$. 
With $f(z) = \log(g\circ p(z))$ we have $f(1) = \log g(1)$ since $p(1)=1$, and so the Taylor polynomial $T_m(1)$ (as defined in Definition~\ref{def:taylor}) for this $f$ is the quantity we want for $\xi$.
More precisely, as described in~\cite[Section 2.2.2]{Barvinok16}, to compute $T_m(\log g\circ p)$ at $z=0$ it suffices to compute $T_m(g\circ p)$. 
In turn, to compute $T_m(g\circ p)$ it suffices to compute $T_m(g)$ and the truncation $p_m$ of $p$ obtained by deleting all monomials of degree higher than $m$, and then the composition of polynomials $T_m(g) \circ p_m$.
The final step is to truncate $T_m(g) \circ p_m$ to be degree $m$, and to obtain $\xi$ by evaluating this polynomial at $z=1$.
This can be done in time $O(m)$, and by Corollary~\ref{cor:tayloreps} applied to $g\circ p$, when we have
\[
m \ge e^{6/\eta}\log\left(\frac{Ne^{6/\eta}}{\alpha}\right)\,,
\]
and $\alpha < Ne^{6/\eta-1}$, we have the desired $|\log g(1) - \xi| \le\alpha$.
\end{proof}

We know now how many coefficients are needed for an approximation to a polynomial. 
For the complexity of computing these coefficients we refer to Patel and Regts~\cite{PR17,PR19}.
Our partition function $Z(G;w)$ is an edge-coloured BIGCP in the sense of Patel and Regts' definition in~\cite{PR19}.
$Z(G;w)$ has degree at most $\Delta n/2$ with BIGCP parameters $\alpha=2$ and $\beta_i=O(k^i)$ according to~\cite[Section 6]{PR17}.
Then by~\cite[Theorem 2.1]{PR19} there is a deterministic algorithm to compute the first $m$ coefficients of $Z$ in time
\[
\tilde O\big(n(4e\Delta\sqrt k)^{2m}\big)\,,
\]
where $\tilde O$ means that we omit factors polynomial in $m$.

To prove Theorem~\ref{thm:alg,interp} we want an approximation for $\log Z(G;w)$ with error $\alpha=C n$, and we have a zero-free region surrounding $[1,w^*]$ with $w^*=1+(\log k-1)\Delta$ at distance $\eta=\omega(1/(\Delta\log k))$. 
Then we can transform $Z$ into a polynomial with zero-free region around $[0,1]$ of distance $\eta/(w^*-1) = \omega(1/(\log k)^2)$, so we need
\[
C  \le \frac{2}{e\Delta}e^{O(\log^2 k)} \,,
\]
and
\[
m = e^{O(\log^2 k)}\log\left(\frac{\Delta}{2C}e^{O(\log^2 k)}\right)\,,
\]
according to Corollary~\ref{cor:stripaccuracy}.

Then we have a running time of
\[
\tilde O\big(n(4e\Delta\sqrt k)^{2m}\big)
= n\exp\left(e^{O(\log^2 k)} \log\left(\frac{\Delta}{C}e^{O(\log^2 k)}\right)\log(\Delta\sqrt k)\right)\,.
\]

\bibliographystyle{habbrv}
\bibliography{FerroPotts-UGC}

\end{document}